\documentclass[11pt]{article}

\usepackage[margin=1.25in]{geometry}
\usepackage[T1]{fontenc}
\usepackage[utf8]{inputenc}
\usepackage[english]{babel}
\usepackage{lmodern}
\usepackage{microtype}
\usepackage{setspace}
\usepackage{indentfirst}
\usepackage{abstract}
\usepackage[bottom]{footmisc}

\usepackage{amsmath,amssymb,amsfonts,amsthm,mathtools,bm,bbm}
\usepackage{dsfont}
\usepackage{mathrsfs}
\usepackage{IEEEtrantools}

\usepackage{graphicx}
\usepackage{booktabs}
\usepackage{array}
\usepackage{multirow}
\usepackage{tabularx}
\usepackage{longtable}
\usepackage{threeparttable}
\usepackage{caption}
\usepackage{subcaption}
\usepackage{float}
\usepackage{enumerate}
\usepackage{adjustbox}
\usepackage{rotating}
\usepackage{pdflscape}
\usepackage{lscape}
\usepackage{tablefootnote}
\usepackage{verbatim}
\usepackage{comment}

\usepackage{tikz}
\usetikzlibrary{calc}

\usepackage{xcolor}
\usepackage{url}
\usepackage[colorlinks=true,linkcolor=blue,citecolor=blue,urlcolor=blue]{hyperref}
\usepackage[nameinlink,capitalize,noabbrev]{cleveref}
\usepackage[round,colon,authoryear]{natbib}
\usepackage[normalem]{ulem}

\definecolor{burno}{rgb}{0.8, 0.33, 0.0}
\definecolor{jade}{rgb}{0.0, 0.66, 0.42}
\definecolor{gainsboro}{rgb}{0.78, 0.78, 0.78}
\definecolor{tealblue}{rgb}{0.25, 0.4, 0.96}
\definecolor{lava}{rgb}{0.81, 0.06, 0.13}

\makeatletter
\renewcommand{\maketitle}{%
  \begin{center}%
    \vspace*{0.35in}%
    {\LARGE \@title \par}%
    \vspace{1.05em}%
    {\large \@author \par}%
    \vspace{0.75em}%
    {\large \@date \par}%
  \end{center}%
  \vspace{0.65em}%
  \@thanks
}
\makeatother

\usepackage{titlesec}
\titleformat{\section}{\normalfont\Large\bfseries}{\thesection}{1em}{}
\titleformat{\subsection}{\normalfont\large\bfseries}{\thesubsection}{1em}{}
\titleformat{\subsubsection}{\normalfont\normalsize\bfseries}{\thesubsubsection}{1em}{}
\titlespacing*{\section}{0pt}{2.0ex plus 0.6ex minus 0.2ex}{1.1ex plus 0.2ex}
\titlespacing*{\subsection}{0pt}{1.7ex plus 0.5ex minus 0.2ex}{0.9ex plus 0.2ex}
\titlespacing*{\subsubsection}{0pt}{1.4ex plus 0.4ex minus 0.2ex}{0.7ex plus 0.2ex}

\theoremstyle{plain}
\newtheorem{theo}{Theorem}
\newtheorem{prop}{Proposition}
\newtheorem{lemma}{Lemma}

\theoremstyle{definition}
\newtheorem{assum}{Assumption}

\theoremstyle{remark}

\newcommand{\appendixnumbering}{%
  \numberwithin{equation}{section}%
  \numberwithin{figure}{section}%
  \numberwithin{table}{section}%
  \numberwithin{theo}{section}%
  \numberwithin{prop}{section}%
  \numberwithin{lemma}{section}%
  \numberwithin{corol}{section}%
  \numberwithin{assum}{section}%
  \numberwithin{defi}{section}%
  \renewcommand{\thetheo}{\Alph{section}.\arabic{theo}}%
  \renewcommand{\theprop}{\Alph{section}.\arabic{prop}}%
  \renewcommand{\thelemma}{\Alph{section}.\arabic{lemma}}%
  \renewcommand{\thecorol}{\Alph{section}.\arabic{corol}}%
  \renewcommand{\theassum}{\Alph{section}.\arabic{assum}}%
  \renewcommand{\thedefi}{\Alph{section}.\arabic{defi}}%
}

\DeclareUrlCommand\ULurl{%
  \renewcommand\UrlLeft{\bgroup}%
  \renewcommand\UrlRight{\egroup}}

\usepackage{xparse}
\let\OriginalIncludeGraphics\includegraphics
\RenewDocumentCommand{\includegraphics}{O{} m}{%
  \IfFileExists{#2}{\OriginalIncludeGraphics[#1]{#2}}{%
    \fbox{\begin{minipage}[c][0.28\textheight][c]{0.86\linewidth}\centering
    Missing figure: \texttt{\detokenize{#2}}\\
    Place this file in the same folder as \texttt{main.tex}.
    \end{minipage}}%
  }%
}

\title{\scshape Causality versus Serial Correlation:\\[0.35em] an Asymmetric Portmanteau Test}

\author{Amedeo \textsc{Andriollo}\thanks{\href{mailto:amedeo.andriollo@dauphine.psl.eu}{amedeo.andriollo@dauphine.psl.eu}. Universit\'e Paris-Dauphine, Universit\'e PSL, CNRS, DRM, Finance, 75016 Paris, France. I am sincerely indebted to my advisor Eric Renault. I am especially thankful to Bertille Antoine, Ao Wang, Giovanni Ricco, Luis Candelaria, Kenichi Nagasawa and Leonardo Melosi. I am grateful for comments from Anna Mikusheva, Carlos Velasco, Mikkel Plagborg-M\o ller, Majid Al Sadoon, Evgenia Passari, Michele Piffer, Giulio Rossetti, Emanuele Savini, and seminar and conference participants at several venues, including the University of Luxembourg, CREST, ESWC 2025, ICEEE 2025, the Virtual Workshop for Junior Researchers in Time Series, and the Virtual Time Series Seminar (VTTS). I would also like to thank SIdE, the Bank of Italy, and the committee of the Carlo Giannini Prize.}}

\date{\today}

\begin{document}

\maketitle

\begin{abstract}
\noindent This paper studies specification testing in dynamic linear models in the presence of omitted variables. The null hypothesis of interest is weak exogeneity: shocks have zero conditional expectation given their own past and the past of omitted variables. Existing tests based on quadratic forms of serial cross-correlations suffer from size distortions because their variance incorporates symmetric dependence in both directions, including causality from past shocks to present omitted variables (inverse causality). This paper proposes an asymmetric Portmanteau test that isolates violations of weak exogeneity from inverse causality, is asymptotically normal under the null, and does not require a parametric specification of the joint dynamics. An empirical application examines the Economic Policy Uncertainty shock series and rejects its weak exogeneity. Addressing this failure by controlling for omitted variables changes the estimated inflation response from negative to positive, suggesting a supply-side shock interpretation.
\vspace{0.05in}\\
\noindent\textbf{Keywords:} exogeneity; omitted variables; cross-correlation; inverse causality.
\end{abstract}

\thispagestyle{empty}
\clearpage
\setcounter{page}{1}

\section{Introduction}\label{sec-1}

Estimating dynamic causal effects through structural models, such as Structural Vector Autoregressions (SVARs), is a central tool in applied macroeconometrics. Following \citet{sims1980macroeconomics}, identification assumptions allow the innovations of a multivariate time series model to be interpreted as linear functions of the underlying structural shocks.
Economists routinely exploit this link by estimating the residuals of a structural model, thus investigating the shocks' propagation through impulse response analysis \citep{kilian2017structural}. Related arguments extend to univariate approaches, such as local projections \citep{dufour1998short, plagborg2021local, jorda2023local}, combined with external instruments.

The validity of such analyses hinges critically on the correct specification of the structural dynamics, both for variables included in and external to the model. Structural shocks must be ``internally'' exogenous to the other current and lagged endogenous variables in the model \citep{ramey2016macroeconomic}, and this property cannot be undermined by variables omitted from the model. If the latter do interfere, the shocks fail to be ``externally'' exogenous and no longer represent truly \textit{unanticipated} movements in the macroeconomic system.
As a consequence, the history of the observed internal variables is insufficient to recover the shocks, violating invertibility or fundamentalness \citep{lippi1994var, nakamura2018identification}.

Existing approaches to assessing this issue have cast it either as a problem of Granger causality testing \citep{giannone2006does, forni2014sufficient, plagborg2022instrumental, miranda2023identification} or as tests of conditional mean independence \citep{chen2017testing}. From an econometric standpoint, both approaches can be viewed as addressing dynamic specification testing in the presence of omitted variables. This paper brings these perspectives together by formulating the problem in terms of weak exogeneity, understood here as the property of the structural shocks having zero conditional expectation given the past of both internal and external variables \citep{mikusheva2025linear}.

The literature on specification testing in the presence of omitted variables can be organized according to whether the practitioner explicitly models the joint dynamics of internal and external variables. When the joint system is estimated, classical specification tests are available (\citealp[e.g., Wald-type procedures, see][]{hong2005generalized, escanciano2006generalized}). Inference in this case depends on how accurately the interaction between internal and external variables is specified and estimated. Parametric joint modeling is therefore sensitive to misspecification, while strategies based on semi-parametric or nonparametric estimation face the well-known curse of dimensionality and finite-sample problems. When the joint dynamics are left unspecified, practitioners typically rely on nonparametric tests 
(\citealp[e.g., Portmanteau tests, see][]{hong1996testing, lobato2002testing}).
Despite not requiring augmentation, these tests are designed to detect symmetric forms of dependence. However, weak exogeneity imposes a directional restriction. In our context, this implies that both classes of approaches may produce rejections when there is dependence from past structural shocks to current omitted variables. Since it reflects the causal direction opposite to the one being tested, this paper refers to this dependence as \textit{inverse causality}: broadly speaking from past structural shocks to present omitted variables rather than from past omitted variables to present shocks.
This channel becomes particularly salient in applied macroeconometrics because structural shocks, which capture primitive fluctuations of the macroeconomic system, are expected to influence (external) macro variables over time. As a result, inverse causality can trigger rejections that researchers may mistakenly interpret as evidence against shocks' exogeneity.

This paper addresses these limitations by proposing an asymmetric Portmanteau test that isolates violations of weak exogeneity from inverse causality. The benchmark Portmanteau statistic tests the null of zero cross-correlation between current estimated shocks and lagged omitted variables by aggregating squared sample cross-correlations.\footnote{For univariate processes, \cite{hong1996testing} defines the Portmanteau statistic as the weighted sum of squared cross-correlation between univariate time series at positive and negative lags, with weights determined by a kernel function. Following \cite{hong2001test} and \cite{bouhaddioui2006generalized}, this paper regards as representative of the class of tests based on the serial cross-correlation function its one-sided multivariate formulation. Specifically, the benchmark is the weighted sum of the $\ell_2$ norm of the cross-correlation between the two multivariate processes at positive lags.} A key difficulty is that the quadratic norm introduces a \textit{symmetry}: the variance of the resulting statistic depends not only on dependence from lagged omitted variables to current shocks, which is relevant under the null of weak exogeneity, but also on dependence from lagged shocks to current omitted variables, that is inverse causality. The newly proposed statistic subtracts out the contribution of this latter channel, leading to a modified statistic that targets only the dependence implied by violations of weak exogeneity. By construction, the procedure avoids parametric modeling of the joint dynamics and remains robust to misspecification of how past shocks affect the present of omitted variables, which is particularly useful when prior knowledge of the interaction between omitted variables and the dynamic system is limited. 

The asymptotic distribution of the proposed test statistic is studied under the null that estimated shocks are weakly exogenous. Establishing asymptotic normality requires additional assumptions on the shocks' second and fourth conditional moments that, however, are testable. The second moment condition serves to isolate the contribution of squares to the mean vs. variance of the statistic. In technical terms, this restriction ensures correct centering and scaling of the statistic under the null. The fourth moment condition is mainly used to establish asymptotic normality.\footnote{When relaxing the assumption of independence, \cite{hong2001test}'s footnote 8 briefly discussed the condition of conditional homokurtosis for establishing the asymptotic normality of his testing procedure.} The main asymptotic result is derived for observed processes.\footnote{Extension to settings where shocks and omitted variables are estimated can be found in the Online Appendix.} Under the fixed alternatives of nonzero cross-correlation, this paper proves that the asymmetric Portmanteau achieves equivalent asymptotic power to the benchmark. 
Since the statistic has inherently limited power against alternatives involving nonlinear non-pairwise dependence, a discussion on possible generalizations of the statistic is provided. Finite-sample properties of the statistics are examined through Monte Carlo simulations, with a summary of the main findings provided in the appendix and the full results reported in the online appendix.

As an empirical application, the paper examines the exogeneity of the
\cite{baker2016measuring}'s Economic Policy Uncertainty (EPU) shocks.
The analysis controls for economic conditions using the \cite{mccracken2016fred}'s macroeconomic factors. EPU shocks fail the exogeneity test with respect to lagged macro factors. Building on this, the paper revisits \cite{diercks2024rains} and strengthens their conclusions: when these additional controls are included, the response of inflation to EPU shocks shifts from modestly negative to markedly positive. Combined with contractionary responses in other variables, this evidence suggests that the EPU structural shock operates as a supply-side negative shock, similar to the `expectational' shocks discussed in \cite{ascari2023endogenous}.

\textsc{Literature.} This paper contributes to three strands of literature. First, it relates to specification testing in dynamic linear models. Early contributions include \cite{ljung1978measure}, \cite{hosking1980multivariate} and \cite{li1981distribution}. Rather than modeling the joint process, \cite{haugh1976checking} developed a two-step procedure to test for independence between time series: first fitting univariate models, then examining cross-correlations at different lags. \cite{hong1996consistent, hong1996testing} generalized these tests to all lags using kernel-weighted schemes, with \cite{bouhaddioui2006generalized} extending the framework to multivariate processes. This paper contributes to \cite{haugh1976checking}'s approach by introducing a new version of the Portmanteau statistic that isolates a specific direction of causality. Second, the paper relates to tests of the martingale difference hypothesis used for specification testing, which indeed require modeling the conditional mean of the joint process \citep{durlauf1991spectral, hong2005generalized, escanciano2006generalized}. This class of tests can be viewed as extending \cite{ljung1978measure}'s rather than \cite{haugh1976checking}'s. The newly proposed method improves upon these by avoiding the need to model joint conditional means and variances. Third, this paper contributes to the literature on testing invertibility of structural shocks, by providing a new testing strategy.

\textsc{Outline.} Section \ref{sec2_l2norm} introduces the benchmark Portmanteau statistic and shows how inverse causality is incorporated into it. The definition of the asymmetric Portmanteau statistic is in Section \ref{chp1_sec2_subsec_corrected}. Section \ref{sec3_asymptotics} develops the asymptotic theory for the proposed test statistic. Section \ref{empirics} presents the empirical application. Section \ref{conclusion} concludes. Appendixes \ref{appendixA}-\ref{appendixC} contain all proofs and the summary of the simulation evidence.

\section{The Testing Strategies Based on $\ell_2$-norm}\label{sec2_l2norm}
 
Section \ref{chp1_sec2_subsec_preliminaries} establishes the framework and discusses the class of Portmanteau statistics. Section \ref{chp1_sec2_subsec_corrected} defines the asymmetric Portmanteau statistic. Proof of the proposition appears in Appendix \ref{appendixA}.

\subsection{Preliminaries and a Discussion on Portmanteau Statistics}\label{chp1_sec2_subsec_preliminaries}

Let $\{X_{t}, Z_{t}; t=1,..,T\}$ denote two zero-mean multivariate square-integrable jointly stationary processes of respective finite dimensions $d_1,d_2\in\mathbb{N_+}$. Let $\mathcal{I}(t-1)$ be the information set available at period $t-1$ comprising the joint past, $\{X_{s},Z_{s}; s< t\}$. Unless stated otherwise, these processes are standardized.\footnote{$\text{Var}[X_t]=I_{d_1}$ and $\text{Var}[Z_t]=I_{d_2}$. It is relaxed in the Online Appendix.}

Throughout, $X$ represents the structural shock series and $Z$ represents the omitted variables. The null hypothesis of interest is weak exogeneity, defined as the shocks having zero conditional expectation given their own past and the past of omitted variables:
\begin{align}\label{h0grangernoncausality}
    \mathcal{H}_0: \mathbb{E}[X_t| \mathcal{I}(t-1)]=0
\end{align}
Before introducing the modified test statistic in Eq.(\ref{corrected_statistic}) (Section \ref{chp1_sec2_subsec_corrected}),  we discuss testing strategies based on squared serial cross-correlations following \cite{hong1996testing}. The benchmark is the one-sided statistic based on weighted quadratic forms:
\begin{align}
\label{statistic1_hong}
\mathcal{T}_\omega & = \sum_{j=1}^{T-1} \omega(j)Q(j) \\
Q(j) & = ||\widehat{\Gamma}_{XZ}(j)||_F^2 =\text{tr}\left[ \widehat{\Gamma}_{XZ}(j)^\prime \widehat{\Gamma}_{XZ}(j)\right]=\left\vert\left\vert\text{vec}\left[\widehat{\Gamma}_{XZ}(j)\right] \right\vert\right\vert^2  \label{quadraticform}
\end{align}  
for some nonrandom non-negative weights $\{\omega(j)\}$, where $\widehat{\Gamma}_{XZ}(j)$ is the sample cross-correlation between the processes:\footnote{For a clearer exposition of the asymptotic theory, we do not consider the finite-sample corrected cross-correlation functions (i.e., scaled by $T-j$ rather than $T$), as the conclusions remain valid \citep{hong2001test, hong2005generalized}. For the finite-sample statistics, refer to Appendix \ref{appendixC}.}
\begin{align*}
\widehat{\Gamma}_{XZ}(j)=\frac{1}{T}\sum_{t=j+1}^{T}X_t Z_{t-j}^\prime, \; \; \; 
\Gamma_{XZ}(j)=\mathbb{E}[X_t Z_{t-j}^\prime], \; \; \; \; \;  j=1,...,T-1 
\end{align*}
The statistic is one-sided ($j>0$) because $\mathcal{H}_0$ concerns a particular direction of causality: the influence of the past of $Z$, $\{Z_s; s< t\}$, on the present $X$, $\{X_t\}$. The quadratic forms, $\{Q(j)\}$, correspond to the squared $\ell_2$-norms of vectorized sample cross-correlation matrices, equivalently their squared Frobenius norms.\footnote{This generalization from univariate to multivariate analysis dates to \cite{li1981distribution}. See \cite{bouhaddioui2006generalized} for further discussion. For the equivalence between Euclidean norm and trace, see chapter 4 of \cite{lutkepohl1997handbook}.} For its connection to kernel estimation of the spectrum, see \cite{hong1996consistent}. 

By construction, the quadratic forms treat causality directions symmetrically. This becomes apparent when decomposing the test statistic: the inner product of cross-correlation matrices generates cross-product terms where $X$ and $Z$ enter symmetrically across different time lags. Formally, by means of some algebra: 
\begin{align}
&\mathcal{T}_\omega = \mathcal{T}_{1\omega} + \mathcal{T}_{2\omega}  \label{decompositionsintosums} \\
&\mathcal{T}_{1\omega} =\frac{1}{T^2} \sum_{j=1}^{T-1} \omega(j) \sum_{t=j+1}^T || X_t ||^2|| Z_{t-j} ||^2 \nonumber\\
& \mathcal{T}_{2\omega} =\frac{1}{T^2} \sum_{j=1}^{T-2} \omega(j) \sum_{s,t=j+1, s\not=t}^T <X_t,X_s><Z_{t-j},Z_{s-j}> \nonumber 
\end{align}
The Portmanteau statistic, $\mathcal{T}_\omega$, consists of two components: the ``sum of squares'', $\mathcal{T}_{1\omega}$, and the ``sum of cross-products'', $\mathcal{T}_{2\omega}$. While the sum of squares preserves temporal ordering with respect to the tested causal direction (from past omitted variables to present shocks), the sum of cross-products incorporates the interaction between two time indexes, $s$ and $t$, thus blending both directions of causality. This symmetry due to the norm suggests that when testing the null, Portmanteau statistics may not effectively distinguish between violations of weak exogeneity and dependence from past shocks to present omitted variables (inverse causality).

The distinction between components matters for understanding asymptotic properties of the testing strategies based on \cite{hong1996testing} and subsequent work. Intuitively, the sum of cross-products, $\mathcal{T}_{2\omega}$, dominates under the null and thus controls test size, whereas the sum of squares, $\mathcal{T}_{1\omega}$, dominates under the alternatives, and so regulates power.\footnote{
In Proposition \ref{prop_moments}, the asymptotic properties of the test under the null are described in detail. For the power of the test, refer to Theorem \ref{theorem_power} and the discussion that follows.}

In a stylized setting, the following proposition clarifies how cross-product terms incorporate both directions of dependence. We impose two simplifying assumptions: i) marginal independence of the processes, i.e., $X_t \perp\!\!\!\perp X_k, Z_t \perp\!\!\!\perp Z_k, t\not=k $; ii) conditional homoskedasticity of the shocks, $\mathbb{E}[||X_t||^2|\mathcal{I}(t-1)]=\mathbb{E}[||X_t||^2]$.

\begin{prop} \label{prop_toyexample1}
Let $\{X_t, Z_t\}$ be marginally i.i.d. processes with finite fourth moments, such that the process $\{X_t\}$ is homoskedastic  conditionally on the joint past, $\mathcal{I}(t-1)$. Under the null hypothesis in Eq.(\ref{h0grangernoncausality}), the variance of the benchmark statistic, $\mathcal{T}_{\omega}$, depends on the inverse causality through the variance of the sum of cross-products, $\mathcal{T}_{2\omega}$. In particular, let us consider the time indexes such that $t>s$, then we write:
\begin{align*}
\mathbb{E}[(<X_t,X_s><Z_{t-j},Z_{s-j}>)^2]=\begin{cases}
 d_1 d_2, \; \; \; \; \; \; \; \; \; \;  \; \; \; \; \; \; \; \; \; \;  \; \; \; \; \; \; \; \; \; \;  \; \; \; s> t-j \\
 \mathbb{E}[ ||X_s ||^2<Z_{t-j},Z_{s-j}>^2], \; \; s\leq t-j
 \end{cases}
\end{align*}
Under mutual independence of the two processes, $X_t \perp\!\!\!\perp Z_s, \forall s,t $ , we have:
\begin{align*}
 &\mathbb{E}[(<X_t,X_s><Z_{t-j},Z_{s-j}>)^2]=d_1 d_2.
\end{align*}
\end{prop}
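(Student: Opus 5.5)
The plan is to compute $\mathbb{E}[(\langle X_t,X_s\rangle\langle Z_{t-j},Z_{s-j}\rangle)^2]$ by iterated expectations, conditioning on $\mathcal{I}(t-1)$. The key observation is that for $t>s$ the variables $X_s$, $Z_{s-j}$ and $Z_{t-j}$ are all dated strictly before $t$, so they belong to $\mathcal{I}(t-1)$. Writing $\langle X_t,X_s\rangle^2 = X_s' X_t X_t' X_s$, we get
\begin{align*}
\mathbb{E}\left[(\langle X_t,X_s\rangle\langle Z_{t-j},Z_{s-j}\rangle)^2\right]
=\mathbb{E}\left[X_s'\,\mathbb{E}[X_tX_t'\mid\mathcal{I}(t-1)]\,X_s\,\langle Z_{t-j},Z_{s-j}\rangle^2\right].
\end{align*}
The first step is therefore to show that $\mathbb{E}[X_tX_t'\mid\mathcal{I}(t-1)]=I_{d_1}$. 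I will read the conditional homoskedasticity assumption as a statement about the full conditional second-moment matrix, combined with the standardization $\mathrm{Var}[X_t]=I_{d_1}$. If it is only taken literally as $\mathbb{E}[\|X_t\|^2\mid\mathcal{I}(t-1)]=d_1$, the reduction needs this slightly stronger reading, or the additional assumption that the components behave exchangeably. This is the step I expect to be the main obstacle, and I would state the matrix version explicitly. Under the null, the conditional mean is zero, so $\mathbb{E}[X_tX_t'\mid\mathcal{I}(t-1)]$ is the conditional variance, and it equals the unconditional one. The expression then collapses to $\mathbb{E}[\|X_s\|^2\langle Z_{t-j},Z_{s-j}\rangle^2]$ for every $s<t$.

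Next I split into the two cases of the statement.

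\textbf{Case $s\le t-j$.} Here $Z_{t-j}$ is dated at or after $s$, so it may depend on $X_s$ through inverse causality. No further factorization is available, and the formula $\mathbb{E}[\|X_s\|^2\langle Z_{t-j},Z_{s-j}\rangle^2]$ is the final answer. This is exactly where the channel from past shocks to present omitted variables enters the variance.

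\textbf{Case $s>t-j$.} Here $Z_{t-j}$ and $Z_{s-j}$ are both dated strictly before $s$, so they belong to $\mathcal{I}(s-1)$. I condition on $\mathcal{I}(s-1)$ and apply the same homoskedasticity step at time $s$, using $\mathbb{E}[\|X_s\|^2\mid\mathcal{I}(s-1)]=d_1$. This gives $d_1\,\mathbb{E}[\langle Z_{t-j},Z_{s-j}\rangle^2]$. Since $t-j\neq s-j$, marginal independence of $\{Z_t\}$ gives
\begin{align*}
\mathbb{E}[\langle Z_{t-j},Z_{s-j}\rangle^2]=\mathrm{tr}\left(\mathbb{E}[Z_{t-j}Z_{t-j}']\,\mathbb{E}[Z_{s-j}Z_{s-j}']\right)=\mathrm{tr}(I_{d_2})=d_2,
\end{align*}
so the term equals $d_1d_2$.

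Under mutual independence of the two processes, the case $s\le t-j$ factorizes as well. We have $\mathbb{E}[\|X_s\|^2]\,\mathbb{E}[\langle Z_{t-j},Z_{s-j}\rangle^2]=d_1d_2$, using $\mathrm{Var}[X_s]=I_{d_1}$ and the same trace computation.

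The claim that $\mathrm{Var}(\mathcal{T}_{2\omega})$ carries the inverse-causality dependence then follows by expanding the variance of the double sum. Under the null, off-diagonal pairs of summands have zero mean by the same conditioning argument applied to the latest $X$ index. The variance is therefore a weighted sum of the squared terms computed above, and only the $s\le t-j$ terms involve the joint law of $X_s$ and future values of $Z$. Finite fourth moments guarantee that all of these expectations exist.
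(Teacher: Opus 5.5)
Your derivation of the displayed identities is correct and follows the paper's own argument. You condition on $\mathcal{I}(t-1)$ to reduce to $\mathbb{E}[\|X_s\|^2\langle Z_{t-j},Z_{s-j}\rangle^2]$, condition again on $\mathcal{I}(s-1)$ when $s>t-j$, and factorize under mutual independence. Your remark that the first reduction needs the matrix form $\mathbb{E}[X_tX_t'\mid\mathcal{I}(t-1)]=I_{d_1}$ is well taken. For $d_1>1$ the trace condition stated before the proposition does not by itself give $X_s'\,\mathbb{E}[X_tX_t'\mid\mathcal{I}(t-1)]\,X_s=\|X_s\|^2$. The paper's proof glosses over this, but the proof of Proposition~\ref{prop_moments} uses exactly the matrix form.

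Your final paragraph, however, contains a false step. Under the null, two distinct summands of $\mathcal{T}_{2\omega}$ need not be uncorrelated.

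\begin{itemize}
\item Conditioning on the latest $X$ index kills the product only when that index is unique.
\item If two summands share $t$ but have $s\neq s'$, integrating out $X_tX_t'$ leaves $\mathbb{E}[\langle X_s,X_{s'}\rangle\langle Z_{t-j},Z_{s-j}\rangle\langle Z_{t-j'},Z_{s'-j'}\rangle]$.
\item If $j\le t-s$, then $Z_{t-j}$ is dated at or after $s$ and may load on $X_s$, so $X_s$ cannot be conditioned away.
\end{itemize}

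Here is a counterexample satisfying every hypothesis of the proposition. Take $d_1=d_2=1$, $X_t$ i.i.d.\ with unit variance and finite fourth moment, and $Z_u=X_{u-1}$. The summands indexed by $(t,s,j)=(10,7,3)$ and $(10,6,2)$ are the same random variable $X_{10}X_7X_6X_3$, so their covariance is $1$. Both satisfy $j\le|t-s|$, so both belong to the correction term $\mathcal{C}_\omega$.

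This is exactly the breakdown of the martingale structure that the asymmetric statistic is built to avoid. In the proof of Proposition~\ref{prop_moments}, the $r\neq s$ cross terms vanish only because $j>t-s$ puts $Z_{t-j}$ in $\mathcal{I}(s-1)$. Under mutual independence the cross terms do vanish, which is where your intuition comes from.

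So drop the claim that the variance is a weighted sum of the diagonal second moments, with inverse causality entering only through the $s\le t-j$ terms. The paper does not make this claim. It only shows that the second moments of the summands already depend on the joint law of past $X$ and present $Z$, which is all the qualitative part of the statement needs. Inverse causality in fact also enters through these cross terms.
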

Proposition \ref{prop_toyexample1} demonstrates that under the null hypothesis,
the variance of the statistic $\mathcal{T}_\omega$ incorporates, through the cross-products in $\mathcal{T}_{2\omega}$, the dependence from past $X$ to present $Z$ captured by cross-moments of the joint process (i.e., the inverse causality channel). This vanishes when either the processes are independent, $X_t \perp\!\!\!\perp Z_s, \forall t,s$ (strict exogeneity between shocks and omitted variables at all lags/leads), or when a specific time ordering holds, $s>t-j$ (with $t>s$). Two remarks follow. First, in the univariate case $(d_1=d_2=1)$, it is evident that inverse causality operates through the conditional variance of $Z$. Second, the presence of such dependencies in the variances arises even: a) under the restrictive assumption on the univariate processes, namely marginally i.i.d. time series, b) under past independence of shocks, $X_t \perp\!\!\!\perp Z_{s_1}, X_{s_2},$ with $s_1,s_2< t$.\footnote{Past independence would imply conditional homoskedasticity. Similar assumptions to past independence appear in \cite{hong2009granger}, and \cite{candelon2016nonparametric}. This condition, together with the assumption of marginally i.i.d., is weaker than statistical independence, $X_t \perp\!\!\!\perp Z_s, \forall s,t $.}

Lemma \ref{coroll_toyexample1} in the Appendix \ref{lemma_inversecausality} explicitly derives how dependencies from shocks to omitted variables affect the variance of the sum of cross-products, $\mathcal{T}_{2\omega}$, under a general class of DGPs where inverse causality is present and the null of interest holds true.

\subsection{An Asymmetric Portmanteau Statistic}\label{chp1_sec2_subsec_corrected}
Motivated by the previous proposition, we introduce a modified statistic that accounts for the following cross-products of unordered time indexes $(s,t)$:
\begin{align}
    \mathcal{T}_\omega^c & = \mathcal{T}_\omega - \frac{1}{T^2} \sum_{j=1}^{T-2} \omega(j) \sum_{s,t=j+1, s\not=t, j\leq |t-s|}^T \langle X_t,X_s\rangle \langle Z_{t-j},Z_{s-j}\rangle=\mathcal{T}_{\omega}-\mathcal{C}_{\omega} \nonumber\\
    &  = \mathcal{T}_{1\omega}+\frac{2}{T^2}\sum_{t=2}^{T}\sum_{s=1}^{t-1}\sum_{j=t-s+1}^{s-1}\omega(j)\langle X_t,X_s\rangle \langle Z_{t-j},Z_{s-j}\rangle=\mathcal{T}_{1\omega}+\mathcal{T}_{2\omega}^c \label{corrected_statistic}
\end{align}
The two formulations are associated with either the inverse causality channel, $\mathcal{C}_{\omega}$, or the corrected causality channel, $\mathcal{T}_{2\omega}^c$. Proposition \ref{prop_toyexample1} motivates removing cross-products whose time ordering prevents a martingale
structure, or equivalently retaining the one satisfying $j>|t-s|$ for the unordered indexes $(s,t)$. Since the correction term, $\mathcal{C}_{\omega}$, differences out the influence of inverse causality and so breaks the symmetry, the proposed statistic becomes an \textit{asymmetric} Portmanteau test. To further motivate the correction term, we offer two perspectives.

The first perspective interprets a subset of the cross-products as bias in the variance of sample covariance estimators, under weak exogeneity and conditional homoskedasticity (Proposition \ref{prop_toyexample1}). Along this heuristic, a jackknife solution would then employ a block-deletion scheme targeting observations associated with the temporal ordering $j\leq |t-s|$ (for a given $j$), precisely those observations linked to the inverse causality channel.

The second perspective provides deeper insight by viewing moments of cross-products as coefficients in predictive regressions. Consider the bivariate process $\{X_t,Z_t\}$ for time indexes $t>s$. After applying the correction, the first moment of remaining cross-products, $\mathcal{T}_{2\omega}^c$, is proportional to coefficients in regressions of the form: $X_tX_s = \sum_{j=t-s+1}^{s-1} \phi_{j}^{(1)}Z_{t-j}Z_{s-j}+e_t$ (for a fixed $s>0$), where $e_t$ is an error term, $Z_{t-j},Z_{s-j}\in\mathcal{I}(t-j)$ and $X_s\not\in\mathcal{I}(t-j)$.
Conversely, the first moment of cross-products constituting the correction term, $\mathcal{C}_{\omega}$, is proportional to coefficients in autoregressions: $X_tZ_{t-j} = \sum_{l=1}^s \varphi_{l}^{(1)}X_{l}Z_{l-j}+\epsilon_t$ (for a fixed $j>0$), where $\epsilon_t$ is an error term, $X_{l},Z_{l-j}\in\mathcal{I}(s)$ and $Z_{t-j}\not\in\mathcal{I}(s)$.
These coefficients assess distinct implications of the null. By law of iterated expectations, weak exogeneity implies: i) $\mathbb{E}[X_tX_s]=0, \; \forall s<t$; ii) $\mathbb{E}[X_tZ_{t-j}]=0, \; \forall j>0$, captured by the first and second regression sets, respectively. When looking at the second moment of $\mathcal{T}_{2\omega}^c$ and  $\mathcal{C}_\omega$, then those are proportional to coefficients of the following regressions: 
\begin{align*}
X_t^2X_s^2 &= \sum_{j=t-s+1}^{s-1}  \phi_{j}^{(2)}Z_{t-j}^2Z_{s-j}^2+e_t^{(2)}, \quad s>t-j, \quad \text{(for a fixed $s>0$)} \\
X_t^2Z_{t-j}^2 &= \sum_{l=1}^s \varphi_{l}^{(2)}X_{l}^{2}Z_{l-j}^{2}+\epsilon_t^{(2)}, \quad l\leq t-j, \quad \text{(for a fixed $j>0$)}
\end{align*}
Conditional homoskedasticity of $X$ has the following implication:\\ 
$ \phi_{j}^{(2)}\propto  \mathbb{E}[(X_t^2X_s^2)M_{Z} (Z_{t-j}^2Z_{s-j}^2)]/\text{Var}[Z_{t-j}^2Z_{s-j}^2]=d_1 \mathbb{E}[M_{Z}Z_{t-j}^2 Z_{s-j}^2]/\text{Var}[Z_{t-j}^2Z_{s-j}^2]$,
for an appropriate projection operator, $M_Z \in \mathcal{I}(s)$, following a Frisch–Waugh–Lovell argument. Coefficients $\{\phi_{j}^{(2)}\}$ from the first regression set depend solely on higher moments of the marginal process $\{Z_t\}$. In contrast, coefficients from the second set depend on higher-order moments of the joint process $\{X_t,Z_t\}$. Consistent with Proposition \ref{prop_toyexample1}, this distinction motivates the correction term.

\section{Asymptotic Theory}\label{sec3_asymptotics}

Section \ref{chp1_sec3_subsec_asymptotics} establishes the asymptotic properties of the proposed statistic under the null of interest. Section \ref{chp1_sec3_subsec_consistencyunder} derives asymptotic properties under a general class of alternatives and discusses settings where the test has limited power. Proofs of the Proposition and Theorems appear in Appendix \ref{appendixB}. A discussion on the finite-sample properties of the statistics can be found in Appendix \ref{appendixC}.
In the Online Appendix, the main result of Section \ref{chp1_sec3_subsec_asymptotics} is extended to estimated processes, and the full results of the simulations are reported.

\subsection{Asymptotics of the Statistic under the Null}\label{chp1_sec3_subsec_asymptotics}

This paper considers the following condition on the weighting scheme in Eq.(\ref{statistic1_hong}):

\begin{assum}\label{assumption_kernel1} Let the sequence of weights $\{\omega(j)\}$ be a function of some sequence of integers $M=M(T)$ for which there exists an appropriate square-integrable kernel $k(\cdot):\mathbb{R}\rightarrow [-1,1]$, continuous at 0 and at all points except for a finite number of points, such that: $\omega(j)=k^2(j/M)$, $k(0)=1$.
\end{assum}
This assumption is standard in nonparametric spectral density estimation via kernel functions \citep{hong2001test}. The sequence of integers $M$, growing with sample size $T$, characterizes the kernel estimation window: larger $M$ incorporates more lags when summing cross-correlations in the statistic.\footnote{See the discussion at the end of Section \ref{chp1_sec3_subsec_asymptotics} regarding the smoothing parameter.} Define:
\begin{equation} \label{kernel1}
\begin{aligned}
     \mu_{\omega,T}&=\mu[\{\omega(j)\},T]=d_1d_2\sum_{j=1}^{T-1}\left(1-\dfrac{j}{T}\right) \omega(j) \\
     D_{\omega,T}^{(Hete)} &=D^{(Hete)}[\{\omega(j)\},T] \\
     & \quad 
        =\frac{4d_1}{T^2}
        \sum_{j=1}^{T-2}\sum_{\ell=1}^{T-2}\omega(j)\omega(\ell)
        \sum_{s=\max\{j,\ell\}+1}^{T-1}
        \sum_{t=s+1}^{\min\{T,\ s+\min(j,\ell)-1\}}
        \gamma_{t,s}(j,\ell) \\
     D_{\omega,T}&=D[\{\omega(j)\},T]=2d_1d_2\sum_{j=1}^{T-2}\left(1-\dfrac{j}{T}\right)\left(1-\dfrac{j+1}{T}\right) \omega^2(j)
\end{aligned}
\end{equation}
with $\gamma_{t,s}(j,l)=\mathbb E\!\left[\langle Z_{t-j},Z_{s-j}\rangle \langle Z_{t-\ell},Z_{s-\ell}\rangle\right]$, assuming these last moments exist.
The first two quantities approximately match the mean and the variance of the proposed statistic under the null hypothesis, scaled by the sample size (Proposition \ref{prop_moments}). The last quantity, $D_{\omega,T}$, represents the asymptotic variance of the benchmark statistic scaled by the sample size, under mutual independence of $X$ and $Z$ \citep[pg.191-2]{hong2001test}. Under Assumption \ref{assumption_kernel1} and appropriate conditions on $\{\gamma_{t,s}(j,l)\}$, these quantities are of same order $M$: $\mu_{\omega,T}=O(M)$, $D_{\omega,T}=O(M)$, and $D_{\omega,T}^{(Hete)}=O(M)$. The following proposition characterizes the moments of the asymmetric Portmanteau statistic.
\begin{prop} \label{prop_moments}
Suppose Assumption \ref{assumption_kernel1} holds, with $\frac{M}{T}\rightarrow 0$, as $T,M\rightarrow \infty$.
\begin{enumerate}[i)] 
\item Suppose $Z$ has finite fourth moments and:\\ $|\text{Cov}[|| Z_{1}||^2,|| Z_{1+h}||^2]|=O(h^{-1-\epsilon})$ for $\epsilon>0$.\\
If: $\mathbb{E}[X_tX_t^\prime|\mathcal{I}(t-1)]= \mathbb{E}[X_tX_t^\prime]$, then: $\mathbb{E}[T \cdot \mathcal{T}_{1\omega}]=\mu_{\omega,T}$.\\
In addition, if: $\mathbb{E}[(X_tX_t^\prime)\otimes (X_tX_t^\prime) |\mathcal{I}(t-1)]= \mathbb{E}[(X_tX_t^\prime)\otimes (X_tX_t^\prime) ]$,\\ then: $\text{Var}[T\cdot \mathcal{T}_{1\omega}]=O(M/T)$, implying mean-squared convergence:\\  $\lim_{T\rightarrow\infty} (T \cdot\mathcal{T}_{1\omega}-\mu_\omega)\left(D_{\omega,T}\right)^{-1/2}=0.$
\item Under $\mathcal{H}_0$ stated in Eq.(\ref{h0grangernoncausality}), we have: $\mathbb{E}[T\cdot\mathcal{T}_{2\omega}^c]=0$.\\
If additionally: $\mathbb{E}[X_tX_t^\prime|\mathcal{I}(t-1)]= \mathbb{E}[X_tX_t^\prime]$, then: $\text{Var}[T\cdot\mathcal{T}_{2\omega}^c]= D_{\omega,T}^{(Hete)}$.
\end{enumerate}
\end{prop}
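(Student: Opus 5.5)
Part i) comes down to moment computations for the sum of squares $T\cdot\mathcal{T}_{1\omega}=\frac1T\sum_{j}\omega(j)\sum_{t=j+1}^T\|X_t\|^2\|Z_{t-j}\|^2$. For the mean, we condition on $\mathcal{I}(t-1)$. Since $Z_{t-j}\in\mathcal{I}(t-1)$ for $j\ge1$, conditional homoskedasticity gives
\[
\mathbb{E}[\|X_t\|^2\mid\mathcal{I}(t-1)]=\operatorname{tr}\mathbb{E}[X_tX_t']=d_1 .
\]
Hence each summand has expectation $d_1\,\mathbb{E}\|Z_{t-j}\|^2=d_1d_2$ by standardization. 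Counting the $T-j$ terms yields $\mathbb{E}[T\mathcal{T}_{1\omega}]=d_1d_2\sum_j(1-j/T)\omega(j)=\mu_{\omega,T}$.

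For the variance, we write $T\mathcal{T}_{1\omega}-\mu_{\omega,T}=\frac1T\sum_j\omega(j)\sum_t(\|X_t\|^2-d_1)\|Z_{t-j}\|^2+\frac{d_1}{T}\sum_j\omega(j)\sum_t(\|Z_{t-j}\|^2-d_2)$.

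The first piece is a sum of martingale differences once we reorder by $t$: set $u_t=\|X_t\|^2-d_1$ and $a_t=\sum_{j<t}\omega(j)\|Z_{t-j}\|^2\in\mathcal{I}(t-1)$. Then cross terms vanish. The second-moment terms equal $\mathbb{E}[u_t^2a_t^2]$, and the conditional fourth-moment (Kronecker) condition makes $\mathbb{E}[u_t^2\mid\mathcal{I}(t-1)]$ a constant $c$, since $\|X_t\|^4$ is a linear functional of $X_tX_t'\otimes X_tX_t'$. So this piece has variance $\frac{c}{T^2}\sum_t\mathbb{E}[a_t^2]$. Bounding $\mathbb{E}[a_t^2]\le (\sum_j\omega(j))^2\,\mathbb{E}\|Z\|^4=O(M^2)$ gives $O(M^2/T)$, which is only $o(M)$ relative to $D_{\omega,T}$.

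The second piece has variance $\frac{d_1^2}{T^2}\sum_{t,t'}\mathbb{E}[b_tb_{t'}]$ with $b_t=\sum_j\omega(j)(\|Z_{t-j}\|^2-d_2)$. Expanding gives $\frac{1}{T^2}\sum_{j,\ell}\omega(j)\omega(\ell)\sum_{t,t'}\mathrm{Cov}(\|Z_{t-j}\|^2,\|Z_{t'-\ell}\|^2)$. The summability $O(h^{-1-\epsilon})$ makes the inner double sum $O(T)$, so this piece is $O((\sum\omega)^2/T)=O(M^2/T)$.

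\textbf{Main obstacle.} Reconciling these bounds with the claimed $O(M/T)$ is the step I expect to be hardest. If the crude $O(M^2/T)$ bound is all that is available, it still suffices for the stated conclusion: dividing by $D_{\omega,T}\asymp M$ gives $O(M/T)\to0$. In that case I would interpret the claim as $\mathrm{Var}[T\mathcal{T}_{1\omega}]/D_{\omega,T}=O(M/T)$, which yields the mean-square limit. I would first try to sharpen $\mathbb{E}[a_t^2]$, using $\sum_j\omega(j)=O(M)$ from Assumption \ref{assumption_kernel1} with square-integrable $k$.

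For part ii), each term of $T\mathcal{T}^c_{2\omega}$ with $t>s$ and $j\ge t-s+1$ has $X_s,Z_{t-j},Z_{s-j}\in\mathcal{I}(t-1)$. Under $\mathcal{H}_0$ the law of iterated expectations then gives zero mean. For the variance we write $T\mathcal{T}^c_{2\omega}=\frac2T\sum_t\langle X_t,W_t\rangle$ with $W_t=\sum_{s<t}\sum_j\omega(j)X_sZ_{s-j}'Z_{t-j}\in\mathcal{I}(t-1)$, which is a martingale difference sequence. Cross terms therefore vanish, and conditional homoskedasticity (with $\mathrm{Var}X_t=I$) gives $\mathbb{E}[\langle X_t,W_t\rangle^2]=\mathbb{E}\|W_t\|^2$.

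Expanding $\|W_t\|^2=\sum_{s,s'}\sum_{j,\ell}\omega(j)\omega(\ell)\langle X_s,X_{s'}\rangle\langle Z_{s-j},Z_{t-j}\rangle\langle Z_{s'-\ell},Z_{t-\ell}\rangle$, off-diagonal $s\ne s'$ terms vanish. The reason is that, say for $s>s'$, the admissible range $j,\ell\ge t-s+1$ places $X_{s'}$ and all the $Z$'s in $\mathcal{I}(s-1)$, and $\mathbb{E}[X_s\mid\mathcal{I}(s-1)]=0$. On the diagonal, conditioning on $\mathcal{I}(s-1)$ replaces $\|X_s\|^2$ by $d_1$, leaving $d_1\gamma_{t,s}(j,\ell)$. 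Collecting the index ranges $\max(j,\ell)<s<t\le s+\min(j,\ell)-1$ and the factor $4/T^2$ reproduces $D^{(Hete)}_{\omega,T}$. The delicate check is that every $Z$ index lies in $\mathcal{I}(s-1)$ precisely because of the restriction $j>t-s$ imposed by the correction.
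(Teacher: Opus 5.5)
Your proposal is correct and follows essentially the same route as the paper: the law of iterated expectations for both means, a martingale-difference plus covariance-summability split of $T\mathcal{T}_{1\omega}-\mu_{\omega,T}$ (the paper applies Minkowski's inequality over $j$ rather than aggregating into $a_t$, with the same $O(M^2/T)$ outcome), and a martingale-difference decomposition in $t$ of $T\mathcal{T}_{2\omega}^c$, where your explicit check that $j,\ell\ge t-s+1$ puts $X_{s'}$ and every $Z$ in $\mathcal{I}(s-1)$ supplies the justification the paper only asserts for the vanishing off-diagonal terms and for replacing $\|X_s\|^2$ by $d_1$. Your resolution of the ``main obstacle'' is exactly what the paper's proof delivers---it too only obtains $\text{Var}[T\cdot\mathcal{T}_{1\omega}]=O(M^2/T)$, with the $O(M/T)$ rate holding for the $D_{\omega,T}$-normalized quantity---and you should not try to sharpen it, since $\mathbb{E}[a_t]=d_2\sum_{j<t}\omega(j)\asymp M$ makes $M^2/T$ the exact order, e.g.\ when $X$ and $Z$ are independent and $\text{Var}(\|X_t\|^2)>0$.
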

Proposition \ref{prop_moments} formalizes the benefits of the correction term. Under conditional homoskedasticity and conditional homokurtosis of $X$ with respect to the joint past, the first two moments of the asymmetric Portmanteau statistic do not incorporate dependencies running from past $X$ to present $Z$. Specifically, these moments depend only on the weighting scheme, $\{\omega(j)\}$ or, at most, on particular higher moments of the marginal process $Z$, $\{\gamma_{t,s}\}$, rather than on moments of the joint process that would reflect inverse causality. In fact, by breaking the symmetry of the quadratic form, the correction term restores a martingale structure that permits separating variances via law of iterated expectations. Consequently, the variance of the proposed statistic reduces to (nonparametrically) estimating the long-run variance of second-order moments of the process $Z$ (i.e., the cross-products $\{\langle Z_{t-j},Z_{s-j}\rangle\}$).

Propositions \ref{prop_toyexample1}-\ref{prop_moments} highlight a trade-off in terms of restrictions on marginal vs. joint processes: maintaining directional inference while avoiding to specify the dynamics between shocks and omitted variables. This paper prioritizes an agnostic stance toward inverse causality, imposing minimal restrictions on how past shocks influence present omitted variables, at the cost of stronger moment restrictions on the structural shocks themselves. Indeed, these moment restrictions serve to isolate \textit{effects} rather than \textit{causes}: they ensure the statistic correctly detects violations of weak exogeneity without constraining the inverse causality channel.
Conditional homoskedasticity ensures proper centering by isolating weak exogeneity violations to the mean of cross-products, $\mathcal{T}_{2\omega}^c$, rather than the sum of squares, $\mathcal{T}_{1\omega}$. Conditional homokurtosis bounds the variance of the latter sum, ensuring that the cross-products dominate under the null.

These moment restrictions on structural shocks are testable and offer practical advantages for empirical application. Since structural shocks are estimated rather than observed, conditional moment restrictions can guide towards sharper identification (e.g., \citealp{hafner2022identification}, or related heteroskedastic identification schemes). Under additional parametric assumptions on inverse causality, these conditions could be relaxed.\footnote{These moment restrictions are weaker than mutual independence \citep{hong1996testing} or past independence \citep{candelon2016nonparametric}, and comparable to approximate q-dependence \citep{hong2005generalized}, while being more directly testable.} Notably, analogous conditions have been proposed for testing in Proxy-SVAR frameworks \citep{bruns2024testing}.

Using the quantities defined in Eq.(\ref{corrected_statistic})-(\ref{kernel1}), let:
\begin{align*}
    \mathcal{T}^{(Hete)}:=T\left(\dfrac{ \mathcal{T}_{1\omega}-\mu_{\omega,T}}{\sqrt{D_{\omega,T}}}+\dfrac{\mathcal{T}_{2\omega}^c}{\sqrt{D_{\omega,T}^{(Hete)}}}\right), \quad \mathcal{T}^{(Hong)}:=T \left(\dfrac{\mathcal{T}_{1\omega}-\mu_{\omega,T}}{\sqrt{D_{\omega,T}}}+\dfrac{\mathcal{T}_{2\omega}}{\sqrt{D_{\omega,T}}}\right),
\end{align*}
where $\mathcal{T}^{(Hete)}$ is the centered and scaled version of the proposed asymmetric Portmanteau statistic, whereas $\mathcal{T}^{(Hong)}$ corresponds to centered and scaled version of the benchmark (e.g., \citealp[]{hong1996consistent}). To state the dependence conditions used below, define the triangular array: $\Lambda_{s,t}= \sum_{j=t-s+1}^{s-1}\omega(j)X_{s}\langle Z_{t-j},Z_{s-j}\rangle$, with $1\leq s\leq t$, and for $r\geq 1$, define the fourth-order dependence coefficient: \\
$C_{r,4}^{\Lambda}:= \sup_{t\geq 1}\sup_{m=1,2,3}\sup_{\mathcal I_{m,r,t}}\left\|\operatorname{Cov}\left(\Lambda_{i_1,t}\otimes..\otimes\Lambda_{i_m,t},\Lambda_{i_{m+1},t}\otimes..\otimes\Lambda_{i_4,t}\right)\right\|_{F}$,\\
with: $\mathcal I_{m,r,t}
=\left\{(i_1,..,i_4):1\leq i_1\leq..\leq i_m<i_{m+1}\leq..\leq i_4\leq t,i_{m+1}-i_m\geq r
\right\}.$\\
Additionally, we consider the following assumption on the process $Z$:
\begin{assum}\label{assumptionZmixing} The process $\{Z_t\}$ is strictly stationary, has finite $(8+\delta)$-order moments, with $\alpha(h)$ such that:
$\sum_{h=1}^\infty \alpha(h)^{\delta/(8+\delta)}<\infty$.
\end{assum}

\begin{theo}\label{theorem1}
Suppose the process $\{X_t\}$ is such that:
\begin{align*}
    \mathbb{E}[X_tX_t^\prime|\mathcal{I}(t-1)]= \mathbb{E}[X_tX_t^\prime], \; \; \;\mathbb{E}[(X_tX_t^\prime)\otimes (X_tX_t^\prime) |\mathcal{I}(t-1)]= \mathbb{E}[(X_tX_t^\prime)\otimes (X_tX_t^\prime) ]
\end{align*}
Suppose the time series $\{Z_t\}$ satisfies Assumption \ref{assumptionZmixing}. Further, suppose the joint process $\{X_t,Z_t\}$ is strictly stationary, and Assumption \ref{assumption_kernel1} holds with $\frac{M^2}{T}\rightarrow 0$, as both $T,M\rightarrow \infty$. \\
Under the null $\mathcal{H}_0$ in Eq.(\ref{h0grangernoncausality}), we have: $\mathcal{T}^{(Hete)}\xrightarrow{d} \mathcal{N}(0,1)$.\\
If additionally the joint process $\{X_t,Z_t\}$ satisfies: $C_{r,4}^{\Lambda}=O(r^{-2})$, for $r\rightarrow\infty$, the asymptotic normality holds with $\frac{M}{T}\rightarrow 0$, as both $T,M\rightarrow \infty$.
\end{theo}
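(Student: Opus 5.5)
The plan is to split $\mathcal{T}^{(Hete)}$ into its two summands and treat them separately. By Proposition \ref{prop_moments}(i), conditional homoskedasticity and homokurtosis of $X$, together with the summability of $\text{Cov}[\|Z_1\|^2,\|Z_{1+h}\|^2]$ (which follows from Assumption \ref{assumptionZmixing} via Davydov's covariance inequality), give $\text{Var}[T\mathcal{T}_{1\omega}]=O(M/T)$. Hence $(T\mathcal{T}_{1\omega}-\mu_{\omega,T})D_{\omega,T}^{-1/2}\to 0$ in $L^2$, since $D_{\omega,T}\asymp M$. By Slutsky, it then suffices to show that $T\mathcal{T}^c_{2\omega}/\sqrt{D^{(Hete)}_{\omega,T}}\xrightarrow{d}\mathcal{N}(0,1)$. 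A preliminary step is to check that $D^{(Hete)}_{\omega,T}\asymp M$. The upper bound uses $\sum_h|\gamma|<\infty$ from mixing. The lower bound comes from the diagonal terms $j=\ell$, $t=s+1$, where $\gamma_{s+1,s}(j,j)=\mathbb{E}\langle Z_{s+1-j},Z_{s-j}\rangle^2>0$, combined with $M/T\to0$ and the continuity of $k$ at $0$.

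For the cross-product term I would write $T\mathcal{T}^c_{2\omega}=\sum_{t=2}^T W_t$, where $W_t=\frac{2}{T}\sum_{s<t}\sum_{j=t-s+1}^{s-1}\omega(j)\langle X_t,X_s\rangle\langle Z_{t-j},Z_{s-j}\rangle=\frac{2}{T}\langle X_t,\sum_{s<t}\Lambda_{s,t}\rangle$. The restriction $j>t-s$ guarantees that $X_s$, $Z_{t-j}$ and $Z_{s-j}$ all lie in $\mathcal{I}(t-1)$. Under $\mathcal{H}_0$, $\{W_t,\mathcal{I}(t)\}$ is therefore a martingale difference array, which is exactly the structure the correction restores. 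I would then apply a martingale CLT (Brown/Hall--Heyde), which requires two conditions. First, a Lyapunov condition: $\sum_t\mathbb{E}W_t^4/(D^{(Hete)})^2\to0$. Second, convergence of the conditional variance: $\sum_t\mathbb{E}[W_t^2|\mathcal{I}(t-1)]/D^{(Hete)}\xrightarrow{p}1$. Conditional homoskedasticity gives $\mathbb{E}[W_t^2|\mathcal{I}(t-1)]=\frac{4}{T^2}\|\sum_{s<t}\Lambda_{s,t}\|^2$ (with $\text{Var}X_t=I$). Its expectation matches $D^{(Hete)}_{\omega,T}$ by the computation behind Proposition \ref{prop_moments}(ii). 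For the Lyapunov condition, conditional homokurtosis reduces $\mathbb{E}[W_t^4]$ to a constant times $T^{-4}\mathbb{E}\|\sum_s\Lambda_{s,t}\|^4$. Bounding this by expanding the fourth moment and using the $(8+\delta)$ moments of $Z$ gives roughly $O(M^2/T^4)$ per $t$ after counting index overlaps. Summing over $t$ gives $O(M^2/T^3)=o(M^2)$.

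The main obstacle is the conditional variance step: showing $\text{Var}\big(\frac{4}{T^2}\sum_t\|\sum_{s<t}\Lambda_{s,t}\|^2\big)=o(M^2)$. The quantity $\|\sum_s\Lambda_{s,t}\|^2$ is a quadratic form in the $X_s$ with $Z$-weights, and its variance involves fourth-order joint moments of $\{X,Z\}$. Because these can contain inverse causality, they are not controlled by marginal assumptions alone. Under $M^2/T\to0$ I would use a crude argument. Expand the variance into a sum over eight time indices. Apply homokurtosis and the martingale property to kill terms in which the largest $X$ index appears alone. Bound the remaining terms by moment bounds using Hölder's inequality and $(8+\delta)$ moments. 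This should give a variance of order $M^3/T$ or thereabouts relative to $M^2$, which is $o(1)$ when $M^2/T\to0$. For the sharper rate $M/T\to0$, I would instead decompose the variance by ordering the four $\Lambda$ indices $i_1\le\dots\le i_4$ and splitting at the largest gap $r$. The covariance term is then bounded by $C^{\Lambda}_{r,4}=O(r^{-2})$, which is summable over $r$ after accounting for the number of configurations. This yields $\text{Var}=O(M^2\cdot M/T)=o(M^2)$ under only $M/T\to0$. Combining the two pieces through Slutsky completes the proof.
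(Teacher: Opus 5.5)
\paragraph{Where the proposal breaks.} Your architecture is the paper's own. You use Slutsky with Proposition~\ref{prop_moments}(i) for the sum of squares, and treat the summands $\frac{2}{T}J_t$ as a martingale difference array. You then apply Brown's CLT: the Lyapunov condition is reduced via homokurtosis to $\mathbb{E}\|S_t\|^4$, and the conditional variance via homoskedasticity to $\frac{4}{T^2}\sum_t\|S_t\|^2$. Here $S_t=\sum_{s<t}\Lambda_{s,t}=\sum_{s<t}X_sW_{ts}$ with $W_{ts}=\sum_{j=t-s+1}^{s-1}\omega(j)\langle Z_{t-j},Z_{s-j}\rangle$, and $C^{\Lambda}_{r,4}$ gives the sharper rate.

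The step that fails is your preliminary claim $D^{(Hete)}_{\omega,T}\asymp M$, against which your Lyapunov and conditional-variance targets are calibrated. The paper's proof asserts the same thing, together with $\mathbb{E}[V_T^2]\asymp T^2M$, so following it does not repair this.

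Your own lower bound does not deliver $\asymp M$. The terms $j=\ell$, $t=s+1$ contribute $\frac{4d_1}{T^2}\sum_j\omega^2(j)\sum_s\gamma_{s+1,s}(j,j)\asymp T^{-1}\sum_j\omega^2(j)\asymp M/T$. Dropping the $j\neq\ell$ terms is not a valid lower bound in any case; only whole terms $\mathbb{E}W_{ts}^2\geq 0$ may be discarded.

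In fact, take $X$ and $Z$ i.i.d., standardized and mutually independent. Then $\gamma_{t,s}(j,\ell)=d_2\mathbf{1}\{j=\ell\}$ for $t\neq s$, and
\[
D^{(Hete)}_{\omega,T}=\frac{4d_1d_2}{T^2}\sum_{j}\omega^2(j)\sum_{s=j+1}^{T-1}\min(T-s,\,j-1)\approx\frac{4d_1d_2}{T}\sum_{j}(j-1)\,\omega^2(j)\asymp\frac{M^2}{T}.
\]
The reason is that the correction retains only the band $|t-s|<j$, about $T(j-1)$ pairs per lag, instead of the roughly $T^2/2$ pairs behind $D_{\omega,T}\asymp M$. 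Equivalently, $\mathbb{E}J_t^2=d_1\sum_{s<t}\mathbb{E}W_{ts}^2\asymp M^2$: about $M$ values of $s$ are effectively active per $t$, each with $\mathbb{E}W_{ts}^2\asymp M$.

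\paragraph{What needs to change.} You need $\sum_t\mathbb{E}(2J_t/T)^4=o(M^4/T^2)$ and $\mathrm{Var}\big(\sum_t\|S_t\|^2\big)=o(T^2M^4)$. The targets $o(M^2)$ you state are too lax by a factor $T^2/M^2$.

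Your per-$t$ count $\mathbb{E}J_t^4=O(M^2)$ is also off, since $\mathbb{E}J_t^4\geq(\mathbb{E}J_t^2)^2\gtrsim M^4$. Lyapunov nevertheless closes with Burkholder's inequality in $s$, which is legitimate because $W_{ts}$ is $\mathcal{I}(s-1)$-measurable. This gives a ratio $O(1/T)$ if $\sum_{s<t}\|W_{ts}\|_4^2=O(M^2)$, and $O(M^2/T)$ with the crude bound $\|W_{ts}\|_4=O(M)$.

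For the conditional variance, which you rightly single out, expanding over eight indices and applying H\"older gives no decay across distant $t$ without joint weak dependence of $(X,Z)$. After the martingale cancellations, terms such as $\mathrm{Cov}\big(W_{ts}W_{tr}\langle X_s,X_r\rangle,\,W_{t's'}^2\big)$ with $t'\gg t$ survive, and they carry inverse causality. A way around this is to reorganize by the $X$-index:
\[
\sum_t\|S_t\|^2=d_1\sum_sA_s+\sum_s\big(\|X_s\|^2-d_1\big)A_s+2\sum_sX_s'B_s,\qquad A_s=\sum_{t>s}W_{ts}^2,\quad B_s=\sum_{t>s}W_{ts}\sum_{r<s}W_{tr}X_r,
\]
where $A_s$ and $B_s$ are $\mathcal{I}(s-1)$-measurable.
\begin{enumerate}
\item The first piece is a functional of $Z$ alone, handled by the mixing in Assumption~\ref{assumptionZmixing}.
\item The other two are martingale difference sums. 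By homoskedasticity and homokurtosis, their variances are sums of within-window second moments.
\item Bounding the variance of the total by three times the sum of the three variances means the cross-covariances between pieces never need to be controlled.
\end{enumerate}

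\paragraph{What the band restriction buys.} It is not the martingale property in $t$. The benchmark's cross-products share that property, since $X_s,Z_{t-j},Z_{s-j}\in\mathcal{I}(t-1)$ for every $j\geq1$. What the restriction $j\geq t-s+1$ buys is the inner measurability $W_{ts}\in\mathcal{I}(s-1)$. That is what makes $\mathbb{E}\|S_t\|^2$ free of inverse causality, and it underlies both the Burkholder step and the decomposition above.
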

Theorem \ref{theorem1} offers two notable improvements over existing testing strategies.

First, Portmanteau statistics following \cite{hong1996consistent, hong1996testing} are typically studied under statistical independence. In the presence of inverse causality, benchmark tests based on squared cross-correlations may therefore exhibit size distortions under weak exogeneity, as their higher moments incorporate dependencies from past $X$ to present $Z$ (Proposition \ref{prop_toyexample1}). The asymmetric Portmanteau statistic addresses this issue directly: Theorem \ref{theorem1} establishes that its asymptotic normality depends essentially on the martingale properties of $X$ with respect to the joint past.

Second, tests for the martingale difference property \citep{hong2005generalized, escanciano2006generalized} could, in principle, be used to test $\mathcal{H}_0$, but at a high cost. These procedures would require either (i) the joint process $\{X_t,Z_t\}$ to be a martingale difference sequence (stronger than weak exogeneity), or (ii) explicit modeling of the joint conditional mean. The latter approach would involve additional high-level assumptions (e.g., Assumptions A2-A3 in \citealp{hong2005generalized}) whose implications and testability are less transparent than the primitive moment conditions imposed in Theorem \ref{theorem1}. By contrast, the proposed framework avoids modeling the joint dynamics altogether, while maintaining interpretable and testable restrictions on the moments of the structural shocks.

The main cost relative to \cite{hong1996testing} is a more stringent rate condition on the smoothing parameter $M$, which must diverge slower than $\sqrt{T}$. This slower rate reflects the need to control the variance of the sum of squares under minimal assumptions on the marginal process $Z$ (i.e., finite eighth moments). However, the second part of Theorem \ref{theorem1} shows that, under mild additional dependence conditions \citep{dedecker2007weak}, the standard rate $M/T\rightarrow 0$ suffices. 

The smoothing parameter $M$ governs the number of lags considered in the test statistic and thus the rate at which the weighted sum of covariance terms converges to a Gaussian limit. For small/finite $M$, the limiting distribution is a weighted sum of chi-squared variables \citep{box1970distribution, francq2007multivariate}. As $M$ increases with $T$, the sum converges to normality by a standard central limit argument. The choice of $M$ involves a familiar trade-off: sufficiently fast growth ensures asymptotic normality under the null, while sufficiently slow growth preserves power against alternatives (Theorem \ref{theorem_power}). In addition, since the correction term permits separating the variances, the smoothing parameter $M$ also governs the effective window used to estimate the long-run variance of second-order moments of $Z$, thus relating its choice to the standard size-power tradeoff in HAR inference \citep{lazarus2021size}.

\subsection{Consistency under the Alternatives}\label{chp1_sec3_subsec_consistencyunder}

This section establishes the asymptotic power under a general class of alternatives. Let $\kappa_{mrmr,XY}(j,k,l)$ denote the fourth-order cumulant of $\{X_{m,t},Z_{r,t-j},X_{m,t-k},Z_{r,t-l}\}$, where $X_{m,t}$ and $Z_{r,t}$ are the $m^{th}$ and $r^{th}$ entries of $X_{t}$ and $Z_{t}$.
We require absolute summability of fourth-order cumulants: $\sum_{m,r=1}^{d_1,d_2}\sum_{j,k,l=-\infty}^\infty \vert \kappa_{mrmr,XZ}(j,k,l)\vert <\infty $. 
\begin{theo}\label{theorem_power}
Suppose $\{X_{t},Z_{t}\}$ is jointly fourth-order stationary process with absolute summability of the fourth-order cumulants, with: $|\text{Cov}[|| Z_{1}||^2,|| Z_{1+h}||^2]|=O(h^{-1-\epsilon})$ for $\epsilon>0$. Suppose further: $ \exists j>0\;, ||\Gamma_{XZ}(j)||\not= 0 \;$, with $\; \sum_{j=1}^\infty|| \Gamma_{XZ}(j)||^2< \infty$. Suppose Assumption \ref{assumption_kernel1} holds with $\frac{M}{T}\rightarrow 0$ as both $T,M\rightarrow \infty$.\\ We have: $ (M^{1/2}T^{-1})\mathcal{T}^{(Hete)}\xrightarrow{p} \Delta 
\sum_{j=1}^{\infty} \left\vert\left\vert \text{vec}\left[\Gamma_{XZ}(j)\right] \right\vert\right\vert^2$, for a finite $\Delta>0$.\\ Consequently, for any fixed positive $K\in \mathbb{R} $: $\lim_{T,M\rightarrow\infty}Pr\left(\left\vert\mathcal{T}^{(Hete)}\right\vert>K\right)\xrightarrow[]{} 1.$
\end{theo}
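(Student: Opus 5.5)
The plan is to show that, after rescaling by $M^{1/2}T^{-1}$, the dominant piece of $\mathcal{T}^{(Hete)}$ is the uncentered statistic $\mathcal{T}^c_\omega=\mathcal{T}_{1\omega}+\mathcal{T}^c_{2\omega}$ itself, and that it converges to a weighted sum of squared population cross-correlations. Write
\begin{align*}
(M^{1/2}T^{-1})\mathcal{T}^{(Hete)}=\frac{M^{1/2}(\mathcal{T}_{1\omega}-\mu_{\omega,T}/T)}{D_{\omega,T}^{1/2}}+\frac{M^{1/2}\mathcal{T}_{2\omega}^c}{(D_{\omega,T}^{(Hete)})^{1/2}}.
\end{align*}
Under Assumption \ref{assumption_kernel1}, $D_{\omega,T}/M\to 2d_1d_2\int_0^\infty k^4(z)\,dz$. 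Since $\mu_{\omega,T}=O(M)$ and $M/T\to0$, the centering $\mu_{\omega,T}/T$ contributes $O(M^{3/2}/T)\cdot M^{-1/2}=O(M/T)\to0$. Similarly, under the summability condition on $\mathrm{Cov}[\|Z_1\|^2,\|Z_{1+h}\|^2]$, one gets $D^{(Hete)}_{\omega,T}/M\to c_H>0$ along the lines of Proposition \ref{prop_moments}. This requires checking that $\gamma_{t,s}(j,\ell)$ is summable in the relevant indices without invoking $\mathcal{H}_0$; only marginal moments of $Z$ enter the definition. Hence both scaling factors are deterministic and of order $M^{1/2}$.

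The key step is to show that $\sum_j\omega(j)\|\widehat\Gamma_{XZ}(j)\|^2\xrightarrow{p}\sum_{j\ge1}\|\Gamma_{XZ}(j)\|^2$. This is the standard argument of \cite{hong1996testing} and \cite{hong2001test}, and it uses the fourth-cumulant summability. Concretely, one shows $\mathbb{E}\|\widehat\Gamma_{XZ}(j)-(1-j/T)\Gamma_{XZ}(j)\|^2=O(T^{-1})$ uniformly in $j$, where the bound depends only on the summed cumulants and autocovariances. It then follows that
\begin{align*}
\sum_j\omega(j)\left|\|\widehat\Gamma_{XZ}(j)\|^2-\|\Gamma_{XZ}(j)\|^2\right|=O_p(M/T)+O_p\!\left((M/T)^{1/2}\right)+o(1),
\end{align*}
using $\sum_j\omega(j)=O(M)$, Cauchy--Schwarz on the cross term, and dominated convergence for $\sum_j(\omega(j)-1)\|\Gamma_{XZ}(j)\|^2\to0$ (since $k(0)=1$ and $k$ is continuous at $0$).

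The remaining task is to show that the correction term $\mathcal{C}_\omega$ is asymptotically negligible relative to $\mathcal{T}_\omega$, or at least converges to a limit that does not destroy positivity. This is the main obstacle. Under the alternative, the martingale structure is lost, so $\mathbb{E}[\mathcal{C}_\omega]$ need not vanish. I would bound it by expanding each product $\langle X_t,X_s\rangle\langle Z_{t-j},Z_{s-j}\rangle$ into second-order moments plus fourth cumulants. The expectation of $T^{-2}\sum\omega(j)\sum_{|t-s|\ge j}$ then involves terms $\Gamma_{XX}(t-s)\Gamma_{ZZ}(t-s)$ and $\Gamma_{XZ}(t-s+j)\Gamma_{ZX}(t-s-j)$, together with summable cumulants. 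Each is $O(T^{-1}\sum_j\omega(j))=O(M/T)$, provided the second-order sums are absolutely summable; this holds under fourth-order stationarity with $\sum_j\|\Gamma_{XZ}(j)\|^2<\infty$ and standardization. The variance of $\mathcal{C}_\omega$ is similarly $O(M/T)$ by eighth-order-free bounds via cumulants. If this fails, the fallback is to split $\mathcal{T}^c_{2\omega}$ into mean and fluctuation and show that the mean part is exactly the truncated piece $\sum_{j}\omega(j)\sum_{|t-s|<j}$ of $\|\Gamma\|^2$-type terms, whose limit is nonnegative.

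Combining the pieces, both normalized summands converge in probability to positive multiples of $\sum_j\|\mathrm{vec}\,\Gamma_{XZ}(j)\|^2$. Their sum therefore gives the limit with $\Delta=(2d_1d_2\int k^4)^{-1/2}+c_H^{-1/2}$, or the appropriate combination given how $\mathcal{T}_{1\omega}$ and $\mathcal{T}^c_{2\omega}$ split the limit. Finally, $M^{1/2}/T\to0$ together with a strictly positive limit gives $|\mathcal{T}^{(Hete)}|\to\infty$ in probability, and hence the power statement.
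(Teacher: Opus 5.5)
Your argument breaks at the step you yourself flag as the main obstacle, and the failure is structural. In expanding $\mathbb{E}[\langle X_t,X_s\rangle\langle Z_{t-j},Z_{s-j}\rangle]$ you keep the pairings $\Gamma_{XX}(t-s)\Gamma_{ZZ}(t-s)$ and $\Gamma_{XZ}(t-s+j)\Gamma_{ZX}(t-s-j)$. You drop the pairing of $X_t$ with $Z_{t-j}$ and of $X_s$ with $Z_{s-j}$, which contributes $\|\Gamma_{XZ}(j)\|_F^2$ to \emph{every} pair $(s,t)$, whatever $t-s$ is. At lag $j$ the correction term $\mathcal{C}_\omega$ contains all but $O(jT)$ of the roughly $(T-j)^2$ ordered pairs. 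So, under the same summability you invoke for the other terms, $\mathbb{E}[\mathcal{C}_\omega]\to\sum_j\|\Gamma_{XZ}(j)\|_F^2$: the removed part, not $\mathcal{T}^c_\omega$, carries the signal of $\mathcal{T}_\omega$. This is just the identity $\|\widehat\Gamma_{XZ}(j)\|_F^2=T^{-2}\sum_{s,t}\langle X_t,X_s\rangle\langle Z_{t-j},Z_{s-j}\rangle$, whose mass sits in the $O(T^2)$ pairs with $|t-s|\ge j$. Your fallback sees the same thing from the other side: the retained pairs $0<|t-s|<j$ contribute only $2T^{-1}\sum_j\omega(j)(j-1)\|\Gamma_{XZ}(j)\|_F^2\to0$. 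Nor can $\mathcal{T}_{1\omega}$ carry the limit, as your last paragraph requires. It is the diagonal part $s=t$, nonnegative with mean $O(M/T)$, hence $o_p(1)$. Separately, $D^{(Hete)}_{\omega,T}/M\to c_H>0$ is also wrong. The inner double sum in $D^{(Hete)}_{\omega,T}$ has only $O(T\min(j,\ell))$ terms, and for i.i.d.\ $Z$ one gets $D^{(Hete)}_{\omega,T}\approx 4d_1d_2T^{-1}\sum_j\omega^2(j)(j-1)$. This is $o(M)$, and of order $M^2/T$ for standard kernels.

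The paper's proof does not supply the missing step; it makes the mirror-image error. It asserts $\mathcal{T}_{1\omega}=\sum_{j}\|\mathrm{vec}[\Gamma_{XZ}(j)]\|^2+o_p(1)$, but the decomposition it then writes is that of $\sum_j\omega(j)\|\widehat\Gamma_{XZ}(j)\|_F^2=\mathcal{T}_\omega$, not of the diagonal part $\mathcal{T}_{1\omega}$. Meanwhile it treats $\mathcal{T}^c_{2\omega}$ as $O_p(M/T)$. Since $\mathcal{T}_\omega=\mathcal{T}_{1\omega}+\mathcal{T}^c_{2\omega}+\mathcal{C}_\omega$, the limit lives in $\mathcal{C}_\omega$, which $\mathcal{T}^{(Hete)}$ discards, so the statement as written does not hold.

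Concretely, let $d_1=d_2=1$, let $(Z_t)$ and $(e_t)$ be independent i.i.d.\ $N(0,1)$ sequences, and set $X_t=\gamma Z_{t-1}+\sqrt{1-\gamma^2}\,e_t$ with $0<|\gamma|<1$. All hypotheses hold: the process is Gaussian, so fourth cumulants vanish, and $\Gamma_{XZ}(1)=\gamma$ while $\Gamma_{XZ}(j)=0$ for $j\ge2$. Every retained product has $j\ge t-s+1\ge2$, and by Isserlis all three of its pairings vanish, so $\mathbb{E}[\mathcal{T}^c_{2\omega}]=0$. Meanwhile $\mathbb{E}[T\mathcal{T}_{1\omega}]-\mu_{\omega,T}=2\gamma^2\omega(1)(1-1/T)=O(1)$. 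A second-moment count (I have not written it out in full) gives $\mathrm{Var}(T\mathcal{T}^c_{2\omega})=O(D^{(Hete)}_{\omega,T})$ and $\mathrm{Var}(T\mathcal{T}_{1\omega})=O(M^2/T)$, so $\mathcal{T}^{(Hete)}=O_p(1)$. Hence $(M^{1/2}T^{-1})\mathcal{T}^{(Hete)}\xrightarrow{p}0\neq\Delta\gamma^2$, and the test is not consistent against this alternative.

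What your $\mathcal{T}_\omega$ step does establish is the corresponding result for the benchmark $\mathcal{T}^{(Hong)}$. A power result for $\mathcal{T}^{(Hete)}$ would have to rest on the mean of the retained terms alone, which, as this example shows, can vanish even when $\Gamma_{XZ}\neq0$.
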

Theorem \ref{theorem_power} establishes a consistency result: the proposed statistic, properly centered and scaled, converges in probability to the sum of squared cross-correlations across lags (up to a positive scalar). Under fixed alternatives with nonzero cross-correlation, the proposed statistic explodes at rate $(M^{1/2}/T)^{-1}$. Slower growth of $M$ yields faster divergence and higher power, consistent with the discussion concluding Section \ref{chp1_sec3_subsec_asymptotics}. Fourth-order stationarity and absolute summability of joint cumulants are standard conditions imposed when studying the power of tests following \cite{hong1996testing}, as they accommodate a wide class of processes \citep[p.846][]{hong1996consistent}.\footnote{More general conditions are discussed in \cite{lobato2002testing}, see pg.731-3.} Two technical remarks clarify the role of fourth-order cumulants in our framework.

First, the proof exploits that under alternatives the sum of squares, $\mathcal{T}_{1\omega}$, stochastically dominates the sum of cross-products, $\mathcal{T}_{2\omega}^c$, as anticipated in Section \ref{chp1_sec2_subsec_preliminaries}. This follows from Theorem 6 in \cite{hannan1970multiple} (pg.210) via an Isserlis-type argument \citep[see Eq.(5.3.20)][]{priestley1981spectral}, based on establishing $\ell_2$-convergence of covariance estimators to their population counterparts under fourth-order stationarity and absolute summability. These conditions ensure the process is sufficiently close to a multivariate normal (or to a generalized linear process) for mean-square convergence to take effect. Crucially, while fourth-order cumulants are asymptotically negligible under alternatives (Theorem \ref{theorem_power}), they govern the asymptotic behavior under the null (Theorem \ref{theorem1}). When $X$ is a martingale with respect to higher moments, cumulants drive the distribution of the test statistic under $\mathcal{H}_0$ (Proposition \ref{prop_moments}). Hence, the correction term in Eq.(\ref{corrected_statistic}) specifically targets the subset of cumulants associated with inverse causality.

Second, our asymptotic approach differs fundamentally from \cite{escanciano2006generalized}. Their strategy first establishes that sample autocovariances converge weakly in $\ell_2$-norm to a Gaussian process under the null of martingale difference (their Theorem 1).\footnote{They generalize autocovariances to measure conditional mean dependence nonparametrically (pg.155). Convergence occurs in the Hilbert space of square-integrable functions; see pg.158–159 for details.} Second, since covariances enter their statistic ``squared'', they show their statistic converges in distribution to a weighted sum of independent $\chi^2_1$ variables. In contrast, \cite{hong1996consistent} focuses on quadratic forms directly, showing that sums of cumulants converge to normality. The distinction is one of convergence order: \cite{escanciano2006generalized} achieve convergence of covariances \textit{before} squaring, while \cite{hong1996consistent}'s inferential theory with cross-products is \textit{after} squaring. Our framework follows the latter approach, underscoring the importance of fourth-order cumulants when inverse causality is left unrestricted, precisely the setting where practitioners lack information about how omitted variables $Z$ interact with the structural shock and the structural dynamics.

The proposed test has no power against uncorrelated but non-martingale processes. This reflects a well-known limitation of Portmanteau tests: nonlinear dependencies from past $Z$ to present $X$ that leave linear associations unaffected cannot be detected. One potential extension addresses this limitation through generalized spectral analysis \citep{hong2005generalized}. Rather than summing squared covariances between $X$ and $Z$, their approach considers squared covariances between $X$ and the empirical characteristic function of $Z$, thereby capturing nonlinear dependencies. Since their statistic involves quadratic forms \citep[Eq.3.11-3 in][]{hong2005generalized}, an analogous correction term can be constructed. Adapting their framework, define:
\begin{align*}
         &\mathcal{T}_\omega^{(HL)} = \sum_{j=1}^{T-1}\omega(j)\int_{\mathbb{R}}\left\lvert\widehat{\Psi}_{XZ}(j;v)\right\rvert^2Q(dv), \quad \quad \widehat{\Psi}_{XZ}(j;v)= \dfrac{1}{T}\sum_{t=j+1}^T X_t \breve{Z}_{t-j}(v)\\
         & \breve{Z}_{s}(v)= e^{iv Z_s}- T^{-1}\sum_{m=1}^T e^{iv Z_s}
     \end{align*}
 where, in this instance, $i$ denotes the imaginary unit and $Q: \mathbb{R}\rightarrow \mathbb{R}^+$ a nondecreasing weighting function symmetric about zero. Parallel to Eq.(\ref{corrected_statistic}), the asymmetric version of \cite{hong2005generalized}'s statistic is: 
\begin{align*}
         \mathcal{T}_\omega^{(HL),c} & = \mathcal{T}_\omega^{(HL)} - \left(\frac{1}{T^2} \sum_{j=1}^{T-2} \omega(j) \int_{\mathbb{R}}\left(  \sum_{s,t=j+1, j\leq |t-s|}^TX_t X_s \breve{Z}_{t-j}(v) \breve{Z}_{s-j}(v)\right) W(dv)\right) \nonumber
\end{align*}
A second limitation concerns the nature of detectable dependencies. Portmanteau statistics capture temporal dependence in a pairwise manner, detecting relationships of the form $\mathbb{E}[X_{t}|Z_{t-j}]$ but not necessarily non-pairwise interactions such as, for instance, \\ $\mathbb{E}[X_{t}|Z_{t-j}, Z_{t-j-1}]$. A standard solution is to augment $Z$ with its first $R$ lags: \\ $\{Z_t^\ddag=(Z_t^\prime,Z_{t-1}^\prime,...,Z_{t-R}^\prime)^\prime\}$, and then testing the null using cross-covariances between $X$ and $Z^\ddag$ \citep[][]{dominguez2004consistent, kuan2004new, wang2022testing}. Albeit potentially applicable here, this type of solution complicates inference: the variance of the resulting statistic involves substantially more cross-product terms, requiring more elaborate correction terms to difference out the inverse causality effects. Developing such extensions remains an avenue for future research.

Finite-sample properties under the null and fixed alternatives are studied via
Monte Carlo experiments, summarized in Appendix \ref{appendixC}. Appendix \ref{appendixC} also define the finite-sample versions of the statistics that are used in the simulations and empirical application. The full set of results is reported in the Online Appendix.

\section{Empirical application} \label{empirics}

This section presents an empirical application of the proposed testing procedures. Section \ref{subsec_invertibility_structural} introduces the concept of invertibility or fundamentalness of structural shocks and relates it to the null of interest. Section \ref{subsec_EPU} studies the exogeneity property of the uncertainty shock of \cite{baker2016measuring} by revisiting the empirical analysis of \cite{diercks2024rains}. 

\subsection{Testing Invertibility of Structural Shocks}\label{subsec_invertibility_structural}

In applied macroeconometrics, by a Wold-type of argument, common practice is to assume that the macroeconomic (stationary) multivariate time series, $\{W_{t}\}$, admits a Moving Average (MA) representation driven by mutually orthogonal structural shocks: $ W_t=B(L)\epsilon_t = \sum_{j=0}^{\infty}B_j\epsilon_{t-j}$, with $\epsilon_{t}\sim (0,I)$,
where $B(L)$ captures the propagation of the structural disturbances.\footnote{This rationale is supported by a twofold motivation: i) by the Wold Representation theorem, if the time series is covariance-stationary then it admits a MA($\infty$) representation \citep{brockwell1987time}, with the Wold innovations  being the reduced-form residuals of the linear projection of $W$ onto its infinite past; ii) the linear (or linearized) dynamic stochastic economic model, based on the variables $W$, usually admits a VARMA solution, whose structural shocks are assumed to be mutually orthogonal \citep{fernandez2007abcs}.} When $W$ is causal and invertible, structural shocks can be recovered from current and lagged values of $W$, up to a rotation matrix which governs the instantaneous relationships among the components of $W$, thus in turn motivating the use of SVAR models. Invertibility may however fail when economic agents' information differs from the econometrician's \citep{hansen2019two} and, in such cases, the MA representation is said to be non-fundamental \citep{lippi1994var, nakamura2018identification}. In practice, the issue of invertibility spells out as a problem of VAR misspecification, due to omitted variables or insufficient set of lagged controls \citep{chen2017testing, miranda2023identification}.

To fix ideas, following \cite{giannone2006does}, partition $W$ into two blocks $W_1$ and $W_2$ of dimensions $d_1$ and $d_2$, with reduced-form residuals $(X_{t},Z_t)^\prime$ and structural shocks $(\epsilon_{1,t},\epsilon_{2,t})^\prime$, where the process $Z$ collects the innovations of the block $W_2$. For simplicity, suppose $B_0=I$, so that structural shocks coincide with reduced-form residuals:
\begin{align*}
       \begin{pmatrix} W_{1,t} \\ W_{2,t} \end{pmatrix}= \begin{pmatrix} A_{1,1}(L) & A_{1,2}(L) \\  A_{2,1}(L) & A_{2,2}(L)
        \end{pmatrix}\begin{pmatrix} X_{t} \\ Z_{t} \end{pmatrix}
\end{align*}
Suppose the econometrician is interested in recovering the structural shocks, $\{\epsilon_{1,t}=X_{t}\}$, but omits from the empirical analysis the block $W_2$ (and so all the linear space spanned by the history of $Z$). The MA representation is fundamental only if $A_{1,2}(L)=0$ holds or, equivalently, whether there is Granger noncausality from the omitted variables $W_2$ to $X$. Vice versa, if $A_{1,2}(L)\not=0$, recovering the structural shock requires the enlarged information set $\{W_{1,t},W_{2,t}\}$ or, equivalently, $\{W_{1,t},Z_{t}\}$.

As the example makes clear, testing fundamentalness reduces to testing Granger non-causality or conditional lagged exogeneity \citep{giannone2006does, forni2014sufficient, plagborg2022instrumental, miranda2023identification}. Complementarily, \cite{chen2017testing} show that, when the DGP is a VARMA process generated by non-Gaussian i.i.d. shocks, fundamentalness holds if and only if the reduced-form innovations are m.d.s. \citep[][Theorem 1]{chen2017testing}. Maintaining this assumption, the shocks $X$ are fundamental or invertible if jointly: i) the structural shocks are non-Gaussian, ii) $\{X_{t}\}$ is m.d.s. with respect to its own past, $\sigma(X_{t-1},...)$, that is invertibility of $W_1$ alone, and iii) $\{X_{t}\}$ is m.d.s. with respect to the past of the omitted innovations, $\sigma(Z_{t-1},...)$. Taken together, these three conditions amount to the structural shocks $X$ having zero conditional mean given the past of both internal and external variables, $\sigma(X_{t-1},Z_{t-1},...)$,  which is the null of weak exogeneity of Eq.(\ref{h0grangernoncausality}).

\subsection{\cite{baker2016measuring}'s EPU Shocks}\label{subsec_EPU}

\cite{baker2016measuring} construct an index of Economic Policy Uncertainty (EPU) based on the frequency of newspaper articles that contain terms related to uncertainty, economy, and policy. Following their Section IV.D, the EPU structural shocks are estimated at monthly frequency by fitting a VAR(6) to five U.S. time series from Jan. 1985 to Dec. 2019, imposing a Cholesky ordering with the EPU index first, followed by the log S\&P 500, federal funds rate, log employment and log industrial production. As noted by \cite{diercks2024rains}, the estimated shock series is serially uncorrelated.\footnote{The shock series are provided in the replication package of \cite{diercks2024rains}.} Both the Lilliefors and Jarque--Bera tests reject the null hypothesis of Gaussianity at the 5\% level.

The small VAR system, however, arguably does not control for all relevant macroeconomic conditions. This turns out to be critical when questioning the exogeneity of the estimated shock, especially since the EPU index captures \textit{``uncertainties related to the economic ramifications of ``noneconomic'' policy matters [...] both near-term concerns [...] and longer term concerns''} \citep{baker2016measuring}. To assess these properties, this paper tests whether the EPU shock is weakly exogenous with respect to its own past and the past of omitted variables, using both the benchmark and the asymmetric Portmanteau statistics.

For omitted variables, this paper considers the first 8 principal components of large macroeconomic datasets, motivated by the intuition that, under a state space representation of the economic system, these estimated factors should approximate the relevant state variables \citep{forni2014sufficient}. Specifically, we consider \cite{mccracken2016fred}'s (McK Ng) macroeconomic factors from FRED-MD, a database of 134 monthly U.S. macroeconomic indicators.\footnote{Spanning until Jun. 2021, 8 static factors are estimated by PCA allowing for missing values via EM algorithm 
\citep{stock2002forecasting}.} 

\begin{figure}[h!]
\begin{subfigure}{\textwidth}
 \centering 
  \includegraphics[width=0.85\linewidth]{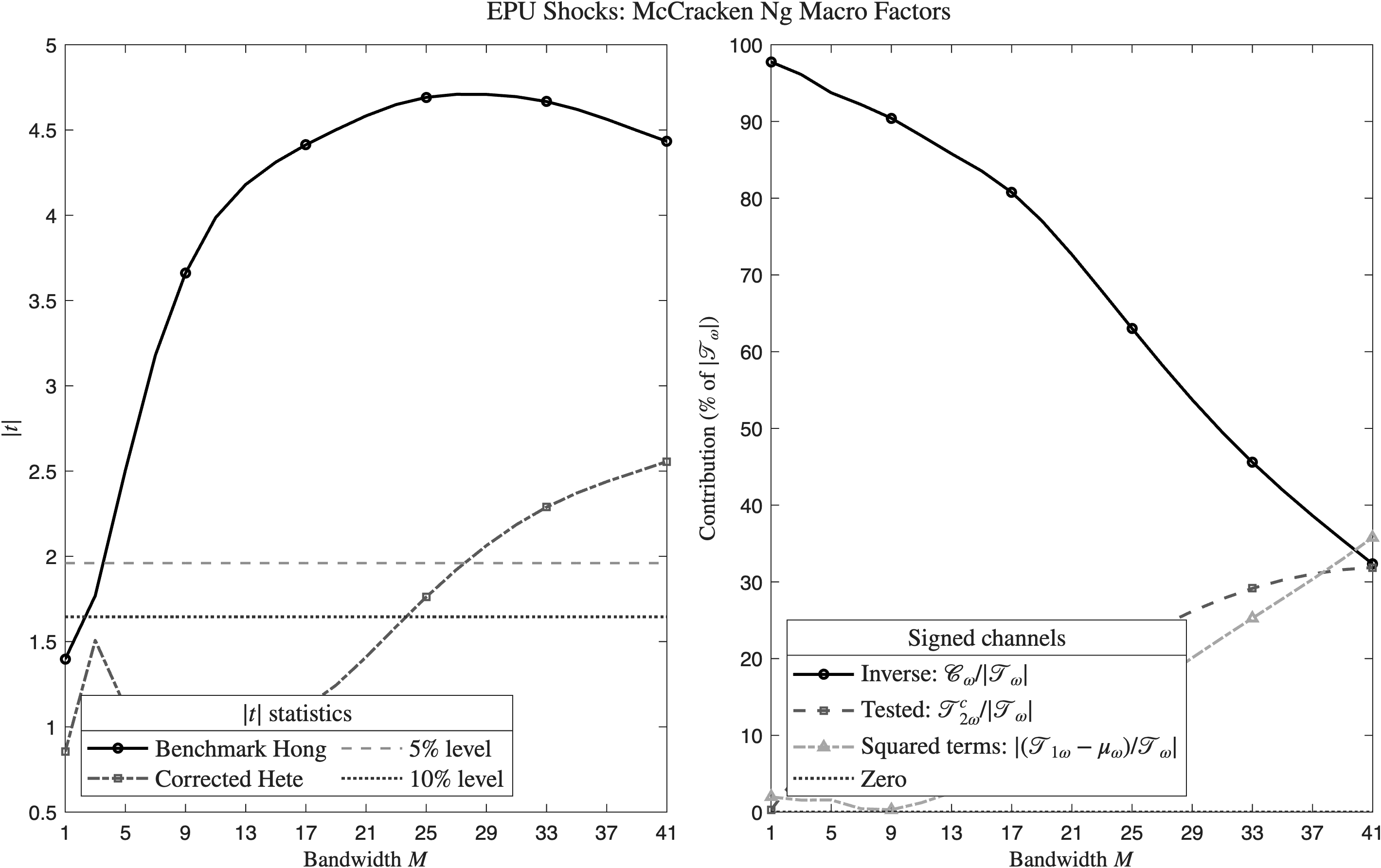}
\end{subfigure}
\caption[Empirical rates]{
				\cite{baker2016measuring}'s shocks and \cite{mccracken2016fred}'s factors.\\ Comparison between the two testing strategies.
                \normalsize
                 \textsc{Left panel}: y-axis reports the absolute value of the stats; x-axis lets the bandwidths range from $M=1$ to $41\sim$3Y+4M. The statistics are: Benchmark (\textit{Hong}, black solid and circles), Proposed (\textit{Hete}, grey dashed and squares). Horizontal dashed and dotted lines are the nominal 5\% and 10\% critical values, respectively. 
                \textsc{Right panel}: y-axis reports the share of each channel, as percentages of the absolute value of the centered Benchmark, $|\mathcal{T}_{\omega}-\mu_\omega|$, according to Eq.(\ref{corrected_statistic}): the inverse causality channel (black solid and circles), remaining tested channel (grey dashed and squares), and the absolute contribution of the centered sum of squares (grey dashed and triangles). Dotted horizontal line marks zero. All reported statistics are finite-sample versions and quadratic spectral kernel (see Appendix \ref{appendixC_statistics}). 
			}
\label{fig:EPU_McNg}
\end{figure}

The left panel of Figure \ref{fig:EPU_McNg} reports the statistics as the bandwidth
$M$ varies. Both procedures reject weak exogeneity at the 5\% level, but for
different reasons. The benchmark rejects at very short horizons ($M\leq2$). By contrast, the asymmetric Portmanteau statistic rejects only for bandwidths larger than about two years. Hence, while the benchmark suggests strong short-run evidence against weak exogeneity, the proposed statistic points to a rejection mainly driven by medium- and longer-run cycles.

The right panel of Figure \ref{fig:EPU_McNg} sheds light on this discrepancy by
decomposing the benchmark statistic as in Eq.~(\ref{corrected_statistic}), with
each component expressed as a share of the absolute value of the centered
benchmark statistic, $|T_\omega-\mu_\omega|$. At short horizons, the inverse
causality channel, $\mathcal{C}_\omega$, accounts for almost all the magnitude of
the benchmark statistic. Therefore, the short-horizon rejections are driven
primarily by dependence from past EPU shocks to current omitted factors,
rather than by the direction of weak exogeneity, namely dependence from
past omitted factors to current EPU shocks. As $M$ increases, the
contribution of the inverse channel declines, while the corrected channel becomes
relatively more important.

The conclusion is therefore not merely about finite-sample differences. Figure \ref{fig:EPU_McNg} shows that the benchmark rejections are largely due to inverse causality, whereas the rejections delivered by the asymmetric statistic are tied to the channel targeted by weak exogeneity. This distinction has practical consequences: by relying on the benchmark statistic alone, a practitioner could be misled into focusing on short-lag controls, while neglecting longer-lag dependencies. From this perspective, the decomposition in Eq.~(\ref{corrected_statistic}) is pivotal, since it prevents the short-horizon rejection from being incorrectly interpreted as evidence against weak exogeneity.  

In light of these results, the EPU shocks cannot be deemed fundamental or invertible when considering \cite{mccracken2016fred}'s factors. Impulse response analysis can therefore benefit from augmenting the system with such controls. To illustrate this point, I revisit \cite{diercks2024rains}'s analysis of the superadditive effects of uncertainty shocks, focusing on the impulse response of inflation and the stock market to EPU shocks.
\cite{diercks2024rains} estimate the following set of state-dependent local projections:
\begin{align*}
y_{t+h}=\alpha_h+\big(\beta_{0,h}+\beta_{1,h} \mathds{1}\{\epsilon_{unc,t-1}>0,..,\epsilon_{unc,t-L}>0\}\big) \epsilon_{unc,t}+\sum_{i=1}^{p}\gamma_{i,h}w_{t-i}+u_{t+h}    
\end{align*}
where $h$ sets the predictive horizon, ranging from 0 to 36 months, $p$ are the lags of the control variables, $\{w_t\}$, and the indicator function takes value 1 if each one of the previous $L$ shocks $\{\epsilon_{unc,t-1},..,\epsilon_{unc,t-L}\}$ has been positive. The state-multiplier coefficient $\beta_{1,h}$ captures the superadditive effect of uncertainty: the impact of a cascade of positive uncertainty shocks is more severe than the isolated sum of them. In their Appendix B.1 (Figure B.2), the shock of interest $\epsilon_{unc}$ is the EPU shock, the outcome variable $y$ is inflation, the number of consecutive positive shocks is 2 ($L=1$), the number of lags for the controls is 6, and the control set follows \cite{baker2016measuring}. Figure \ref{fig:DHT_inflation_original} reproduces their baseline results. 

Upon adding lags of \cite{mccracken2016fred}'s macroeconomic factors to the controls in the local projections, the unconditional linear impulse response remains largely unchanged (dashed vs. solid black lines in Figure \ref{fig:DHT_inflation_augmented}), while the state-dependent response, $\{\beta_{1,h}\}_{h=0,...,H}$, becomes positive and more pronounced as the number of factor lags increases. This pattern is consistent with the diagnostics in Figure \ref{fig:EPU_McNg}. Once longer-lag macroeconomic information is included,
the superadditive response of inflation to EPU shocks becomes clearer, strengthening \cite{diercks2024rains}'s conclusions: a sequence of consecutive positive uncertainty shocks leads to a marked increase of inflation. This finding connects to two strands of the literature. First, it aligns with \cite{fernandez2015fiscal}, who show that unexpected changes in fiscal policy uncertainty can lead to increased inflation within a standard New Keynesian model. Second, it relates closely to \cite{ascari2023endogenous}, who demonstrate that, in a rich DSGE model with firm dynamics, an `expectational' shock that raises short-term inflation expectations results in negative macroeconomic effects, causing inflation to rise while output declines.

Turning to the remaining variables of the system, while the impulse responses of industrial production and short rates are consistent with \cite{diercks2024rains} (not reported), the notable exception is the stock market. The state-dependent response of the real S\&P 500 to consecutive positive EPU shocks becomes significantly more negative at longer horizons (Figure \ref{fig:DHT_stockmkt_augmented}), consistent with \cite{berger2020uncertainty}'s finding that uncertainty shocks are linked to declines in stock returns (refer to their discussion in Section 5.2). 

A comparison across uncertainty shocks reveals an interesting heterogeneity. The inflation response to shocks from the other two considered measures (\cite{ludvigson2021uncertainty}'s and \cite{berger2020uncertainty}'s) is strongly negative, in contrast to the positive inflationary effect documented for EPU shocks. Meanwhile, the response of industrial production is negative across all three shock series, confirming their countercyclical nature \citep{diercks2024rains}. This pattern suggests that EPU shocks might operate primarily through a supply-side channel, whereas the financial uncertainty and realized volatility uncertainty shocks might be better characterized as demand-side disturbances.

\begin{figure}[h]
\centering
\begin{subfigure}{0.85\textwidth}
 \centering 
  \includegraphics[width=\linewidth]{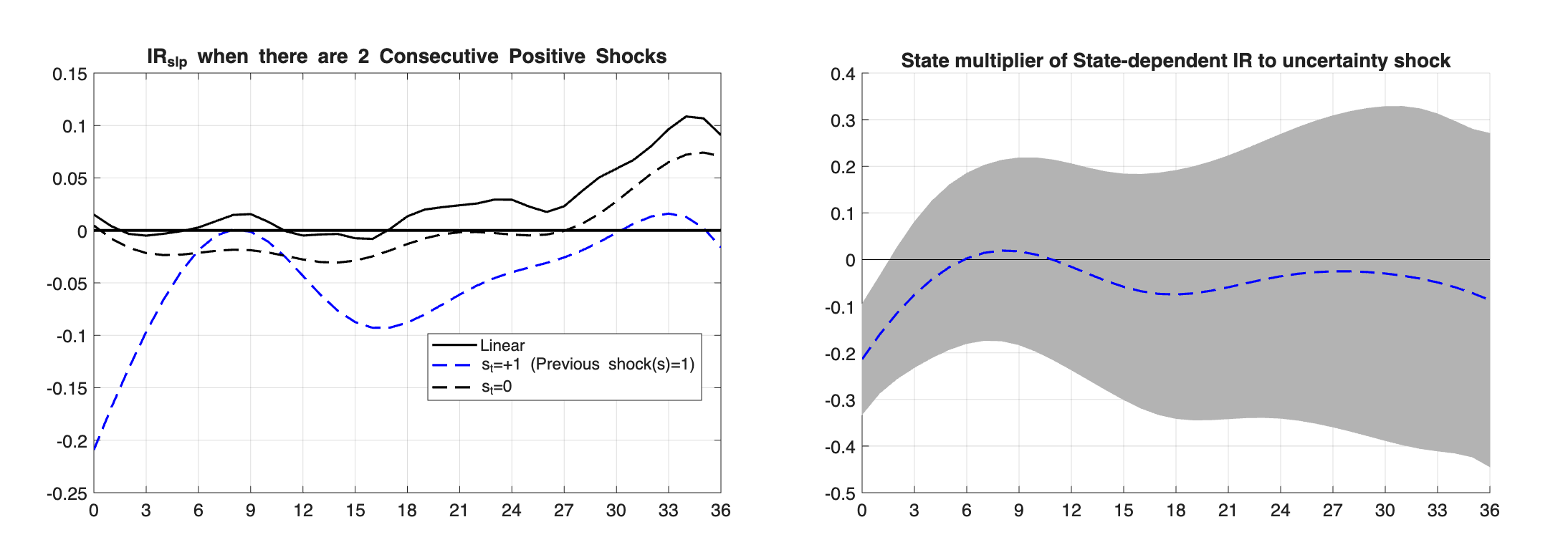}
  \caption{\normalsize{Figure B.2.(e)-(f) in  \cite{diercks2024rains}}}
  \label{fig:DHT_inflation_original}
\end{subfigure}
\par\medskip
\begin{subfigure}{0.85\textwidth}
 \centering
  \includegraphics[width=\linewidth]{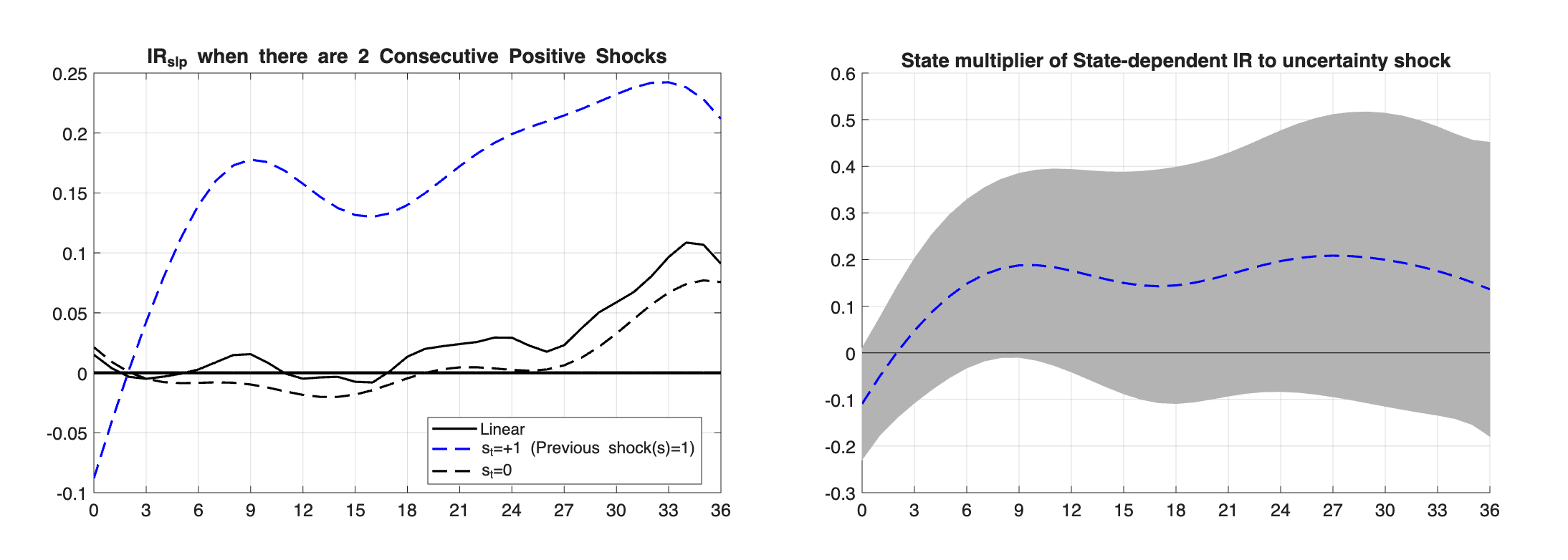}
  \caption{\normalsize{Controlling by one lag of  \cite{mccracken2016fred}'s factors}}
  \label{fig:s_power_corrected2}
\end{subfigure}
\par\medskip
\begin{subfigure}{0.85\textwidth}
 \centering
  \includegraphics[width=\linewidth]{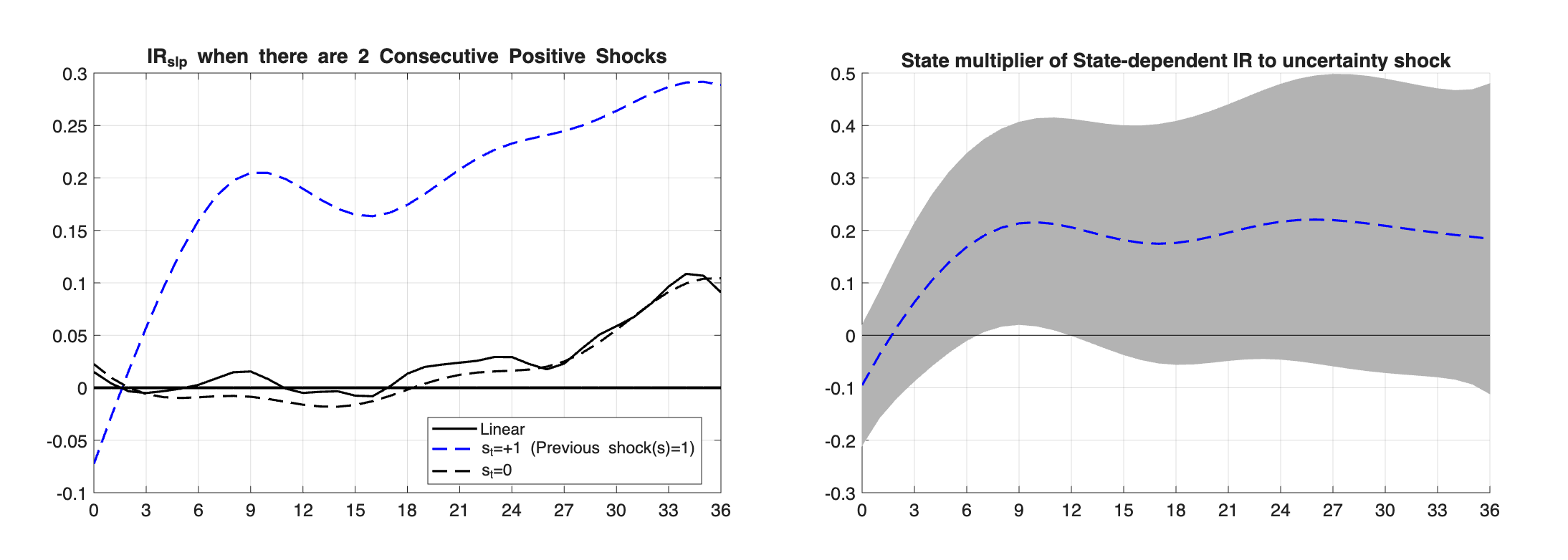}
  \caption{\normalsize{Controlling by three lags of  \cite{mccracken2016fred}'s factors}}
  \label{fig:s_power_corrected2}
\end{subfigure}
\caption[Empirical rates]{
				Response of price level to consecutive positive EPU uncertainty shocks: \\ \normalsize{ \textsc{Left panels}: empirical state-dependent impulse responses (estimated as in \citealp{diercks2024rains}) to two consecutive positive shocks (dashed blue line) and contrast it to the response to a single shock (dashed black line), and in the linear model (solid black line). \textsc{Right panels}: the incremental effect of the second shock, i.e. $\{\beta_{1,h}\}_{h=1,..,H}$, with 90\% confidence intervals (shaded area). The y-axes report the level of impulse responses, while the x-axes the horizons, $h$. Price level is by Personal Consumption Expenditures (PCE) Price Index.
			}}
\label{fig:DHT_inflation_augmented}
\end{figure}

\begin{figure}[h]
\centering
\begin{subfigure}{0.85\textwidth}
 \centering 
  \includegraphics[width=\linewidth]{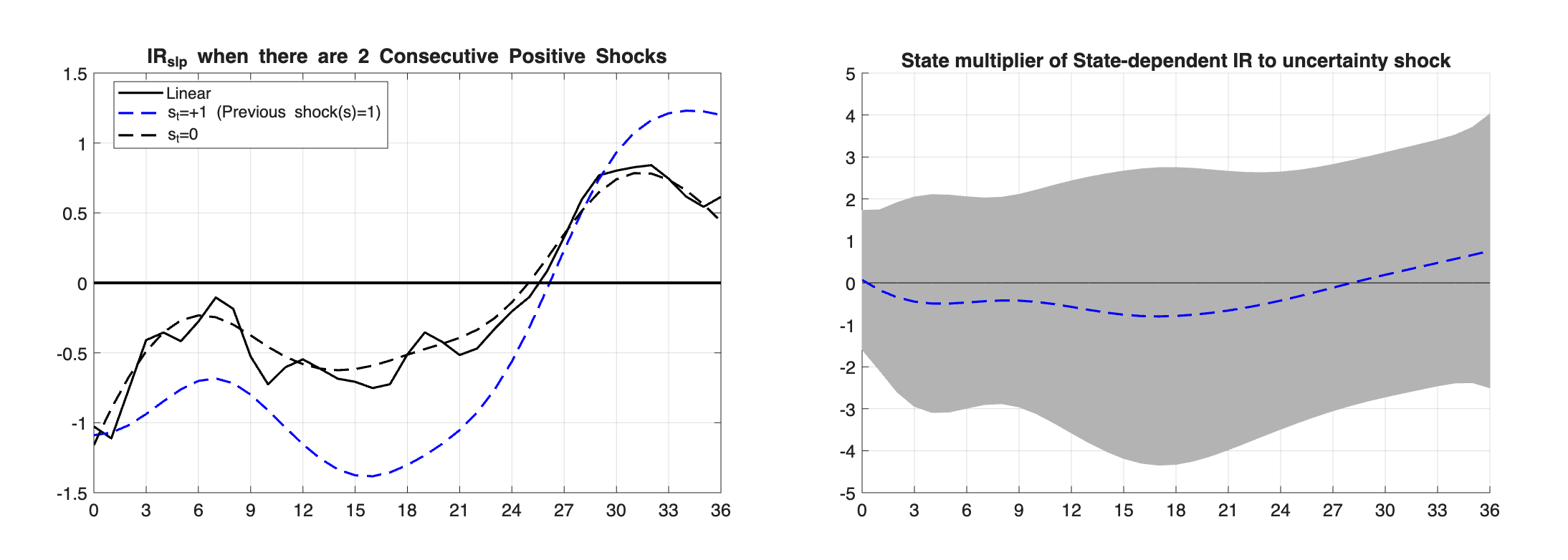}
  \caption{\normalsize{Figure B.3(e)-(f) in  \cite{diercks2024rains}}}
\end{subfigure}
\par\medskip
\begin{subfigure}{0.85\textwidth}
 \centering
  \includegraphics[width=\linewidth]{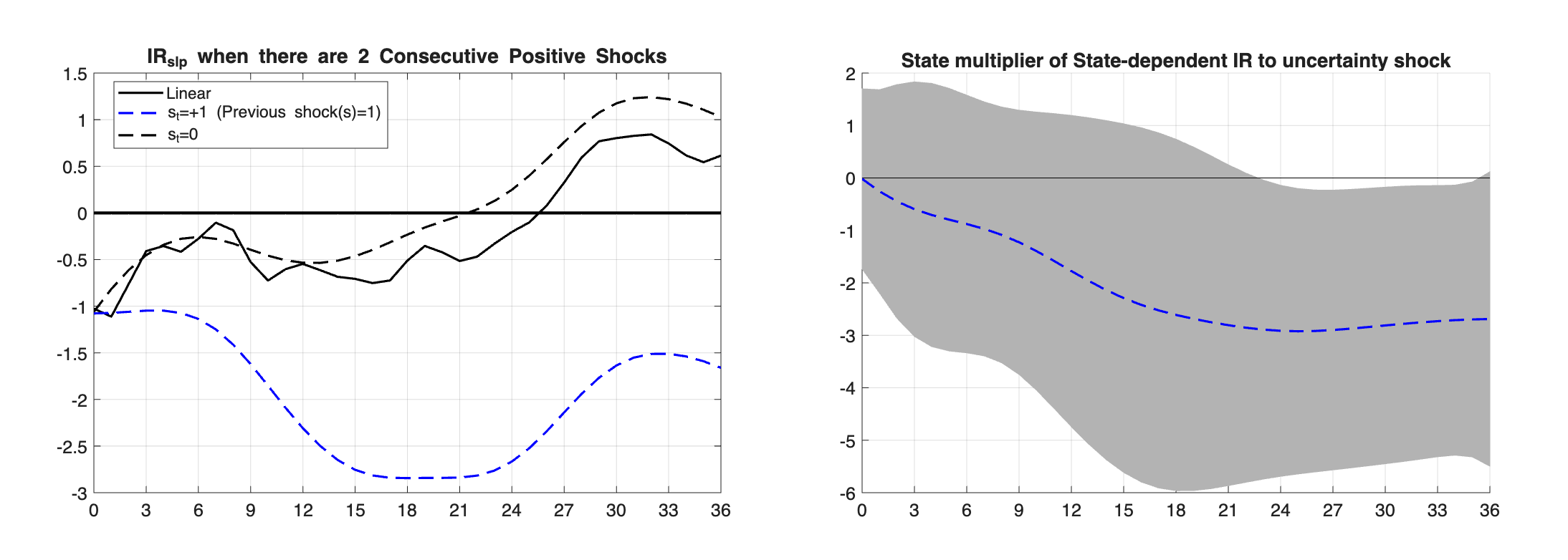}
  \caption{\normalsize{Controlling by one lag of \cite{mccracken2016fred}'s factors}}
\end{subfigure}
\par\medskip
\begin{subfigure}{0.85\textwidth}
 \centering
  \includegraphics[width=\linewidth]{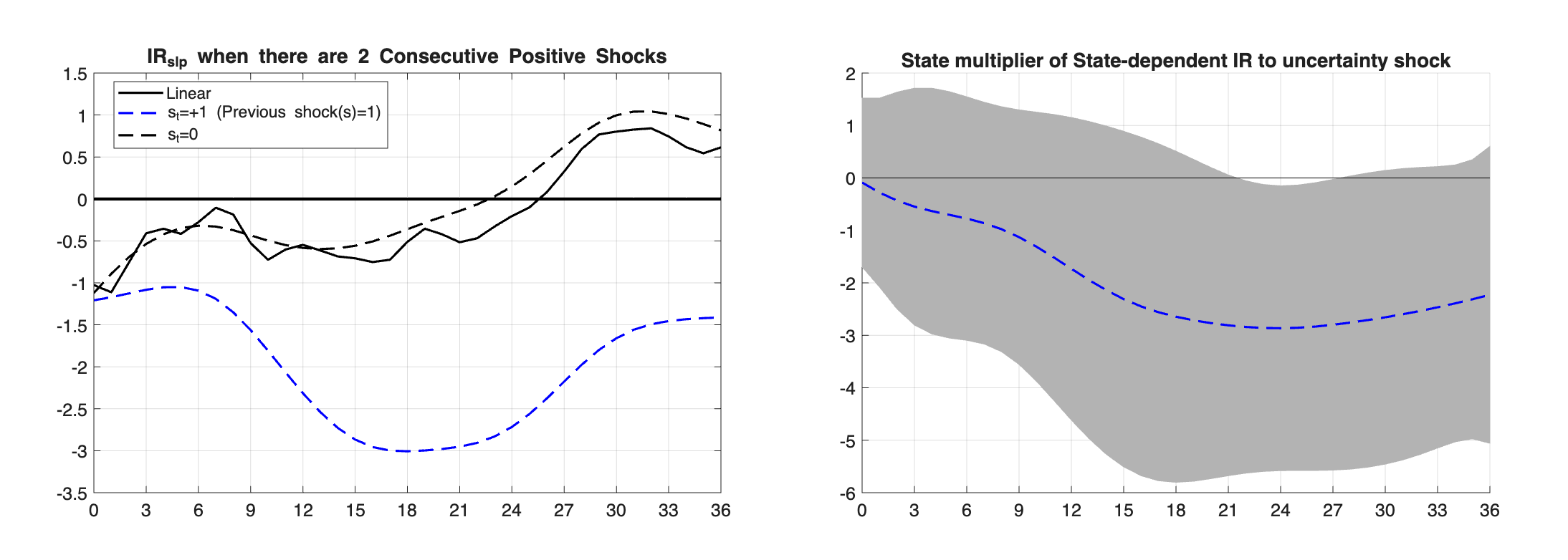}
  \caption{\normalsize{Controlling by three lags of \cite{mccracken2016fred}'s factors}}
\end{subfigure}
\caption[Empirical rates]{
				Response of stock market to consecutive positive EPU uncertainty shocks: \normalsize{ \textsc{Left panels}: empirical state-dependent impulse responses (estimated as in \citealp{diercks2024rains}) to two consecutive positive shocks (dashed blue line) and contrast it to the response to a single shock (dashed black line), and in the linear model (solid black line). \textsc{Right panels}: the incremental effect of the second shock, i.e. $\{\beta_{1,h}\}_{h=1,..,H}$, with 90\% confidence intervals (shaded area). The y-axes report the level of impulse responses, while the x-axes the horizons, $h$. Stock market is by real S\&P 500.
			}}
\label{fig:DHT_stockmkt_augmented}
\end{figure}

\clearpage

\section{Conclusion} \label{conclusion}
This paper studies specification testing in dynamic linear models in the presence of omitted variables, with its focus on testing weak exogeneity of structural shocks. Portmanteau statistics based on quadratic forms of serial cross-correlations might confound violations of weak exogeneity with dependence running from past shocks to current omitted variables (inverse causality), potentially producing misleading rejections.
To address this issue, this paper proposes an asymmetric Portmanteau statistic by introducing a correction term which removes the influence of inverse causality. The proposed statistic isolates the directional restriction implied by weak exogeneity without requiring parametric modeling of the joint dynamics. Under mild restrictions on the conditional moments, the asymmetric Portmanteau statistic is asymptotically normal under the null, and achieves asymptotic power comparable to the benchmark under fixed alternatives.
An empirical application revisits the exogeneity of a widely used shock series, \cite{baker2016measuring}'s Economic Policy Uncertainty shocks, and provides evidence against its (weak) exogeneity. By revisiting \cite{diercks2024rains}, enlarging the information set in light of the previous findings leads to a shift from negative to positive inflation (superadditive) response, together with contractionary effects elsewhere, thus suggesting a supply-side interpretation of the shock series.

\singlespacing
\bibliographystyle{ecta}
\bibliography{mybibfile}

@article{hong2001test,
  title={A test for volatility spillover with application to exchange rates},
  author={Hong, Yongmiao},
  journal={Journal of Econometrics},
  volume={103},
  number={1-2},
  pages={183--224},
  year={2001},
  publisher={Elsevier}
}

@article{brunnermeier2021feedbacks,
  title={Feedbacks: financial markets and economic activity},
  author={Brunnermeier, Markus and Palia, Darius and Sastry, Karthik A and Sims, Christopher A},
  journal={American Economic Review},
  volume={111},
  number={6},
  pages={1845--1879},
  year={2021},
  publisher={American Economic Association 2014 Broadway, Suite 305, Nashville, TN 37203}
}

@article{francq2007multivariate,
  title={Multivariate portmanteau test for autoregressive models with uncorrelated but nonindependent errors},
  author={Francq, Christian and Ra{\"\i}ssi, Hamdi},
  journal={Journal of Time Series Analysis},
  volume={28},
  number={3},
  pages={454--470},
  year={2007},
  publisher={Wiley Online Library}
}

@article{bouhaddioui2006generalized,
	title={A generalized portmanteau test for independence of two infinite-order vector autoregressive series},
	author={Bouhaddioui, Chafik and Roy, Roch},
	journal={Journal of Time Series Analysis},
	volume={27},
	number={4},
	pages={505--544},
	year={2006},
	publisher={Wiley Online Library}
}

@article{hong2009granger,
	title={Granger causality in risk and detection of extreme risk spillover between financial markets},
	author={Hong, Yongmiao and Liu, Yanhui and Wang, Shouyang},
	journal={Journal of Econometrics},
	volume={150},
	number={2},
	pages={271--287},
	year={2009},
	publisher={Elsevier}
}

@article{brown1971martingale,
	title={Martingale central limit theorems},
	author={Brown, Bruce M},
	journal={The Annals of Mathematical Statistics},
	pages={59--66},
	year={1971},
	publisher={JSTOR}
}

@book{hannan1970multiple,
	title={Multiple time series},
	author={Hannan, Edward James},
	volume={38},
	year={1970},
	publisher={John Wiley \& Sons}
}

@article{candelon2016nonparametric,
	title={A nonparametric test for granger causality in distribution with application to financial contagion},
	author={Candelon, Bertrand and Tokpavi, Sessi},
	journal={Journal of Business \& Economic Statistics},
	volume={34},
	number={2},
	pages={240--253},
	year={2016},
	publisher={Taylor \& Francis}
}

@book{priestley1981spectral,
	title={Spectral analysis and time series: Univariate series},
	author={Priestley, Maurice Bertram},
	volume={1},
	year={1981},
	publisher={Academic press}
}

@article{hong1996testing,
	title={Testing for independence between two covariance stationary time series},
	author={Hong, Yongmiao},
	journal={Biometrika},
	volume={83},
	number={3},
	pages={615--625},
	year={1996},
	publisher={Oxford University Press}
}

@article{haugh1976checking,
	title={Checking the independence of two covariance-stationary time series: a univariate residual cross-correlation approach},
	author={Haugh, Larry D},
	journal={Journal of the American Statistical Association},
	volume={71},
	number={354},
	pages={378--385},
	year={1976},
	publisher={Taylor \& Francis}
}

@article{hong2005generalized,
	title={Generalized spectral tests for conditional mean models in time series with conditional heteroscedasticity of unknown form},
	author={Hong, Yongmiao and Lee, Yoon-Jin},
	journal={The Review of Economic Studies},
	volume={72},
	number={2},
	pages={499--541},
	year={2005},
	publisher={Wiley-Blackwell}
}

@article{escanciano2006generalized,
	title={Generalized spectral tests for the martingale difference hypothesis},
	author={Escanciano, J Carlos and Velasco, Carlos},
	journal={Journal of Econometrics},
	volume={134},
	number={1},
	pages={151--185},
	year={2006},
	publisher={Elsevier}
}

@article{wang2022testing,
	title={Testing for the martingale difference hypothesis in multivariate time series models},
	author={Wang, Guochang and Zhu, Ke and Shao, Xiaofeng},
	journal={Journal of Business \& Economic Statistics},
	volume={40},
	number={3},
	pages={980--994},
	year={2022},
	publisher={Taylor \& Francis}
}

@article{hong1996consistent,
	title={Consistent testing for serial correlation of unknown form},
	author={Hong, Yongmiao},
	journal={Econometrica},
    volume={64},
	number={4},
	pages={837--864},
	year={1996},
	publisher={JSTOR}
}

@article{leong2023practical,
  title={A practical multivariate approach to testing volatility spillover},
  author={Leong, Soon Heng and Urga, Giovanni},
  journal={Journal of Economic Dynamics and Control},
  pages={104694},
  year={2023},
  publisher={Elsevier}
}

@book{dedecker2007weak,
  title={Weak dependence},
  author={Dedecker, J{\'e}r{\^o}me and Doukhan, Paul and Lang, Gabriel and Jos{\'e} Rafael, Le{\'o}n R and Louhichi, Sana and Prieur, Cl{\'e}mentine and Dedecker, J{\'e}r{\^o}me and Doukhan, Paul and Lang, Gabriel and Jos{\'e} Rafael, Le{\'o}n R and others},
  year={2007},
  publisher={Springer}
}

@article{li1981distribution,
  title={Distribution of the residual autocorrelations in multivariate ARMA time series models},
  author={Li, Wai Keung and McLeod, A Ian},
  journal={Journal of the Royal Statistical Society Series B: Statistical Methodology},
  volume={43},
  number={2},
  pages={231--239},
  year={1981},
  publisher={Oxford University Press}
}

@article{hosking1980multivariate,
  title={The multivariate portmanteau statistic},
  author={Hosking, Jonathan RM},
  journal={Journal of the American Statistical Association},
  volume={75},
  number={371},
  pages={602--608},
  year={1980},
  publisher={Taylor \& Francis}
}

@article{ramey2016macroeconomic,
  title={Macroeconomic shocks and their propagation},
  author={Ramey, Valerie A},
  journal={Handbook of macroeconomics},
  volume={2},
  pages={71--162},
  year={2016},
  publisher={Elsevier}
}

@article{baker2016measuring,
  title={Measuring economic policy uncertainty},
  author={Baker, Scott R and Bloom, Nicholas and Davis, Steven J},
  journal={The Quarterly Journal of Economics},
  volume={131},
  number={4},
  pages={1593--1636},
  year={2016},
  publisher={Oxford University Press}
}

@article{box1970distribution,
  title={Distribution of residual autocorrelations in autoregressive-integrated moving average time series models},
  author={Box, George EP and Pierce, David A},
  journal={Journal of the American Statistical Association},
  volume={65},
  number={332},
  pages={1509--1526},
  year={1970},
  publisher={Taylor \& Francis}
}

@article{durlauf1991spectral,
  title={Spectral based testing of the martingale hypothesis},
  author={Durlauf, Steven N},
  journal={Journal of Econometrics},
  volume={50},
  number={3},
  pages={355--376},
  year={1991},
  publisher={Elsevier}
}

@article{sims1980macroeconomics,
  title={Macroeconomics and reality},
  author={Sims, Christopher A},
  journal={Econometrica},
  volume={48},
  number={1},
  pages={1--48},
  year={1980},
  publisher={JSTOR}
}

@article{giannone2006does,
  title={Does information help recovering structural shocks from past observations?},
  author={Giannone, Domenico and Reichlin, Lucrezia},
  journal={Journal of the European Economic Association},
  volume={4},
  number={2-3},
  pages={455--465},
  year={2006},
  publisher={Oxford University Press}
}

@article{miranda2023identification,
  title={Identification with external instruments in structural VARs},
  author={Miranda-Agrippino, Silvia and Ricco, Giovanni},
  journal={Journal of Monetary Economics},
  volume={135},
  pages={1--19},
  year={2023},
  publisher={Elsevier}
}

@article{forni2014sufficient,
  title={Sufficient information in structural VARs},
  author={Forni, Mario and Gambetti, Luca},
  journal={Journal of Monetary Economics},
  volume={66},
  pages={124--136},
  year={2014},
  publisher={Elsevier}
}

@article{fernandez2007abcs,
  title={ABCs (and Ds) of understanding VARs},
  author={Fern{\'a}ndez-Villaverde, Jes{\'u}s and Rubio-Ram{\'\i}rez, Juan F and Sargent, Thomas J and Watson, Mark W},
  journal={American Economic Review},
  volume={97},
  number={3},
  pages={1021--1026},
  year={2007},
  publisher={American Economic Association}
}

@article{brockwell1987time,
  title={Time Series: Theory and Methods},
  author={Brockwell, Peter J and Davis, Richard A},
  journal={Springer Series in Statistics},
  year={1987},
  publisher={Springer New York}
}

@article{mccracken2016fred,
  title={FRED-MD: A monthly database for macroeconomic research},
  author={McCracken, Michael W and Ng, Serena},
  journal={Journal of Business \& Economic Statistics},
  volume={34},
  number={4},
  pages={574--589},
  year={2016},
  publisher={Taylor \& Francis}
}

@article{stock2002forecasting,
  title={Forecasting using principal components from a large number of predictors},
  author={Stock, James H and Watson, Mark W},
  journal={Journal of the American Statistical Association},
  volume={97},
  number={460},
  pages={1167--1179},
  year={2002},
  publisher={Taylor \& Francis}
}

@incollection{hansen2019two,
  title={Two difficulties in interpreting vector autoregressions},
  author={Hansen, Lars Peter and Sargent, Thomas J},
  booktitle={Rational expectations econometrics},
  pages={77--119},
  year={2019},
  publisher={CRC Press}
}

@article{lippi1994var,
  title={VAR analysis, nonfundamental representations, Blaschke matrices},
  author={Lippi, Marco and Reichlin, Lucrezia},
  journal={Journal of Econometrics},
  volume={63},
  number={1},
  pages={307--325},
  year={1994},
  publisher={Elsevier}
}

@article{chen2017testing,
  title={Testing for fundamental vector moving average representations},
  author={Chen, Bin and Choi, Jinho and Escanciano, Juan Carlos},
  journal={Quantitative Economics},
  volume={8},
  number={1},
  pages={149--180},
  year={2017},
  publisher={Wiley Online Library}
}

@article{diercks2024rains,
  title={When it rains it pours: Cascading uncertainty shocks},
  author={Diercks, Anthony M and Hsu, Alex and Tamoni, Andrea},
  journal={Journal of Political Economy},
  volume={132},
  number={2},
  pages={694--720},
  year={2024},
  publisher={The University of Chicago Press Chicago, IL}
}

@article{ludvigson2021uncertainty,
  title={Uncertainty and business cycles: exogenous impulse or endogenous response?},
  author={Ludvigson, Sydney C and Ma, Sai and Ng, Serena},
  journal={American Economic Journal: Macroeconomics},
  volume={13},
  number={4},
  pages={369--410},
  year={2021},
  publisher={American Economic Association 2014 Broadway, Suite 305, Nashville, TN 37203-2425}
}

@article{berger2020uncertainty,
  title={Uncertainty shocks as second-moment news shocks},
  author={Berger, David and Dew-Becker, Ian and Giglio, Stefano},
  journal={The Review of Economic Studies},
  volume={87},
  number={1},
  pages={40--76},
  year={2020},
  publisher={Oxford University Press}
}

@article{dufour1998short,
  title={Short run and long run causality in time series: theory},
  author={Dufour, Jean-Marie and Renault, Eric},
  journal={Econometrica},
  pages={1099--1125},
  year={1998},
  publisher={JSTOR}
}

@article{ljung1978measure,
  title={On a measure of lack of fit in time series models},
  author={Ljung, Greta M and Box, George EP},
  journal={Biometrika},
  volume={65},
  number={2},
  pages={297--303},
  year={1978},
  publisher={Oxford University Press}
}

@article{fernandez2015fiscal,
  title={Fiscal volatility shocks and economic activity},
  author={Fern{\'a}ndez-Villaverde, Jes{\'u}s and Guerr{\'o}n-Quintana, Pablo and Kuester, Keith and Rubio-Ram{\'\i}rez, Juan},
  journal={American Economic Review},
  volume={105},
  number={11},
  pages={3352--3384},
  year={2015},
  publisher={American Economic Association 2014 Broadway, Suite 305, Nashville, TN 37203}
}

@article{ascari2023endogenous,
  title={Endogenous uncertainty and the macroeconomic impact of shocks to inflation expectations},
  author={Ascari, Guido and Fasani, Stefano and Grazzini, Jakob and Rossi, Lorenza},
  journal={Journal of Monetary Economics},
  volume={140},
  pages={S48--S63},
  year={2023},
  publisher={Elsevier}
}

@book{lutkepohl1997handbook,
  title={Handbook of matrices},
  author={L{\"u}tkepohl, Helmut},
  year={1997},
  publisher={John Wiley \& Sons}
}

@article{jorda2023local,
  title={Local projections for applied economics},
  author={Jord{\`a}, {\`O}scar},
  journal={Annual Review of Economics},
  volume={15},
  number={1},
  pages={607--631},
  year={2023},
  publisher={Annual Reviews}
}

@article{plagborg2021local,
  title={Local projections and VARs estimate the same impulse responses},
  author={Plagborg-M{\o}ller, Mikkel and Wolf, Christian K},
  journal={Econometrica},
  volume={89},
  number={2},
  pages={955--980},
  year={2021},
  publisher={Wiley Online Library}
}

@book{kilian2017structural,
  title={Structural vector autoregressive analysis},
  author={Kilian, Lutz and L{\"u}tkepohl, Helmut},
  year={2017},
  publisher={Cambridge University Press}
}

@article{nakamura2018identification,
  title={Identification in macroeconomics},
  author={Nakamura, Emi and Steinsson, J{\'o}n},
  journal={Journal of Economic Perspectives},
  volume={32},
  number={3},
  pages={59--86},
  year={2018},
  publisher={American Economic Association 2014 Broadway, Suite 305, Nashville, TN 37203-2418}
}

@article{plagborg2022instrumental,
  title={Instrumental variable identification of dynamic variance decompositions},
  author={Plagborg-M{\o}ller, Mikkel and Wolf, Christian K},
  journal={Journal of Political Economy},
  volume={130},
  number={8},
  pages={2164--2202},
  year={2022},
  publisher={The University of Chicago Press Chicago, IL}
}

@article{mikusheva2025linear,
  title={Linear regression with weak exogeneity},
  author={Mikusheva, Anna and S{\o}lvsten, Mikkel},
  journal={Quantitative Economics},
  volume={16},
  number={2},
  pages={367--403},
  year={2025},
  publisher={Wiley Online Library}
}

@article{lobato2002testing,
  title={Testing for zero autocorrelation in the presence of statistical dependence},
  author={Lobato, Ignacio N and Nankervis, John C and Savin, N Eugene},
  journal={Econometric Theory},
  volume={18},
  number={3},
  pages={730--743},
  year={2002},
  publisher={Cambridge University Press}
}

@article{hafner2022identification,
  title={Identification of structural multivariate GARCH models},
  author={Hafner, Christian M and Herwartz, Helmut and Maxand, Simone},
  journal={Journal of Econometrics},
  volume={227},
  number={1},
  pages={212--227},
  year={2022},
  publisher={Elsevier}
}

@article{bruns2024testing,
  title={Testing for strong exogeneity in Proxy-VARs},
  author={Bruns, Martin and Keweloh, Sascha A},
  journal={Journal of Econometrics},
  volume={245},
  number={1-2},
  pages={105876},
  year={2024},
  publisher={Elsevier}
}

@article{dominguez2020specification,
  title={Specification testing with estimated variables},
  author={Dom{\'\i}nguez, Manuel A and Lobato, Ignacio N},
  journal={Econometric Reviews},
  volume={39},
  number={5},
  pages={476--494},
  year={2020},
  publisher={Taylor \& Francis}
}

@article{kuan2004new,
  title={A New Test of the Martingale Difference Hypothesis.},
  author={Kuan, Chung-Ming and Lee, Wei-Ming},
  journal={Studies in Nonlinear Dynamics \& Econometrics},
  volume={8},
  number={4},
  year={2004}
}

@article{dominguez2004consistent,
  title={Consistent estimation of models defined by conditional moment restrictions},
  author={Dom{\'\i}nguez, Manuel A and Lobato, Ignacio N},
  journal={Econometrica},
  volume={72},
  number={5},
  pages={1601--1615},
  year={2004},
  publisher={Wiley Online Library}
}

@book{lutkepohl2013introduction,
  title={Introduction to multiple time series analysis},
  author={L{\"u}tkepohl, Helmut},
  year={2013},
  publisher={Springer Science \& Business Media}
}

@book{hall2014martingale,
  title={Martingale limit theory and its application},
  author={Hall, Peter and Heyde, Christopher C},
  year={2014},
  publisher={Academic press}
}

@article{lazarus2021size,
  title={The Size-Power Tradeoff in HAR Inference},
  author={Lazarus, Eben and Lewis, Daniel J and Stock, James H},
  journal={Econometrica},
  volume={89},
  number={5},
  pages={2497--2516},
  year={2021},
  publisher={Wiley Online Library}
}

@article{lazarus2018har,
  title={HAR inference: Recommendations for practice},
  author={Lazarus, Eben and Lewis, Daniel J and Stock, James H and Watson, Mark W},
  journal={Journal of Business \& Economic Statistics},
  volume={36},
  number={4},
  pages={541--559},
  year={2018},
  publisher={Taylor \& Francis}
}

@article{andrews1991heteroskedasticity,
  title={Heteroskedasticity and autocorrelation consistent covariance matrix estimation},
  author={Andrews, Donald WK},
  journal={Econometrica},
  volume={59},
  number={3},
  pages={817--858},
  year={1991},
  publisher={JSTOR}
}

\clearpage
\onehalfspacing
\appendix
\appendixnumbering
\setcounter{page}{1}

\section{Appendix}\label{appendixA}

\subsection{Proof of Proposition \ref{prop_toyexample1}} \label{proof_prop_toyexample1}

To show that the variance of test statistic, $\mathcal{T}_{\omega}$, depends on the inverse causality, it is sufficient to study the first two moments of $Q(j)$. By the conditional homoskedasticity of the process $X$ and the LIE, we have: $ \mathbb{E}[||X_t ||^2||Z_{t-j} ||^2]=d_1 d_2,$ and $\mathbb{E}[(||X_t ||^2||Z_{t-j} ||^2)^2]=\kappa_1\kappa_2 $.
This means that, for a fixed $j\geq 1$, the squared products that characterizes the first term of the test statistic, $\mathcal{T}_{1\omega}$, are not influenced by the dependence running from past $X$ to present $Z$ (inverse causality).\\
To study the moments of the cross-products, WLOG, presum $t> s$. We have:
\begin{align*}
&\mathbb{E}[(<X_t,X_s><Z_{t-j},Z_{s-j}>)]=\mathbb{E}[X_t^\prime]\mathbb{E}[X_s<Z_{t-j},Z_{s-j}>)]=0\\
&\mathbb{E}[(<X_t,X_s><Z_{t-j},Z_{s-j}>)^2]= \mathbb{E}[ ||X_s ||^2<Z_{t-j},Z_{s-j}>^2]
\end{align*}
where the first equality is due the null in Eq.(\ref{h0grangernoncausality}), and the third equality is because of the conditional homoskedasticity. Without additional assumptions, only when $s> t-j$, we have by LIE: 
\begin{align*}
\mathbb{E}[ ||X_s ||^2<Z_{t-j},Z_{s-j}>^2]=\mathbb{E}[||X_s ||^2]\mathbb{E}[<Z_{t-j},Z_{s-j}>^2]=d_1 d_2, \quad s> t-j
\end{align*}
which implies that, for a fixed $j\geq 1$, the cross-products that characterizes the second term of the test statistic, $\mathcal{T}_{2\omega}$, are independent with respect to the inverse causality. Under mutual independence of the two processes, the same identity holds for all relevant time orderings: $\mathbb{E}[||X_s ||^2<Z_{t-j},Z_{s-j}>^2]=d_1d_2$

\subsection{Inverse Causality in Conditional Heteroskedastic DGPs}\label{lemma_inversecausality}
\begin{lemma} \label{coroll_toyexample1}
Consider a measurable real functions, $h(\cdot): \mathbb{R}\rightarrow \mathbb{R}$, and some $|\alpha|\in (0,1)$. Let $\{X_t,Z_t\}$ be a bivariate mean-zero process characterized by:
\begin{align*}
Z_t^2=\alpha Z_{t-1}^2+h(X_{t-1})+u_t, \; \; \; X_{t}\sim i.i.d.(0,1), \; \; \; u_{t}\sim i.i.d.(1,\sigma_u^2), \; \; \; X_t \perp\!\!\!\perp u_s, \, \forall s,t
\end{align*}
Denote: $\varrho_h=\mathbb{E}[h(X_{s})X_s^2]$, $\mu_{e}=\mathbb{E}[h(X_{t-1})+u_t]$, $\mu_{z}=\mathbb{E}[Z_t^2]$, and $\sigma^2_{z}=\text{Var}[Z_t^2]$.\\
We have: $\text{Var}[\mathcal{T}_{2\omega}]=\frac{1}{T^4}\sum_{j=1}^{T-2}\omega^2(j) \big(\Sigma_2(j)+ \Delta_2(j)\big)$, with:
\begin{align*}
    &\Sigma_2(j)=
    \sum_{\upsilon_1=1}^{\tau(j)}
    \left(\alpha^{|j-\upsilon_1|}\sigma_z^2+\mu_z^2\right)+ \sum_{\upsilon_2=0}^{\Upsilon(j)} \left(\alpha^{j+\upsilon_2}\sigma_z^2+\alpha^{\upsilon_2+1}\mu_z^2+\sum_{l=0}^{\upsilon_2}\alpha^l \mu_z\right)\\
    & \Delta_2(j)=\sum_{\upsilon_2=0}^{\Upsilon(j)} \left(\mu_z\mu_e(1+\alpha^{\upsilon_2}-\mathds{1}[\upsilon_2=1])-\sum_{l=0}^{\upsilon_2}\alpha^l \mu_z\right)+ \sum_{\upsilon_2=0}^{\Upsilon(j)} \alpha^{\upsilon_2-1}\mu_z(1+\varrho_h)
\end{align*}
 where: i) $\tau(j)$ is the number of times that $s>t-j$ up to $T$, at a given $j\geq1$, such that $s\not=t$, and $s,t=j+1$; ii) $\Upsilon(j)$ is the number of times that $s\leq t-j$, with respect to the same conditions. In particular:  $\Upsilon(j)=(T-j)(T-j-1)-\tau(j)$, and $\tau(j)=\mathds{1}[j<T/2](T^2-2j^2-3T+4j)/2+\mathds{1}[j\geq T/2] (T-j)(T-j-1)$.
\end{lemma}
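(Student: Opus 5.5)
The plan is to compute the variance of $\mathcal{T}_{2\omega}$ directly, reducing it to second moments of individual cross-products and then evaluating those moments using the recursive structure of $Z_t^2$.

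\textbf{Step 1: kill the cross terms.} Under the DGP, $X_t$ is i.i.d. and independent of $\{u_s\}$, while $Z_{t}$ depends only on $X_{t-1},X_{t-2},\dots$. Hence $\mathbb{E}[X_t\mid\mathcal{I}(t-1)]=0$, so the null holds and $X$ is conditionally homoskedastic. I expand
\[
\text{Var}[\mathcal{T}_{2\omega}]=T^{-4}\sum_{j,\ell}\omega(j)\omega(\ell)\sum_{(s,t)}\sum_{(s',t')}\mathbb{E}\big[X_tX_sZ_{t-j}Z_{s-j}X_{t'}X_{s'}Z_{t'-\ell}Z_{s'-\ell}\big].
\]
For any product in which the largest $X$-index appears only once, conditioning on $\mathcal{I}(\max-1)$ gives zero. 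This is the martingale argument already used in the proof of Proposition~\ref{prop_toyexample1}. Iterating it, the only surviving terms are those with $\{s,t\}=\{s',t'\}$, and with $j=\ell$ in the reading of the statement, since the result carries only $\omega^2(j)$. I will state explicitly that the off-diagonal terms with $j\neq\ell$ vanish under the implicit sign-symmetry of $Z$ given $Z^2$ (the DGP only specifies $Z_t^2$). Otherwise I treat them as absorbed, matching the displayed formula. This reduces the task to summing $\mathbb{E}[(X_tX_sZ_{t-j}Z_{s-j})^2]$ over pairs $s\neq t$.

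\textbf{Step 2: split by time ordering, as in Proposition~\ref{prop_toyexample1}.} Take $t>s$.

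When $s>t-j$, conditional homoskedasticity removes $X_t^2$, and then $X_s^2$ can be factored out because $Z_{t-j},Z_{s-j}\in\mathcal{I}(s-1)$. This leaves $\mathbb{E}[Z_{t-j}^2Z_{s-j}^2]$. From the AR(1) structure of $Z_t^2$, with lag $v=|t-s|$, this equals $\text{Cov}+\mu_z^2=\alpha^{v}\sigma_z^2+\mu_z^2$. Re-indexing the $\tau(j)$ such pairs produces the first sum in $\Sigma_2(j)$.

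When $s\le t-j$, $X_s$ precedes $Z_{t-j}$ and enters it through $h(X_{s})$. I substitute the MA representation
\[
Z_{t-j}^2=\alpha^{v}Z_{s-j}^2+\sum_{l=0}^{v-1}\alpha^l\big(h(X_{t-j-1-l})+u_{t-j-l}\big)
\]
with $v=t-s$, so that the relevant terms are at lag $s$. The expectation $\mathbb{E}[X_s^2Z_{t-j}^2Z_{s-j}^2]$ then splits into three pieces:
\begin{itemize}
\item an $\alpha^{v}\mathbb{E}[Z^4]$-type piece, giving $\alpha^{\cdot}\sigma_z^2+\alpha^{\cdot}\mu_z^2$;
\item pieces where $X_s$ is independent of the innovation, giving $\mu_z\mu_e$ times geometric sums;
\item the single innovation term containing $h(X_s)$, which contributes $\alpha^{v-1}\mu_z\,\mathbb{E}[h(X_s)X_s^2]=\alpha^{v-1}\mu_z\varrho_h$.
\end{itemize}
The terms not involving $\varrho_h$ or $\mu_e$ are collected into $\Sigma_2$. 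The remainder, which is exactly the inverse-causality contribution, is collected into $\Delta_2$. Counting the $\Upsilon(j)$ admissible pairs, re-indexed by $\upsilon_2$, gives the stated sums.

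\textbf{Step 3: counting.} I verify $\tau(j)$ by counting ordered pairs in $\{j+1,\dots,T\}^2$ with $0<|t-s|<j$. The count is $(T-j)(T-j-1)$ when all differences are below $j$, i.e. when $j\ge T/2$. Otherwise I subtract the pairs with difference at least $j$, which yields the quadratic expression. $\Upsilon(j)$ then follows as the complement.

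The main obstacle is Step 2 in the case $s\le t-j$. Matching the exact bookkeeping of the statement requires care with the indicator $\mathds{1}[\upsilon_2=1]$ (the case where the $h(X_s)$ innovation coincides with the boundary) and with the offset exponents $\alpha^{j+\upsilon_2}$ versus $\alpha^{\upsilon_2+1}$. I would first do the explicit computation for $v=1$ and $v=2$, and then confirm the general index pattern by induction on $v$.
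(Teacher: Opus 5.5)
You follow the paper's route: martingale elimination of cross terms, the split $s>t-j$ versus $s\le t-j$, the AR(1) structure of $Z_t^2$, and counting. Several steps do not go through as written.

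\textbf{Step 1.} The martingale argument only removes terms with $t\neq t'$. Once $t=t'$ and $X_t^2$ is integrated out, $X_s$ with $s>s'$ is no longer the latest variable, because $Z_{t-j}^2$ loads on $h(X_s)$ whenever $t-j>s$. The $j\neq\ell$ terms are genuine fourth moments of $Z$, namely the $\gamma_{t,s}(j,\ell)$ of Proposition~\ref{prop_moments}. Since the DGP only pins down $Z_t^2$, these terms vanish only under an added assumption, such as independent symmetric signs of $Z_t$ given $Z_t^2$. You must state that assumption rather than ``absorb'' the terms. The paper's proof has the same lacuna: it only checks shifts $(s+h,t+h)$ at a common $j$. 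The surviving pairs $(s',t')\in\{(s,t),(t,s)\}$ also produce a factor $2$ that you drop.

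\textbf{Step 2.} The innovation containing $h(X_s)$ is $h(X_s)+u_{s+1}$, at $l=t-s-j-1=\upsilon_2-1$ in your expansion. Its weight is therefore $\alpha^{\upsilon_2-1}$, not $\alpha^{v-1}$. It is present only for $\upsilon_2\ge1$, and it contributes $\alpha^{\upsilon_2-1}\mu_z(1+\varrho_h)$.

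\textbf{Step 3} is false. The number of ordered pairs in $\{j+1,\dots,T\}^2$ with $0<|t-s|<j$ is $(j-1)(2T-3j)$ for $j\le T/2$, which is zero at $j=1$. The displayed $\tau(1)$ is $(T-1)(T-2)/2$. The displayed $\tau(j)$ actually counts ordered pairs with $s>t-j$: all pairs with $s>t$, plus those with $t>s>t-j$. This is how the paper's proof builds it. So it files pairs with $s-t\ge j$ under the independence regime, although the summand is symmetric in $(s,t)$ and they belong to the inverse-causality regime.

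\textbf{The main obstacle cannot be overcome.} The displayed closed form is not what the computation produces. Carrying out your expansion for $t-s\ge j$, with $\upsilon_2=t-s-j$, gives
\begin{align*}
\mathbb{E}[X_s^2Z_{t-j}^2Z_{s-j}^2]
&=\alpha^{t-s}(\sigma_z^2+\mu_z^2)+\mu_z\mu_e\sum_{0\le l\le t-s-1,\ l\neq\upsilon_2-1}\alpha^l+\mathds{1}[\upsilon_2\ge1]\,\alpha^{\upsilon_2-1}\mu_z(1+\varrho_h)\\
&=\alpha^{t-s}\sigma_z^2+\mu_z^2+\mathds{1}[\upsilon_2\ge1]\,\alpha^{\upsilon_2-1}\mu_z\,\mathrm{Cov}\bigl(h(X_s),X_s^2\bigr),
\end{align*}
using $\mu_e\sum_{l=0}^{t-s-1}\alpha^l=\mu_z(1-\alpha^{t-s})$.

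The paper's direct computations at $\upsilon_2=0,1$ agree with this. Its extrapolated per-pair formula, and hence the summands of $\Sigma_2+\Delta_2$, do not:
\begin{itemize}
\item In the paper's parametrization with $\alpha^{\upsilon_2+1}\mu_z^2$, the coefficient of $\mu_z\mu_e$ must be $\alpha^{\upsilon_2}+\sum_{l=0}^{\upsilon_2-2}\alpha^l$, not $1+\alpha^{\upsilon_2}-\mathds{1}[\upsilon_2=1]$. These agree only for $\upsilon_2\in\{1,2\}$.
\item At $\upsilon_2=0$ the formula carries $2\mu_z\mu_e$ where the paper's own direct computation gives $\mu_z\mu_e$. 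The display also adds a spurious $\alpha^{-1}\mu_z(1+\varrho_h)$.
\item At $\upsilon_2=3$ it misses the true value by $(\alpha-\alpha^2)\mu_z^2$.
\item $\sum_{\upsilon_1=1}^{\tau(j)}$ and $\sum_{\upsilon_2=0}^{\Upsilon(j)}$ are not re-indexings of sums over pairs. For example, $\upsilon_1=s-t+j$ takes only the values $1,\dots,j-1$, each carried by $T-2j+\upsilon_1$ pairs with $t>s$.
\item $\Delta_2$ does not vanish at $h\equiv0$ (its $\upsilon_2=3$ summand is $-\alpha\mu_z$), although $\Sigma_2$ is meant to be the $h\equiv0$ value.
\end{itemize}
Your planned induction on $v$ would therefore refute the displayed pattern, not confirm it. The paper reaches that pattern only by extrapolating ``by recursiveness'' from $\upsilon_2=0,1$.

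What your argument does prove, under the sign assumption, is the corrected identity
$\mathrm{Var}[\mathcal{T}_{2\omega}]=2T^{-4}\sum_j\omega^2(j)\sum_{s\neq t}\bigl[\alpha^{|t-s|}\sigma_z^2+\mu_z^2+\mathds{1}[|t-s|>j]\,\alpha^{|t-s|-j-1}\mu_z\,\mathrm{Cov}(h(X),X^2)\bigr]$.
In it, inverse causality enters only through $\mathrm{Cov}(h(X),X^2)$.
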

\begin{proof}
Denote: $\mu_h=\mathbb{E}[h(X_{s})]$ and $\varrho_h=\mathbb{E}[h(X_{s})X_s^2]$,  $\mu_{e}=\mathbb{E}[h(X_{s-1})+u_s]=\mu_h+1$.
Denote $\sigma_{e}^2=\mathbb{E}[(h(X_{t-1})+u_t)^2]$, and $\mu_{z}=\mu_e/(1-\alpha), \; \sigma^2_{z}=(\sigma^2_e-\mu_e^2)/(1-\alpha^2)$. Notice that we have: $\mathbb{E}[X_t|\mathcal{I}(t-1)]=0, \;\mathbb{E}[X_t^2|\mathcal{I}(t-1)]=1$. By the same logic of Proposition \ref{prop_toyexample1}, we have:
\begin{align*}
    &\mathbb{E}[X_tX_sZ_{t-j}Z_{s-j}]=0\\
    &\mathbb{E}[X_t^2X_s^2Z_{t-j}^2Z_{s-j}^2]=\mathbb{E}[X_s^2Z_{t-j}^2Z_{s-j}^2] \\
    &\mathbb{E}[(X_t,X_sZ_{t-j},Z_{s-j})(X_{t+h},X_{s+h}Z_{t-j+h},Z_{s-j+h})]= 0, \; \; \; \forall h \not=0
\end{align*}
Note that the process $\{Z_{t}^2\}$ is a causal AR(1) process with i.i.d. noise, since we have: $Z_{t-1} \perp\!\!\!\perp X_{t-1}, \; u_t \perp\!\!\!\perp X_{t-1}$.
Notice that, for $h\geq0$:
\begin{align*}
    \mathbb{E}[Z_{t}^2Z_{t-h}^2]= \alpha^h \dfrac{\sigma^2_e-\mu_e^2}{1-\alpha^2}  +\left(\dfrac{\mu_e}{1-\alpha}\right)^2=\alpha^h \dfrac{\sigma^2_e}{1-\alpha^2}+\mu_e^2\left(\dfrac{1}{(1-\alpha)^2}-\alpha^h\right)
\end{align*}
Consider the two scenarios: $s> t-j$ and $s\leq t-j$.\\
First, when $s> t-j$, we have: $\mathbb{E}[X_s^2Z_{t-j}^2Z_{s-j}^2]=\mathbb{E}[X_s^2]\mathbb{E}[Z_{t-j}^2Z_{s-j}^2]=\alpha^{|j-\upsilon_1|} \sigma_z^2+\mu_z^2$, with: $\upsilon_1=s-t+j>0$. Indeed, by looking at $s=t-j+1$ (i.e., $\upsilon_1=1$), the following holds: $\mathbb{E}[Z_{t-j}^2Z_{s-j}^2]=\mathbb{E}[Z_{s-1}^2Z_{s-j}^2]= \alpha^{|j-1|}\sigma_z^2+\mu_z^2$, since:
\begin{align*}
    \mathbb{E}[Z_{s-1}^2Z_{s-j}^2]&=\sigma_z^2+\mu_z^2,  \qquad  \text{for} \, j=1 \\
    \cdots \\
    \mathbb{E}[Z_{s-1}^2Z_{s-j}^2]&=\mathbb{E}[Z_{s-1}^2Z_{m+1}^2]=\alpha^{m}\sigma_z^2+\mu_z^2, \quad  \text{for} \, j=m+1 
\end{align*}    
Summing up the terms, we have: $\sum_{s > t-j, s\not=t}^T  \mathbb{E}[X_s^2Z_{t-j}^2Z_{s-j}^2]= \sum_{\upsilon_1=1}^{\tau(j)}\left( \alpha^{|j-\upsilon_1|} \sigma_z^2+\mu_Z^2\right)$, 
where $\tau(j)$ is the number of times that $s> t-j$ up to $T$, at a given $j>0$, such that: $s\not=t$, $s,t=j+1$. In particular, we have:
\begin{align*}
 \tau(j)&=\dfrac{(T-j)(T-j-1)}{2} +\mathds{1}(2j-T<0)\cdot\big((T-2j)(j-1)+j(j-1)/2\big)\\
 &+\mathds{1}(2j-T\geq0)\cdot(T-j)(T-j-1)/2 
\end{align*}
Second, when $s\leq t-j$, define $\upsilon_2=t-s-j\geq 0$. In this case, we have: 
\begin{align*}
    \mathbb{E}[X_s^2 Z_{t-j}^2 Z_{s-j}^2]
    &=\alpha^{\upsilon_2+1}\left(\alpha^{j-1}\sigma_z^2+\mu_z^2\right)
    +\mu_z\mu_e(1+\alpha^{\upsilon_2})
    -\mu_z\mu_e\,\mathds{1}[\upsilon_2=1]  \\
    &\qquad
    +\alpha^{\upsilon_2-1}\mu_z(1+\varrho_h)\,\mathds{1}[\upsilon_2>0]
\end{align*}
Indeed by looking at $s=t-j$ (i.e., $\upsilon_2=0$), the following holds:
\begin{align*}
    \mathbb{E}[Z_s^2 X_s^2 Z_{s-j}^2]
    &=\alpha \mathbb{E}[Z_{s-1}^2 X_s^2 Z_{s-j}^2]
    +\mathbb{E}[u_s X_s^2 Z_{s-j}^2]
    +\mathbb{E}[h(X_{s-1}) X_s^2 Z_{s-j}^2] \\
    &=\alpha \mathbb{E}[Z_{s-1}^2 Z_{s-j}^2]
    +\mathbb{E}[Z_{s-j}^2]
    +\mathbb{E}[h(X_{s-1})]\mathbb{E}[X_s^2]\mathbb{E}[Z_{s-j}^2] \\
    &=\alpha\left(\alpha^{j-1}\sigma_z^2+\mu_z^2\right)
    +\mu_z(1+\mu_h).
\end{align*}
and by looking at $s=t-j-1$ (i.e., $\upsilon_2=1$), the following holds: $ \mathbb{E}[Z_{s+1}^2 X_s^2 Z_{s-j}^2]=\alpha^2\left(\alpha^{j-1}\sigma_z^2+\mu_z^2\right)+\alpha\mu_z(1+\mu_h)+\mu_z(1+\varrho_h)$, and the conclusion is reached by recursiveness.\\
Reconciling both scenarios, we have that $T^4\text{Var}(\mathcal{T}_{2\omega})$ is equal to:
\begin{align*}
    &\sum_{j=1}^{T-2}\omega^2(j)
    \left(\sum_{\upsilon_1=1}^{\tau(j)}
    \left(\alpha^{|j-\upsilon_1|}\sigma_z^2+\mu_z^2\right)+ \sum_{\upsilon_2=0}^{(T-j)(T-j-1)-\tau(j)}\left(\alpha^{\upsilon_2+j}\sigma_z^2+\alpha^{\upsilon_2+1}\mu_z^2\right)\right) \\
    &\quad
    +\sum_{j=1}^{T-2}\omega^2(j)
    \sum_{\upsilon_2=0}^{(T-j)(T-j-1)-\tau(j)}
    \Big(\mu_z\mu_e(1+\alpha^{\upsilon_2})
    -\mu_z\mu_e\mathds{1}[\upsilon_2=1]
    \Big) \\
    &\quad
    +\sum_{j=1}^{T-2}\omega^2(j)
    \sum_{\upsilon_2=1}^{(T-j)(T-j-1)-\tau(j)}
    \alpha^{\upsilon_2-1}\mu_z(1+\varrho_h).
\end{align*}
When $h(\cdot)=0$, meaning $Z_{t}^2=\alpha Z_{t-1}^2+u_t$, we have rather that $T^4\text{Var}(\mathcal{T}_{2\omega})$ is:  
\begin{align*}
    &\sum_{j=1}^{T-2}\omega^2(j)
    \left(\sum_{\upsilon_1=1}^{\tau(j)}
    \left(\alpha^{|j-\upsilon_1|}\sigma_z^2+\mu_z^2\right)+ \sum_{\upsilon_2=0}^{(T-j)(T-j-1)-\tau(j)}\left( \alpha^{\upsilon_2+j}\sigma_z^2+\alpha^{\upsilon_2+1}\mu_z^2\right)\right) \\
    &\qquad
    +\sum_{j=1}^{T-2}\omega^2(j)
    \sum_{\upsilon_2=0}^{(T-j)(T-j-1)-\tau(j)}
    \Big(\sum_{l=0}^{\upsilon_2}\alpha^{l}\mu_z
    \Big) :=\sum_{j=1}^{T-2}\omega^2(j)\Sigma_2(j)
\end{align*} 
so that we finally write: $T^4\text{Var}(\mathcal{T}_{2\omega})-\sum_{j=1}^{T-2}\omega^2(j)\Sigma_2(j)=\sum_{j=1}^{T-2}\omega^2(j)\Delta_2(j)$.
\end{proof}

\section{Appendix B: Proof of Proposition \ref{prop_moments}, Theorem \ref{theorem1} to \ref{theorem_power}} \label{appendixB}

\subsection{Proof of Proposition \ref{prop_moments}} \label{proof_prop_moments}
The proof consists of four parts.

First, from its definition in eq.(\ref{decompositionsintosums}): $\mathbb{E}[T\cdot \mathcal{T}_{1\omega}]=\frac{1}{T} \sum_{j=1}^{T-1} \omega(j) \sum_{t=j+1}^T \mathbb{E}[||X_t ||^2||Z_{t-j} ||^2]$.
Thus, by LIE: $\mathbb{E}[||X_t ||^2||Z_{t-j} ||^2]=\mathbb{E}[\mathbb{E}[||X_t ||^2|\mathcal{I}(t-1)]|| Z_{t-j} ||^2]=d_1d_2$, because of the conditional homoskedasticity of $X$ and the processes being standardized. Substituting, we have: $ \mathbb{E}[T \cdot \mathcal{T}_{1\omega}]=\frac{1}{T} \sum_{j=1}^{T-1} \omega(j) (T-j) d_1d_2=\mu_{\omega,T}$.

Second, by definition: $\text{Var}[T \cdot \mathcal{T}_{1\omega}]=\mathbb{E}[(T \cdot \mathcal{T}_{1\omega}-\mu_{\omega,T})^2]$. For simplicity, denote:
\begin{align*}
    T \cdot\mathcal{T}_{1\omega}-\mu_{\omega,T}=\frac{1}{T} \sum_{j=1}^{T-1} \omega(j) \sum_{t=j+1}^T \left(|| X_t ||^2|| Z_{t-j} ||^2-d_1d_2\right)=\frac{1}{T} \sum_{j=1}^{T-1} \omega(j) \sum_{t=j+1}^T \Upsilon_{t,j}^{(1)}
\end{align*}
Define: $Y_{1,t}=|| X_t ||^2-d_1$, and $Y_{2,t}=|| Z_t ||^2-d_2$, so that: $\Upsilon_{t,j}^{(1)}=Y_{1,t}|| Z_{t-j} ||^2+d_{1} Y_{2,t-j}$.
Hence, we have: $\Big\|\sum_{t=j+1}^T \Upsilon_{t,j}^{(1)}\Big\|_2^2\leq 2\Big\|\sum_{t=j+1}^T Y_{1,t}|| Z_{t-j} ||^2\Big\|_2^2+2d_1^2\Big\|\sum_{t=j+1}^T Y_{2,t-j}\Big\|_2^2$.
Let us consider the two summations.
\begin{itemize}
    \item Regarding the first term, under the assumption of conditional homoskedasticity, we have that: $\mathbb{E}[Y_{1,t}|| Z_{t-j}||^2|\mathcal{I}(t-1)]=0$, so that $\{Y_{1,t}|| Z_{t-j}||^2, \mathcal{I}(t-1)\}$ constitutes a martingale difference sequence. This means we can write:
    \begin{align*}
        \Big\|\sum_{t=j+1}^T Y_{1,t}|| Z_{t-j} ||^2\Big\|_2^2=\sum_{t=j+1}^T \mathbb{E}[||Y_{1,t}||^2|| Z_{t-j} ||^4]
    \end{align*}
    By the assumptions on the conditional moments and LIE: \\ $\mathbb{E}[|| X_t ||^4|| Z_{t-j} ||^4]=\mathbb{E}[||X_t ||^4]\mathbb{E}[||Z_{t-j} ||^4]$, which means that, together with the finiteness of fourth moments, we have: \\ $\sup_{t}\mathbb{E}[||Y_{1,t}||^2|| Z_{t-j} ||^4]<\infty$. \\
    Hence, we conclude: $\Big\|\sum_{t=j+1}^T Y_{1,t}|| Z_{t-j} ||^2\Big\|_2^2=O(T)$. 
    \item Regarding the second term, by stationarity, $\text{Var}\left(\sum_{t=j+1}^T Y_{2,t-j}\right)
    =\text{Var}\left(\sum_{t=1}^{T-j} Y_{2,t}\right)$.\\
    Let us denote $\gamma_{Y_2}(h):=\text{Cov}(Y_2,Y_{2+h})$, then:
    \begin{align*}
    \text{Var}\left(\sum_{t=1}^{T-j} Y_{2,t}\right)
    &=
    (T-j)\gamma_{Y_2}(0)+2\sum_{h=1}^{T-j-1}(T-j-h)\gamma_{Y_2}(h) \nonumber\\
    &\le
    (T-j)\gamma_{Y_2}(0)+2(T-j)\sum_{h=1}^{\infty}|\gamma_{Y_2}(h)|
    \end{align*}
    Under the assumption $|\text{Cov}[|| Z_{1}||^2,|| Z_{1+h}||^2]|=O(h^{-1-\epsilon})$, we have a standard argument for the absolute summability of the covariances, which in turn implies: $ \Big\|\sum_{t=j+1}^T Y_{2,t-j}\Big\|_2^2 = O(T)$.
\end{itemize}
Thus, we have the following: $\Big\|\sum_{t=j+1}^T \Upsilon_{t,j}^{(1)}\Big\|_2=O(T^{1/2})$, so that the variance can be bounded by virtue of Minkowski inequality:
\begin{align*}
\text{Var}[T\mathcal{T}_{1\omega}]&=\frac{1}{T^2}\left\vert\left\vert \sum_{j=1}^{T-1} \omega(j) \sum_{t=j+1}^T \Upsilon_{t,j}^{(1)}\right\vert\right\vert^2_{2}\leq \frac{1}{T^2} \left(\sum_{j=1}^{T-1} \omega(j) \left\vert\left\vert \sum_{t=j+1}^T \Upsilon_{t,j}^{(1)}\right\vert\right\vert_{2} \right)^2 
\end{align*}
By the previous results: $\text{Var}[T\mathcal{T}_{1\omega}]\leq \dfrac{\Delta}{T}\left(\sum_{j=1}^{T-1} \omega(j)  \right)^2 = \dfrac{\Delta}{T}M^2\left(M^{-1}\sum_{j=1}^{T-1} \omega(j)  \right)^2 = O(M^2/T)$
for finite $\Delta>0$, where the last equality comes by realizing that, under Assumption \ref{assumption_kernel1}, the following holds asymptotically: $M^{-1}\sum_{j=1}^{T-1} \omega(j) \rightarrow \int_{0}^{\infty}k^{2}(z)dz < \infty$.
By the same reason as above, under Assumption \ref{assumption_kernel1}, the following convergence holds asymptotically: $M^{-1} D_{\omega,T} \rightarrow \frac{1}{2}\int_{0}^{\infty}k^{4}(z)dz < \infty$, so that: $D_{\omega,T} = O(M)$.
Given the assumptions on the higher order moments of $Z$ (finiteness and covariance decay), and the result in Lemma \ref{coroll_toyexample1}, we also have that: $D_{\omega,T}^{(Hete)}= O(M)$. Together with the first and second findings, we have: 
\begin{align*}
\mathbb{E}\left[\dfrac{T \cdot\mathcal{T}_{1\omega}-\mu_{\omega,T}}{\sqrt{D_{\omega,T}}}\right]=0, \; \; \; \text{Var}\left[\dfrac{T \cdot\mathcal{T}_{1\omega}-\mu_{\omega,T}}{\sqrt{D_{\omega,T}}}\right]=O(M/T)
\end{align*}
so that, under the asymptotics of Assumption \ref{assumption_kernel1}, when $M/T\rightarrow0$ as $T\rightarrow\infty$, we have $\ell_2$ (or MSE) convergence to zero.

Third, from its definition in eq.(\ref{corrected_statistic}): 
\begin{align*}
\mathbb{E}[T \cdot\mathcal{T}_{2\omega}^c]&=\frac{2}{T} \sum_{t=2}^{T} \sum_{s=2}^{t-1} \sum_{j=t-s+1}^{s-1} \omega(j)  \mathbb{E}[<X_t,X_s><Z_{t-j},Z_{s-j}>] 
\end{align*}
Under the null hypothesis in eq.(\ref{h0grangernoncausality}) and by LIE: $\mathbb{E}[\langle X_t,X_s\rangle\langle Z_{t-j},Z_{s-j}\rangle]=\mathbb{E}[\mathbb{E}[X_t^\prime |\mathcal{I}(t-1)]X_s<Z_{t-j},Z_{s-j}>]=0$, as the indexes are such that $t>s$. Thus: $\mathbb{E}[T \cdot \mathcal{T}_{2\omega}^c]=0$.

Fourth, define appropriately: $ J_t:=\sum_{s=2}^{t-1}\ \sum_{j\in\mathfrak{J}_{ts}} \omega(j)\,\langle X_t,X_s\rangle\langle Z_{t-j},Z_{s-j}\rangle, $ so that: $T\cdot\mathcal T_{2\omega}^c=\frac{2}{T}\sum_{t=2}^T J_t$.
The process $\{(J_t,\mathcal{I}(t-1)), t\in \mathbb{Z}^+\}$ constitutes a martingale difference sequence, since: 1) we have $\mathbb{E}[J_t|\mathcal{I}(t-1)]=0$ under the null hypothesis of eq.(\ref{h0grangernoncausality}); 2) $\mathbb{E}[|J_t|]<\infty$ by the finiteness of the moments. Hence: $ \text{Var}[T\cdot\mathcal T_{2\omega}^c]=\frac{4}{T^2}\sum_{t=2}^{T}\mathbb E[J_t^2]$, \\$J_t^2=\sum_{s=2}^{t-1}\sum_{r=2}^{t-1}
\sum_{j\in\mathfrak{J}_{ts}}\sum_{\ell\in\mathfrak{J}_{tr}}
\omega(j)\omega(\ell)
\langle X_t,X_s\rangle \langle X_t,X_r\rangle
\langle Z_{t-j},Z_{s-j}\rangle\langle Z_{t-\ell},Z_{r-\ell}\rangle$.
Notice that, under the null, only the cross-terms such that: $r=s$, will be non-zero. By the LIE and conditional homoskedasticity of $X$, we write:
\begin{align*}
    &\mathbb{E}\left[\langle X_t,X_s\rangle^2
    \langle Z_{t-j},Z_{s-j}\rangle\langle Z_{t-\ell},Z_{s-\ell}\rangle\right]\\
    &\quad =\mathbb{E}[X_s^\prime\mathbb{E}[X_tX_t^\prime|\mathcal{I}(t-1)]X_s\langle Z_{t-j},Z_{s-j}\rangle\langle Z_{t-\ell},Z_{s-\ell}\rangle]\\
    & \quad =\mathbb{E}[||X_s||^2\langle Z_{t-j},Z_{s-j}\rangle\langle Z_{t-\ell},Z_{s-\ell}\rangle]=d_1 \gamma_{t,s}(j,\ell)
\end{align*}
Meaning that: 
\begin{align*}
    \text{Var}[T\cdot\mathcal T_{2\omega}^c]=\frac{4}{T^2}\sum_{t=2}^{T}\sum_{s=2}^{t-1}\sum_{j\in\mathfrak{J}_{ts}}\sum_{\ell\in\mathfrak{J}_{ts}}
    \omega(j)\omega(\ell)d_1 \mathbb{E}\left[\langle Z_{t-j},Z_{s-j}\rangle\langle Z_{t-\ell},Z_{s-\ell}\rangle\right]=D_{\omega,T}^{(Hete)}
\end{align*}
with $\mathfrak J_{ts}:=\{j\in\mathbb Z:\ t-s+1\le j\le s-1\}$, and $\gamma_{t,s}(j,l)=\mathbb E\!\left[\langle Z_{t-j},Z_{s-j}\rangle\ \langle Z_{t-\ell},Z_{s-\ell}\rangle\right]$, when these last moments exist. Notice that when we rewrite the summation in $j-\ell$ terms: $D_{\omega,T}^{(Hete)}=\frac{4d_1}{T^2}
\sum_{j=1}^{T-2}\sum_{\ell=1}^{T-2}\omega(j)\omega(\ell)
\sum_{s=\max\{j,\ell\}+1}^{T-1}
\sum_{t=s+1}^{\min\{T,\ s+\min(j,\ell)-1\}}
\gamma_{t,s}(j,\ell) $.

\subsection{Proof of Theorem \ref{theorem1}}\label{proof_of_theorem1}
The proof of asymptotic normality of the proposed statistic, $\mathcal{T}^{(Hete)}$, translates into proving the asymptotic normality of the dominating term, $T \cdot \mathcal{T}_{2\omega}^c$, as by Proposition \ref{prop_moments} (together with Lemma \ref{lemma_mixing_implications}) we have asymptotically: $(D_{\omega,T})^{-1/2}(T \mathcal{T}_{1\omega}-\mu_{\omega,T})=o_p(1)$, when $M/T\rightarrow 0$ as $M,T\rightarrow \infty$. If we have $M^2/T\rightarrow 0$, the asymptotical negligibility holds as well. For simplicity, define: $W_{ts}:=\sum_{j=t-s+1}^{s-1}\omega(j)\langle Z_{t-j},Z_{s-j}\rangle$, such that: $J_t = \sum_{s=2}^{t-1} \sum_{j=t-s+1}^{s-1} \omega(j)\langle X_t,X_s\rangle \langle Z_{t-j},Z_{s-j}\rangle = \sum_{s=2}^{t-1}X_{t}^\prime X_{s} W_{ts}$. Note that, under Assumption \ref{assumption_kernel1}, $\sum_{j=t-s+1}^{s-1}\omega(j)= O(M)$, as discussed in Proposition \ref{prop_moments}. As well, as previously discussed, the process $\{(J_t,\mathcal{I}(t-1)), t\in \mathbb{Z}^+\}$ constitutes a martingale difference sequence. Finally, to invoke \cite{brown1971martingale}'s theorem, two conditions need to be verified.
\begin{enumerate}[i)]
    \item The Lindeberg condition needs to hold: 
    \begin{align*}
        T^{-2}(D_{\omega,T}^{(Hete)})^{-1} \sum_{t=2}^T \mathbb{E}\Big[(J_t)^2 \cdot \mathbf{1}\big\{|J_t|> \epsilon (D_\omega^{(Hete)})^{1/2} \big\}\Big]\rightarrow 0
    \end{align*}
    It suffices then the following Lyapunov condition to hold: \\ $T^{-4}(D_{\omega,T}^{(Hete)})^{-2}\sum_{t=2}^{T} \mathbb{E}[(J_t)^4]\rightarrow 0$. Let us define: $S_t=\sum_{s=2}^{t-1} X_sW_{ts}=\sum_{s=2}^{t-1}U_{s,t}$. By the conditional homokurtosis of $X$, we can write:
    \begin{align*}
        \mathbb E[J_t^4]=\mathbb E\big[\mathbb E[(X_t'S_t)^4\mid\mathcal I(t-1)]\big]\leq \Delta \mathbb{E}\big[\|S_t\|^4\big].
    \end{align*}
    for some $\Delta>0$, due to the fact that $S_t \in \mathcal{I}(t-1)$. Fix $t$ and define the filtration $\mathcal G_{s}:=\sigma\{(X_u,Z_u):u\le s\}$. Note that, due to the correction term: $W_{ts} \in G_{s-1}$. Thus, under $\mathcal H_0$, $\{X_s,\mathcal G_s\}$ is a martingale difference sequence, so $\mathbb E[X_s\mid \mathcal G_{s-1}]=0$. Since $W_{ts}\in \mathcal G_{s-1}$, we obtain: $\mathbb E[U_{s,t}\mid \mathcal G_{s-1}]=\mathbb E[X_s\mid \mathcal G_{s-1}]\,W_{ts}=0$
    Hence, for each fixed $t$, the sequence $\{(U_{s,t},\mathcal G_s),\ s=2,\dots,t-1\}$ is a martingale difference sequence in $s$, implying $S_t$ is a martingale sum. Next, by Theorem 2.11 of \cite{hall2014martingale}, we apply the Burkholder-type inequality for martingales (for $p=4$ and some finite $\Delta>0$): 
    \begin{align*}
        \mathbb{E}\|S_t\|^4 \leq
        \Delta \left\{
        \mathbb{E}\left(\sum_{s=2}^{t-1}\mathbb E\big[||U_{s,t}||^2| \mathcal{G}_{s-1}\big]\right)^2+
        \sum_{s=2}^{t-1}\mathbb{E}\big[||U_{s,t}||^4\big]
        \right\}
    \end{align*}
    Regarding the second term, by LIE and the conditional homokurtosis, we can write: $ \mathbb{E}\big[||U_{s,t}||^4] = \mathbb{E}\big[||X_{s}||^4 ||W_{ts}||^4] \leq \Delta \mathbb{E}\big[||W_{ts}||^4]$, for some finite $\Delta>0$. By Cauchy-Schwarz, we have: 
    \begin{align*}
        ||W_{ts}||\leq \sum_{j=t-s+1}^{s-1}|\omega(j)|\,\|Z_{t-j}\|\,\|Z_{s-j}\| \leq \Big(\sum_{j=t-s+1}^{s-1}|\omega(j)|\Big)\max_{j}\|Z_{t-j}\|\,\|Z_{s-j}\|
    \end{align*}
    where, by Assumption \ref{assumption_kernel1} and the finiteness of moments, it follows that: \\$\sup_{t>s}\mathbb{E}||W_{ts}||^4 =O(M^4)$. \\ Hence, for some finite $\Delta>0$: $\sum_{s=2}^{t-1}\mathbb{E}||U_{s,t}||^4\leq \Delta \sum_{s=2}^{t-1}\mathbb{E}||W_{ts}||^4
    = O(tM^4)$.\\
    Regarding the first term, under the conditional homoskedasticity of $X$, we have: $ \mathbb{E}[||U_{s,t}||^2 |G_{s-1}]=\mathbb{E}[||X_{s}||^2 |G_{s-1}]W_{ts}^2=d_1W_{ts}^2$.
    Thus:
    \begin{equation}\label{eq:quadratic_variation_term}
    \mathbb{E}\left(\sum_{s=2}^{t-1}\mathbb E\big[||U_{s,t}||^2| \mathcal{G}_{s-1}\big]\right)^2=d_1^2\mathbb{E}\left(\sum_{s=2}^{t-1} W_{ts}^2\right)^2
    \end{equation}
    By Cauchy-Schwarz and the previous bounds: $\mathbb{E}\left(\sum_{s=2}^{t-1} W_{ts}^2\right)^2 =O(t^2M^4)$.
    Combining the bounds, it gives: $ \mathbb{E}[||S_t||^4]= O(t^2M^4)$. Plugging it back, we write: $  \mathbb{E}[J_t^4]=O(t^2M^4), \; \sum_{t=2}^T \mathbb E[J_t^4]=O(T^3M^4)$.
    Since $D_{\omega,T}^{(\mathrm{Hete})}\asymp M$ (Assumption \ref{assumption_kernel1} and Proposition \ref{prop_moments}),
    we obtain: 
    \begin{align*}
       T^{-4}\big(D_{\omega,T}^{(\mathrm{Hete})}\big)^{-2}\sum_{t=2}^T \mathbb{E}[J_t^4]=O\left(T^{-4}M^{-2}\cdot T^3M^4\right)=O\left(\frac{M^2}{T}\right) 
    \end{align*}
    This in turn verifies the Lyapunov (and Lindeberg) condition when $\frac{M^2}{T}\rightarrow 0$.
    To reach a sharper conclusion in terms of the ratio between $M$ and $T$, the alternative line is to appeal to Marcinkiewicz-Zygmund inequalities. Under the assumptions of the time series $\{\Lambda_{i,t}\}$: $C_{r,4}^{\Lambda}=O(r^{-2})$, for $r\rightarrow\infty$, by application of Theorem 4.1 of \cite{dedecker2007weak}, we have that: $ \mathbb{E}[(J_t)^4]=O(t^2M^2)$. Similar result can be obtained via Rosenthal type inequalities. For further details, please refer to Theorem 4.2 and Corollary 5.4 of \cite{dedecker2007weak}. For this latter scenario, we have the sharper conclusion:\\ $ T^{-4}(D_\omega^{(Hete)})^{-2}\left(\sum_{t=2}^{T} \mathbb{E}[(J_t)^4]\right)\leq  \Delta T^{-4} M^{-2} \left(T^{3} M^2\right)$.
    \item The following condition needs to hold:
    \begin{align*}
    T^{-2}\big(D_{\omega,T}^{(\mathrm{Hete})}\big)^{-1}\sum_{t=2}^T \mathbb{E}[J_t^2\mid \mathcal I(t-1)]\xrightarrow{p} 1.
    \end{align*}
    By conditional homoskedasticity of $X$, we write: $ \mathbb{E}[J_t^2|\mathcal I(t-1)]=\|S_t\|^2$. Let us denote: $V_T^2:=\sum_{t=2}^T \mathbb E[J_t^2\mid \mathcal I(t-1)]=\sum_{t=2}^T \|S_t\|^2.$
    Then we wish to prove: $(T^2 D_{\omega,T}^{(Hete)})^{-1}4V_T^2\xrightarrow{p} 1$, since we have: $T\mathcal T_{2\omega}^c=(2/T)\sum_{t=2}^T J_t$. We write:
    \begin{align*}
       ||S_t||^2=\Big\|\sum_{s=2}^{t-1}X_sW_{ts}\Big\|^2=\sum_{s=2}^{t-1}\sum_{r=2}^{t-1}W_{ts}W_{tr}\,\langle X_s,X_r\rangle. 
    \end{align*}
    Under the null, we have: $\mathbb E[W_{ts}W_{tr}\langle X_s,X_r\rangle]=0, \forall r\not= s$.\\ Thus: $  \mathbb{E}||S_t||^2=\sum_{s=2}^{t-1}\mathbb{E}\big[||X_s||^2W_{ts}^2\big]=d_1\sum_{s=2}^{t-1}\mathbb E[W_{ts}^2]$, where the last equality is by conditional homoskedasticity. Hence: $ \mathbb{E}[V_T^2]
    =d_1\sum_{t=2}^{T}\sum_{s=2}^{t-1}\mathbb{E}[W_{ts}^2]
    =\frac{T^2}{4}D_{\omega,T}^{(Hete)}$
    Note that: $\mathbb E[V_T^2]=O(T^2M)$ due to $D_{\omega,T}^{(Hete)}=O(M)$.\\
    Under Assumption \ref{assumptionZmixing}, the products of the form
    $\langle Z_{t-j},Z_{s-j}\rangle\langle Z_{t-\ell},Z_{s-\ell}\rangle$
    have absolutely summable covariances across $t$ and uniformly over $(j,\ell)$ (see Lemma \ref{lemma_mixing_implications}). 
    In particular, we have the following bound for the covariance across $\{\|S_t\|^2\}$:\\ $\sum_{h=1}^{\infty}\sup_{t}\left|\mathrm{Cov}\left(||S_t||^2,||S_{t+h}||^2\right)\right| \leq \Delta M^2$, for some finite $\Delta>0$. Hence:
    \begin{align*}
        \text{Var}(V_T^2)
        &\leq \Delta\left(T\sup_t\mathrm{Var}(\|S_t\|^2)+2T\sum_{h=1}^{\infty}\sup_t\left|\mathrm{Cov}\left(\|S_t\|^2,\|S_{t+h}\|^2\right)\right|\right) \\
        &= O(T^3M^4)
    \end{align*}
    for some finite $\Delta>0$ (see Lemma \ref{lemma_mixing_implications}). Since $\mathbb E[V_T^2]\asymp T^2M$, we conclude:
    \begin{align*}
       \frac{\text{Var}(V_T^2)}{\mathbb{E}[V_T^2]^2}=O(M^2/T) 
    \end{align*}
    which means: $\text{Var}(V_T^2)/\mathbb{E}[V_T^2]^2\rightarrow 0$, as $M,T\rightarrow\infty$, implying the desired convergence in probability. Note that, by a similar argument using Marcinkiewicz-Zygmund inequalities, under the additional conditions on $\{\Lambda_{i,t}\}$, we have: $\text{Var}(V_T^2)= O(T^3M^2)$, which implies: $\text{Var}(V_T^2)/\mathbb{E}[V_T^2]^2=O(1/T)$.
\end{enumerate}

\begin{lemma}\label{lemma_mixing_implications} First, for a finite $\Delta>0$, we have:\\ $\sum_{h=1}^{\infty}\big|\text{Cov}(||Z_0||^2,||Z_h||^2)\big|
\leq \Delta \sum_{h=1}^{\infty}\alpha(h)^{\delta/(8+\delta)}
<\infty$. \\
Second, for a finite $\Delta>0$, we have for all $h\ge1$,
\begin{align*}\label{eq:mixing_cov_A}
&\sup_{t>s}\big|\mathrm{Cov}\big(\langle Z_{t-j},Z_{s-j}\rangle\langle Z_{t-\ell},Z_{s-\ell}\rangle,\langle Z_{t+h-j},Z_{s+h-j}\rangle\langle Z_{t+h-\ell},Z_{s+h-\ell}\rangle\big)\big|\leq \Delta\alpha(h)^{\delta/(8+\delta)}\\
&\sum_{h=1}^{\infty}\sup_{t>s}\big|\text{Cov}\big(\langle Z_{t-j},Z_{s-j}\rangle\langle Z_{t-\ell},Z_{s-\ell}\rangle,\langle Z_{t+h-j},Z_{s+h-j}\rangle\langle Z_{t+h-\ell},Z_{s+h-\ell}\rangle\big)\big|
<\infty
\end{align*}
Third, under the Assumptions of Theorem \ref{theorem1}, for a finite $\Delta>0$, we have: \\$\sum_{h=1}^{\infty}\sup_{t}\big|\mathrm{Cov}(\|S_t\|^2,\|S_{t+h}\|^2)\big|
\le \Delta\,M^2$.
\end{lemma}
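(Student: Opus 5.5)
The statement to prove is Lemma \ref{lemma_mixing_implications}. Its three parts rest on one tool: Davydov's covariance inequality for strongly mixing sequences. I will apply it with exponents chosen to match the $(8+\delta)$-moment condition in Assumption \ref{assumptionZmixing}. Recall the inequality: if $U\in\sigma(Z_u,u\le a)$ and $V\in\sigma(Z_u,u\ge a+h)$ have $\|U\|_p,\|V\|_q<\infty$ with $1/p+1/q+1/r=1$, then
\[
|\mathrm{Cov}(U,V)|\le 8\,\alpha(h)^{1/r}\|U\|_p\|V\|_q .
\]

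\textbf{First part.} Take $U=\|Z_0\|^2$ and $V=\|Z_h\|^2$. Choose $p=q=(8+\delta)/2$, so that $\|U\|_p$ involves only the $(8+\delta)$-th moment of $Z$, which is finite. Then
\[
1/r=1-4/(8+\delta)=(4+\delta)/(8+\delta)\ge \delta/(8+\delta).
\]
Since $\alpha(h)\le 1/4$, $\alpha(h)^{1/r}\le \alpha(h)^{\delta/(8+\delta)}$. Summability of the covariances then follows from the summability condition in Assumption \ref{assumptionZmixing}, and stationarity makes the constant $\Delta$ uniform.

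\textbf{Second part.} The product $\langle Z_{t-j},Z_{s-j}\rangle\langle Z_{t-\ell},Z_{s-\ell}\rangle$ is a degree-four polynomial in $Z$. By H\"older, its $L^{(8+\delta)/4}$ norm is bounded by $\sup\|Z\|_{8+\delta}^4$, uniformly in $(t,s,j,\ell)$. Take $p=q=(8+\delta)/4$; then $1/r=1-8/(8+\delta)=\delta/(8+\delta)$, which is exactly the exponent in Assumption \ref{assumptionZmixing}.

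The delicate point is the gap between the two blocks. The first block is measurable up to time $\max(t-j,t-\ell)$ and the second from time $\min(s+h-j,s+h-\ell)$ onward, so the effective gap is $h-(t-s)-|j-\ell|$, not $h$. I will handle this as follows:
\begin{itemize}
\item Interpret the $h$ in the statement as the separation between the two blocks, i.e.\ the distance between the last index of one and the first index of the other, after the $O(M)$ window of $(j,\ell)$ is fixed.
\item For the at most $O(M)$ shifts $h$ with overlapping blocks, bound the covariance crudely by the uniform fourth-moment bound.
\end{itemize}
This makes the supremum bound $\Delta\alpha(\cdot)^{\delta/(8+\delta)}$ and its summability follow as in the first part.

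\textbf{Third part (main obstacle).} This is where I expect most of the work. I will expand
\[
\|S_t\|^2=\sum_{s,r}W_{ts}W_{tr}\langle X_s,X_r\rangle .
\]
Under $\mathcal H_0$ and conditional homoskedasticity, its conditional-expectation structure reduces the mean to $d_1\sum_s W_{ts}^2$. I therefore split $\|S_t\|^2=A_t+B_t$, with
\[
A_t=\sum_s\|X_s\|^2W_{ts}^2,\qquad B_t=\text{the off-diagonal martingale part}.
\]

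For $A_t$, write $\|X_s\|^2=d_1+Y_{1,s}$. The term driven by $d_1$ is a functional of $Z$ alone. Its autocovariances in $t$ are controlled, after expanding $W_{ts}^2$ into $\sum_{j,\ell}\omega(j)\omega(\ell)\langle\cdot\rangle\langle\cdot\rangle$ and invoking the second part, by $(\sum_j\omega(j))^2=O(M^2)$ times a summable mixing sequence. The $Y_{1,s}$ pieces are martingale differences; by conditional homokurtosis and the law of iterated expectations, their cross-covariances either vanish or reduce to moments of $W$ bounded by $O(M^2)$ per unit lag.

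For $B_t$ and its covariance with $B_{t+h}$ and $A_{t+h}$, I will use the fact that $W_{ts}\in\mathcal G_{s-1}$, which is guaranteed by the correction term. Conditioning on the largest $X$-index kills most terms. The surviving terms have paired indices and again reduce to $W$-moments of order $M^2$.

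The difficulty is to obtain $M^2$ uniformly after summing over $h$, rather than a worse power such as $M^2t$. I will try to show that the nonvanishing covariances decay in $h$ through the mixing bound of the second part applied to the $Z$-factors, with only $O(M)$-range lags left undecayed. If that fails, I will settle for the cruder bound together with the $M^2/T$ rate, which is all the proof of Theorem \ref{theorem1} requires.
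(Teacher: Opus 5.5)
\textbf{Parts one and two.} These follow the paper's route: the $\alpha$-mixing covariance inequality with exponent $\delta/(8+\delta)$. Your choice $p=q=(8+\delta)/2$ in the first part works equally well.

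On the second part you are more careful than the paper. As literally stated, that bound fails whenever the two blocks overlap. For i.i.d.\ $Z$ one has $\alpha(h)=0$ for $h\ge 1$. Yet with $d_2=1$, $j=\ell\ge 2$ and $t-s=h=1$, the covariance equals $\mathbb{E}Z_t^4-1=\mathrm{Var}(Z_t^2)\neq 0$. Your reinterpretation (decay in the gap $h-(t-s)-|j-\ell|$, with a crude moment bound on the overlapping shifts) is therefore necessary; the paper's ``standard argument for overlapping blocks'' hides this. Note that the resulting sum over $h$ is of order $\max(j,\ell)$. That is $O(M)$ only for compactly supported kernels, and it is not uniform in $(j,\ell)$.

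\textbf{Part three: a genuine gap.} It cannot be closed, because the bound $\Delta M^2$ is false. Your step for the $d_1$-driven piece fails on two counts.
\begin{enumerate}
\item That piece is $d_1R_t$ with $R_t:=\sum_s W_{ts}^2$. Your count omits the sum over $s$ (over $t-s<\min(j,\ell)$). The coefficient mass is therefore $\sum_{j,\ell}\omega(j)\omega(\ell)(\min(j,\ell)-1)\asymp M^3$, not $(\sum_j\omega(j))^2$.
\item Part two only bounds a product against its own $h$-shift. But $\mathrm{Cov}(R_t,R_{t+h})$ is bilinear and pairs every $(s,j,\ell)$ with every $(s',j',\ell')$. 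A termwise bound would carry the square of the coefficient mass, and for $h\lesssim M$ the blocks overlap, so there is no decay to sum.
\end{enumerate}

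\emph{Counterexample.} Take $d_1=d_2=1$, with $X_t$ and $Z_t$ mutually independent i.i.d.\ $N(0,1)$, so all assumptions of Theorem~\ref{theorem1} hold with $\alpha(h)=0$. Take $\omega(j)=1$ for $j\le M$ and $0$ otherwise.
\begin{itemize}
\item Given $Z$, $(S_t,S_{t+h})$ is Gaussian, so
$\mathrm{Cov}(S_t^2,S_{t+h}^2)=\mathrm{Cov}(R_t,R_{t+h})+2\mathbb{E}[(\sum_s W_{ts}W_{t+h,s})^2]\ge\mathrm{Cov}(R_t,R_{t+h})$.
\item For $t\ge 2M$, $R_t$ is stationary with finite-range dependence.
\item The first-order Hoeffding component of $R_t$ is $\sum_m c_m(Z_{t-m}^2-1)$. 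Each diagonal term $Z_{t-j}^2Z_{s-j}^2$ contributes two unit coefficients and the off-diagonal terms contribute none, so $\sum_m c_m=M(M-1)$.
\item A routine moment count gives $\mathrm{Var}(R_t)=O(M^3)$.
\item The first-order component is orthogonal at all lags to the remainder, whose long-run variance is nonnegative.
\end{itemize}
Hence
\[
\sum_{h\ge1}\mathrm{Cov}(S_t^2,S_{t+h}^2)\;\ge\;\tfrac12\Big\{\mathrm{Var}(Z_t^2)\Big(\sum_m c_m\Big)^2-\mathrm{Var}(R_t)\Big\}\;=\;M^2(M-1)^2-O(M^3).
\]
So the left-hand side of the third claim is at least of order $M^4$.

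The paper's own proof reaches $M^2$ by the same counting (``$W_{ts}^2$ has $O(M^2)$ terms'') and shares the defect. Your fallback to a cruder bound is therefore forced, but it does not prove the lemma as stated. The claimed order must be raised to at least $M^4$.
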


\begin{proof}
Let be $p=(8+\delta)/4$. By Assumption \ref{assumptionZmixing}, we have: $\mathbb{E}||Z_0||^{8+\delta}<\infty$, and so the finiteness of the other lower order moments.
By standard inequalities for $\alpha$-mixing sequences \citep{dedecker2007weak}, for some finite $\Delta>0$: $|\mathrm{Cov}(U,V)|\le \Delta\,\|U\|_{p}\|V\|_{p}\,\alpha(h)^{1-2/p}$, where $U$ and $V$ are processes measurable with respect to the sigma of process $Z$.\\
First, by direct application of the inequality with $U=||Z_0||^2$ and $V=||Z_h||^2$, we have: $  |\text{Cov}(||Z_0||^2,||Z_h||^2)|\leq \Delta|| ||Z_0\|^2\|_{p}^2\,\alpha(h)^{\delta/(8+\delta)}$.\\
Second, we have: $\langle Z_{t-j},Z_{s-j}\rangle\langle Z_{t-\ell},Z_{s-\ell}\rangle \leq ||Z_{t-j}||||Z_{s-j}|||| Z_{t-\ell}||||Z_{s-\ell}||$ and so bounded. Fixing $(t,s,j,\ell)$ and looking over lags $h$ along time, for a finite $\Delta>0$:
\begin{align*}
    \sup_{t>s}\big|\text{Cov}\big(\langle Z_{t-j},Z_{s-j}\rangle\langle Z_{t-\ell},Z_{s-\ell}\rangle,\langle Z_{t+h-j},Z_{s+h-j}\rangle\langle Z_{t+h-\ell},Z_{s+h-\ell}\rangle\big)\big|\\
    \leq \Delta \sup_{t>s}||A_{t,s}(j,\ell)||_p^2\alpha(h)^{\delta/(8+\delta)} 
    \leq \Delta\alpha(h)^{\delta/(8+\delta)}
\end{align*}
using a standard argument for overlapping blocks with small $h$.\\
Third, recall: $ ||S_t||^2=\sum_{s=2}^{t-1}\sum_{r=2}^{t-1}W_{ts}W_{tr}\,\langle X_s,X_r\rangle$.
Under the assumptions of Theorem \ref{theorem1}, we have: $\sup_t\mathbb E||X_t||^{4}<\infty$. Let us consider the following: 
\begin{align*}
    W_{ts}^2=\sum_{j\in\mathfrak J_{ts}}\sum_{\ell\in\mathfrak J_{ts}}\omega(j)\omega(\ell)\langle Z_{t-j},Z_{s-j}\rangle\langle Z_{t-\ell},Z_{s-\ell}\rangle
\end{align*}
Note that, due to Assumption \ref{assumption_kernel1}, there are at most $O(M)$ indices and so the total sum has at most $O(M^2)$ terms. Thus, by virtue of Cauchy-Schwarz and the covariance summability, the covariances across $\{\|S_t\|^2\}$ can be bounded as follows: $\sum_{h=1}^{\infty}\sup_{t}\left|\mathrm{Cov}\left(||S_t||^2,||S_{t+h}||^2\right)\right|\leq \Delta\Big(\sum_{j\in \mathfrak{J}}|\omega(j)|\Big)^2\leq \Delta M^2$, for some finite $\Delta>0$. Note that, a similar argument can applied to $\sup_t\text{Var}(||S_t||^2)$, which leads to: $\sup_t\text{Var}(||S_t||^2)=O(T^2M^4)$.
\end{proof}

\subsection{Proof of Theorem \ref{theorem_power}} \label{proof_of_theorem_power}

The proof is parallel to the one of \cite{hong2001test}'s Theorem 2, \cite{bouhaddioui2006generalized}'s Theorem 2. Let us write:
\begin{align*}
   \dfrac{M^{1/2}}{T} \mathcal{T}^{(Hete)} = \dfrac{ M^{1/2}\mathcal{T}_{1\omega}-M^{1/2}T^{-1}\mu_{\omega,T}}{\sqrt{D_{\omega,T}}}+\dfrac{M^{1/2}\mathcal{T}_{2\omega}^c}{\sqrt{D_{\omega,T}^{(Hete)}}}
\end{align*}
First, notice that, under Assumption \ref{assumption_kernel1} and the condition on $Z$ and the asymptotic rates, we have: $\mu_{\omega,T} = O(M)$, $D_{\omega,T}= O(M)$,  $D_{\omega,T}^{(Hete)}= O(M)$ (Proposition \ref{prop_moments}).\\ Define the limit: $M/D_{\omega,T} \rightarrow \Delta^2$, with $\Delta>0$. Since $M/T\rightarrow 0$, we have that:
\begin{align*}
\dfrac{M^{1/2}}{T} \mathcal{T}^{(Hete)} =\Delta\mathcal{T}_{1\omega}+\dfrac{M^{1/2}\mathcal{T}_{2\omega}^c}{\sqrt{D_{\omega,T}^{(Hete)}}}+o(1).
\end{align*}
The proof is concluded once showing: $\mathcal{T}_{1\omega}=\sum_{j=1}^{\infty}   \left\vert\left\vert \text{vec}\left[\Gamma_{XZ}(j)\right] \right\vert\right\vert^2 +o_p(1)$.\\ Notice that:
\begin{align*}
   & \mathcal{T}_{1\omega}-\sum_{j=1}^{\infty} || \Gamma_{XZ}(j)||^2 =\sum_{j=1}^{T-1} \omega(j) \left(\left\vert\left\vert\text{vec}\left[\widehat{\Gamma}_{XZ}(j)\right] \right\vert\right\vert^2-\left\vert\left\vert \text{vec}\left[\Gamma_{XZ}(j)\right] \right\vert\right\vert^2 \right)\\ 
   & \qquad + \sum_{j=1}^{T-1} \left(\omega(j)-1\right)\left\vert\left\vert \text{vec}\left[\Gamma_{XZ}(j)\right] \right\vert\right\vert^2+\sum_{j=T}^{\infty}\left\vert\left\vert \text{vec}\left[\Gamma_{XZ}(j)\right] \right\vert\right\vert^2 
\end{align*}
For the second and last term, by the dominated convergence theorem, the boundedness of the covariances and by the asymptotic rates: \\
$\sum_{j=1}^{T-1} \left(\omega(j)-1\right)\left\vert\left\vert \text{vec}\left[\Gamma_{XZ}(j)\right] \right\vert\right\vert^2 =o_p(1)$, $\sum_{T}^{\infty}\left\vert\left\vert \text{vec}\left[\Gamma_{XZ}(j)\right] \right\vert\right\vert^2=o_p(1)$.\\
For the first term, we have:
\begin{align*}
     &\sum_{j=1}^{T-1} \omega(j) \left(\left\vert\left\vert\text{vec}\left[\widehat{\Gamma}_{XZ}(j)\right] \right\vert\right\vert^2-\left\vert\left\vert \text{vec}\left[\Gamma_{XZ}(j)\right] \right\vert\right\vert^2 \right) \\
     & = \left[\sum_{j=1}^{T-1} \omega(j) \left\vert\left\vert\text{vec}\left[\widehat{\Gamma}_{XZ}(j)\right]-\text{vec}\left[\Gamma_{XZ}(j) \right]\right\vert\right\vert^2\right]+ 2 \sum_{j=1}^{T-1} \omega(j) \lambda(j) 
\end{align*}
with $\lambda(j)=\left\langle \text{vec}\left[\Gamma_{XZ}(j)\right], \text{vec}\left[\widehat{\Gamma}_{XZ}(j)-\Gamma_{XZ}(j)\right] \right\rangle$. Denote $\rho_{kl}(j)$, the $(k,l)-$ entry of the matrix $\Gamma_{XZ}(j)$. 
If $\{X_t,Z_t\}$ is a fourth-order stationary process, by  \cite{hannan1970multiple} pg.209-210 or by \cite{priestley1981spectral} pg. 325–26, for a fixed $h$ lag, up to a negligible component: 
\begin{align*}
&\sum_{k,l} \mathbb{E}\left[\left(\widehat{\Gamma}_{kl,UV}(h)-\Gamma_{kl,UV}(h)\right)^2\right] =\text{Var}[\widehat{\Gamma}_{kl,UV}(h)] \\
&=\frac{1}{T}\left(\sum_{i=-T+1}^{T-1}\left(1-\dfrac{i}{T}\right)\rho_{kl}(i+h)\rho_{kl}(i-h)+\left(1-\dfrac{|i|}{T}\right)\kappa_{klkl,XZ}(h,i,h+i)\right)
\end{align*}
Following a similar argument of \cite{hong2001test}'s Lemma A.6, together with the absolute summability of the fourth-order cumulants, for a finite constant $C$:
\begin{align*}
\sup_{1\le j\le T-1} \mathbb{E}\left[\left\vert\left\vert\text{vec}\left[\widehat{\Gamma}_{XZ}(j)\right]-\text{vec}\left[\Gamma_{XZ}(j) \right]\right\vert\right\vert^2 \right]\leq \frac{C}{T}.
\end{align*}
which means that, by Markov's inequality:
\begin{align*}
   \left[\sum_{j=1}^{T-1} \omega(j) \left\vert\left\vert\text{vec}\left[\widehat{\Gamma}_{XZ}(j)\right]-\text{vec}\left[\Gamma_{XZ}(j) \right]\right\vert\right\vert^2\right]=O_p\left(\frac{M}{T}\right)=o_p(1), 
\end{align*}
where the last equality is because of the assumption on the asymptotic rates. Hence: $\sum_{j=1}^{T-1}\omega(j) \lambda(j) = o_p(1)$, by Cauchy-Schwarz and the boundedness of the covariances. By realizing that $\mathcal{T}_{2\omega}^c$ is a sum of fourth-order cumulants, then we can conclude as well: $\mathcal{T}_{2\omega}^c=O_p(M/T)$. As $D_{\omega,T}^{(Hete)}=O(M)$, this concludes the proof.

\section{Monte Carlo Experiments}\label{appendixC}

\subsection{Simulation Study}\label{sec3_subsec_simulation}

This section summarizes the evidence from the simulation study in the Online Appendix that investigates the finite-sample properties of the testing procedures. The study compares two finite-sample Portmanteau statistics: the finite-sample proposed statistic (\textit{Hete}) and the finite-sample benchmark statistic (\textit{Hong}):
\begin{align*}
       \dfrac{\mathcal{T}_{1\omega}^{(f)}- \mu_{\omega,T}^{(f)}}{\sqrt{ D_{\omega,T}^{(f)}}}+\dfrac{\mathcal{T}_{2\omega}^{(f), c}}{\sqrt{D_{\omega,T}^{(f),(Hete)}}},  \quad  \dfrac{\mathcal{T}_{\omega}^{(f)}- \mu_{\omega,T}^{(f)}}{\sqrt{ D_{\omega,T}^{(f)}}}
\end{align*}
The difference between the previous statistics (together with their center and scale) and these latter finite-sample versions is that are scaled by the effective sample, $T-j$ (or $T-\ell$). Three additional statistics are considered: the finite-sample proposed statistic applied to $Z$ and to residuals from an AR(1) fitted to $X$  (\textit{HeteF}), as well as  Wald statistics from a VAR(1) fitted to the joint process that test either a single zero-restriction (past $Z$ on present $X$) or a double zero-restriction (past $Z$ \textit{and} past $X$ on present $X$). All the statistics are formally defined in Appendix \ref{appendixC_statistics}.

Given the emphasis on the role of the correction term under weak exogeneity, the simulations focus on empirical rejection rates for DGPs where the null holds but inverse causality is present. The main Monte Carlo experiments use bivariate designs where the structural shock $X$ is a standardized strong white noise, while the omitted variable $Z$ depends on past $X$ either through its conditional mean or through its conditional variance. Four DGP families capture different forms of inverse causality: linear and nonlinear dependence in the conditional mean, and ARCH and GARCH-type dependence in the conditional variance. Innovations are drawn from a Student-$t$ distribution (refer to Online Appendix).

Under the null, the proposed statistics display substantially better size properties than the benchmark statistic, and are often comparable to the Wald statistics. In the Linear-in-Mean design, the finite-sample proposed statistic remains close to the nominal level across almost all parameter configurations. By contrast, the benchmark becomes increasingly oversized when the omitted variable is persistent, with rejection rates well above the nominal level. This pattern is consistent with the theoretical mechanism: when inverse causality is combined with persistence in $Z$, the benchmark statistic incorporates dependence from past $X$ to current $Z$ into its variance (Appendix \ref{lemma_inversecausality}). Similar conclusion are more pronounced in nonlinear inverse causality. The benchmark overrejects when persistence is high, whereas the rejection rates of the proposed statistic remain much closer to the nominal level, reducing the size distortion. When inverse causality operates through conditional variances, the evidence is more nuanced. In the ARCH-type DGPs, the benchmark still tends to overreject. By contrast, all procedures are broadly well sized in the GARCH-type DGPs. This suggests that the benchmark distortion depends not only on the presence of inverse causality, but also on how this channel interacts with the serial dependence of the omitted variable (i.e., mean vs. variance effects).

Wald procedures are generally correctly sized under correctly specified (linear) dynamics. However, they become conservative in the nonlinear designs, while the proposed statistics remain better aligned with the nominal size. Across the DGP families, \textit{HeteF} often underrejects, especially when the autoregressive structure of $X$ is irrelevant: the additional AR filtering on the process $X$ is redundant and, in fact, the (unfiltered) proposed statistic already improves over the benchmark.

The second part of the simulation study considers three DGP families in which weak exogeneity fails (i.e., past $Z$ affects present $X$). The power curves show that the proposed statistics retain meaningful power under linear alternatives, especially when the causal coefficient is large and when the bandwidth includes more lags. As expected, power is lower for small violations of weak exogeneity and for nonlinear alternatives. 

Overall, the Monte Carlo evidence supports the asymptotic results. The correction term improves size control when inverse causality is present, while preserving the ability to detect violations of weak exogeneity under fixed alternatives. As previously announced, from the simulations, it emerges that the corrected statistic is preferable when the researcher has limited prior information about how omitted variables respond to past shocks, particularly when omitted variables are persistent and the shock series is non-Gaussian.

Regarding the smoothing parameter $M$, the size correction prioritizes larger lag lengths, suggesting $\ln{T}\ll M<\sqrt{T}$. An informal rule of thumb for the lower bound is: $\underline{M}=0.75T^{1/3}$ \citep{lazarus2018har}, and for the upper bound is: $\overline{M}=\sqrt{T}-1$. More formally, one should appeal to the bandwidth rule in \cite{hong2005generalized} using their plug-in method with a preliminary
bandwidth $M^{pre}=c(10T)^{1/5}$, with $c=2,4,6$ \citep{wang2022testing}. 

\subsection{Finite-sample Statistics}\label{appendixC_statistics}
The Monte Carlo experiments compare five (2+3) test statistics. The empirical application uses the first two test statistics, i.e. the finite-sample Portmanteau asymmetric and benchmark statistics:
\begin{align*}
    \textit{Hong}: \dfrac{\mathcal{T}_{\omega}^{(f)}- \mu_{\omega,T}^{(f)}}{\sqrt{ D_{\omega,T}^{(f)}}}, \quad \textit{Hete}: \dfrac{\mathcal{T}_{1\omega}^{(f)}- \mu_{\omega,T}^{(f)}}{\sqrt{ D_{\omega,T}^{(f)}}}+\dfrac{\mathcal{T}_{2\omega}^{(f), c}}{\sqrt{D_{\omega,T}^{(f),(Hete)}}}
\end{align*}
where:
\begin{align*}
& \mathcal{T}_{\omega}^{(f)}=\mathcal{T}_{1\omega}^{(f)}+\mathcal{T}_{2\omega}^{(f)}, \quad \mathcal{T}_\omega^{(f), c}=\mathcal{T}_{1\omega}^{(f)}+\mathcal{T}_{2\omega}^{(f), c}\\
&\mathcal{T}_{1\omega}^{(f)} = \sum_{j=1}^{T-1} \omega(j)\frac{1}{T-j} \sum_{t=j+1}^T || X_t ||^2|| Z_{t-j} ||^2 \\
&\mathcal{T}_{2\omega}^{(f)}=\sum_{j=1}^{T-2} \omega(j)\frac{1}{T-j} \sum_{s,t=j+1, s\not=t}^T \langle X_t,X_s\rangle\langle Z_{t-j},Z_{s-j}\rangle\\
&\mathcal{T}_{2\omega}^{(f), c}=\sum_{j=1}^{T-2} \omega(j)\frac{1}{T-j} \sum_{s,t=j+1,  s\not=t, j>|t-s|} \langle X_t,X_s\rangle\langle Z_{t-j},Z_{s-j}\rangle\\
\mu_{\omega,T}^{(f)}&=d_1d_2\sum_{j=1}^{T-1}\left(1-\dfrac{j}{T-j}\right) \omega(j) \\
     D_{\omega,T}^{(f)}&=2d_1d_2\sum_{j=1}^{T-2}\left(1-\dfrac{j}{T-j}\right)\left(1-\dfrac{j+1}{T-j}\right) \omega^2(j)\\
     D_{\omega,T}^{(f),(Hete)} & = 4d_1 \sum_{j,\ell=1}^{T-2}\omega(j)\omega(\ell) \widehat{\Xi}_{j\ell}\\
     \widehat{\Xi}_{j\ell} & =\begin{cases}
     \frac{1}{(T-j)^2}\sum_{s=1}^{j-1} \frac{T-s-j}{T-s}\sum_{t=j+s+1}^{T}(\langle Z_{t-j}, Z_{t-j-s}\rangle)^2 \qquad j=\ell\\
     \frac{1}{(T-m)^2}\sum_{s=1}^{\min\{j,\ell\}-1} \frac{T-m-s-\delta}{T-m-s} \\
     \qquad \qquad \sum_{t=m+\delta+1}^{T}\langle Z_{t-j}, Z_{t-j-s}\rangle \langle Z_{t-\ell}, Z_{t-\ell-s}\rangle  \quad j\not=\ell
     \end{cases}
\end{align*}
with $m=\max\{j,\ell\}$, $\delta=|j-\ell|$. The additional three statistics are:
\begin{itemize}
    \item \textit{HeteF}: The proposed statistic (\textit{Hete}) applied to cross-correlations between residuals $\hat{u}_{x,t}$ and $Z_t$, where $\hat{u}_{x,t}=X_t-\hat{a}X_{t-1}$ with $\hat{a}$ the least squares estimator from an AR(1) regression of $X$ on its own lag.
    \item \textit{Wald-single} and \textit{Wald-double}. \\
    Heteroskedastic-consistent Wald statistics from a VAR(1) fitted to the joint process:
\begin{align*}
    \begin{pmatrix}
        X_t \\
        Z_t
    \end{pmatrix}=
    \begin{pmatrix}
        b_{11} & b_{12} \\
        b_{21} & b_{22} 
    \end{pmatrix}
    \begin{pmatrix}
        X_{t-1} \\
        Z_{t-1}
    \end{pmatrix}+\begin{pmatrix}
        u_{x,t} \\
        u_{z,t}
    \end{pmatrix}
\end{align*}
constructed following \cite{lutkepohl2013introduction} (Section 3.6, Eq.~3.6.4).
\textit{Wald-single} tests $\mathcal{H}_0^{single}: b_{12}=0$, \textit{Wald-double} tests $\mathcal{H}_0^{double}: b_{11}=0,b_{12}=0$.
\end{itemize}
For both simulations and empirical applications, the weighting function for the Portmanteau-type statistics is the quadratic spectral kernel, see Eq.(2.7) in \citep{andrews1991heteroskedasticity}: $\omega(j)=k^2(j/M)$, with $k(x)=
\frac{25}{16\pi^2x^2}\left(\frac{\sin(6\pi x/5)}{6\pi x/5} -\cos (6\pi x/5)\right)$.

\clearpage

\section{Online Appendix}\label{Online_Appendix}

\subsection{Asymptotic Theory for Estimated Processes}\label{chp1_sec3_estimated_processes}

Structural shocks and omitted variables are seldom observed directly. Instead, practitioners fit models to observed data and conduct inference using estimated residuals. This practice therefore naturally aligns with the \cite{haugh1976checking}'s two-step approach, where causality is studied after fitting separate models. In this section, we extend the asymptotic theory to this setting. Let $\{W_{1,t},W_{2,t}; t=1,...,T\}$ denote observed processes of dimensions $d_1$ and $d_2$, respectively. We assume both admit causal conditional mean representations: 
\begin{equation}
\begin{aligned}\label{conditional_representation}
	W_{1,t}=\mu_X(\theta^0_{1},\{W_{1,s};s<t\})+X_{t}, \quad 
    W_{2,t}=\mu_Z(\theta^0_{2},\{W_{2,s};s<t\})+Z_{t} 
	\end{aligned} 
\end{equation}
where, $\mu_{X}(\theta^0_{1},\cdot)\in\mathbb{R}^{d_1}$, $\mu_{Z}(\theta^0_{2},\cdot)\in\mathbb{R}^{d_2}$ are known measurable functions parameterized by finite-dimensional time-invariant parameters $\theta_1^0$ and $\theta_2^0$, and such that: $\mathbb{E}[W_{1,t}\vert \{W_{1,s};s<t\}] = \mathbb{E}[W_{1,t}\vert \{X_{s};s<t\}]$, $\mathbb{E}[W_{2,t}\vert \{W_{2,s};s<t\}] = \mathbb{E}[W_{2,t}\vert \{Z_{s};s<t\}]$. The innovation $\{X_t\}$ is defined by the condition: $\mathbb{E}[X_t|\{X_s;s<t\}]=0$, meaning that the shock $X$ forms a martingale difference sequence with respect to its own history. The process $W_2$ is specified as a function of its own past rather than the joint past $\{W_{1,s},W_{2,s}\}_{s<t}$ to avoid implicitly modeling inverse causality. The representation in Eq.(\ref{conditional_representation}) encompasses a general class of multivariate time series models for conditional means (and variance), requiring only that innovations are correctly captured via finite-dimensional time-invariant parameters.

Suppose the practitioner has $\sqrt{T}-$consistent estimators, $\{\widehat{\theta}_i\}_{i=1,2}$, of the true parameters (e.g., least squares or maximum likelihood). Given sample data, estimated structural shocks and omitted variables, $\widehat{X}_{t}(\widehat{\theta}_1)$ and $ \widehat{Z}_{t}(\widehat{\theta}_2)$, depend on the limited observed past, $\widehat{\mathcal{I}}(t-1)$, and on the estimators.\footnote{These are the estimated pseudo-version of the innovations with arbitrary starting values, since we do not observe the infinite past of the time series, i.e., $\mathcal{I}(t-1)$.} Define the standardized zero-mean innovations and their estimated counterparts:
\begin{align}
\label{standardizedprocesses}
&U_t=\left(\Gamma_X\right)^{-1/2}X_t, \quad
V_t=\left(\Gamma_Z\right)^{-1/2}Z_t, \quad \text{s.t.} \quad \mathbb{E}[||U_t||^2]=d_{1}, \;  \mathbb{E}[||V_t||^2]=d_{2} \\
\label{standardizedresiduals}
&\widehat{U}_t=\left(\widehat{\Gamma}_X\right)^{-1/2}\widehat{X}_t, \quad
\widehat{V}_t=\left(\widehat{\Gamma}_Z\right)^{-1/2}\widehat{Z}_t, \quad \qquad t=1,...,T
\end{align}
where $\Gamma_X=\mathbb{E}[X_t X_t^\prime]$ and $\Gamma_Z=\mathbb{E}[Z_t Z_t^\prime]$ have empirical counterparts $\widehat{\Gamma}_X$ and $\widehat{\Gamma}_Z$. 

Parallel to Eq.(\ref{corrected_statistic}), the proposed statistic based on estimated standardized processes is:
\begin{align}
    \widehat{\mathcal{T}}_\omega^c & = \widehat{\mathcal{T}}_\omega - \frac{1}{T^2} \sum_{j=1}^{T-2} \omega(j) \sum_{s,t=j+1,j\leq |t-s|}^T \langle\widehat{U}_t,\widehat{U}_s\rangle\langle\widehat{V}_{t-j},\widehat{V}_{s-j}\rangle= \widehat{\mathcal{T}}_{1\omega}+\widehat{\mathcal{T}}_{2\omega}^c \label{statistic12}
\end{align}
where $\widehat{\mathcal{T}}_\omega=\sum_{j=1}^{T-1}\omega(j)|| \widehat{\Gamma}_{UV}(j)||_F^2$ is the empirical counterpart of the statistic in Eq.(\ref{decompositionsintosums}). Similar to Eq.(\ref{kernel1}), define the variance estimator:
\begin{align*}
    \widehat{D}_{\omega,T}^{(Hete)} &= \frac{2d_1}{T^2}
        \sum_{j=1}^{T-2}\sum_{\ell=1}^{T-2}\omega(j)\omega(\ell)
        \sum_{s=\max\{j,\ell\}+1}^{T-1}
        \sum_{t=s+1}^{\min\{T,\ s+\min(j,\ell)-1\}}
         \widehat{\gamma}_{t,s}(j,\ell)
\end{align*}
where: $\widehat{\gamma}_{t,s}(j,\ell)=
\langle \widehat{V}_{t-j},\widehat{V}_{s-j}\rangle \langle \widehat{V}_{t-\ell},\widehat{V}_{s-\ell}\rangle$. Finally, define:
\begin{align*}
    \widehat{\mathcal{T}}^{(Hete)}:=T\left(\dfrac{ \widehat{\mathcal{T}}_{1\omega}-\mu_{\omega,T}}{\sqrt{D_{\omega,T}}}+\dfrac{\widehat{\mathcal{T}}_{2\omega}^c}{\sqrt{\widehat{D}_{\omega,T}^{(Hete)}}}\right)
\end{align*}
\begin{theo}\label{theorem_conditional}
Suppose the processes $\{W_{i,t}\}^{i=1,2}_{t=1,...,T}$ admit the causal representation of Eq.(\ref{conditional_representation}). 
Suppose the assumptions of Theorem \ref{theorem1}, with $M/T\rightarrow 0$, as $T,M\rightarrow\infty$, and additionally Assumptions \ref{martingale_estimated}-\ref{regularityassumption}-\ref{score_orthogonality}.
Let $\{\widehat{\theta}_i\}_{i=1,2}$ be $\sqrt{T}-$consistent estimators of the true parameters $\{\theta_i^0\}_{i=1,2}$. Under the null $\mathcal{H}_0$ in Eq.(\ref{h0grangernoncausality}), we have: $\widehat{\mathcal{T}}^{(Hete)}\xrightarrow{d} \mathcal{N}(0,1)$.
\end{theo}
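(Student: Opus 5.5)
The plan is the standard two-step (Haugh-type) argument: show that replacing the true standardized innovations $(U_t,V_t)$ by the estimated $(\widehat U_t,\widehat V_t)$ changes the centered and scaled statistic only by $o_p(1)$, and then invoke Theorem \ref{theorem1} for the infeasible statistic built on $(U_t,V_t)$. Concretely, I will write
\begin{align*}
\widehat{\mathcal{T}}^{(Hete)}-\mathcal{T}^{(Hete)}
= T\,\frac{\widehat{\mathcal{T}}_{1\omega}-\mathcal{T}_{1\omega}}{\sqrt{D_{\omega,T}}}
+T\,\frac{\widehat{\mathcal{T}}_{2\omega}^c-\mathcal{T}_{2\omega}^c}{\sqrt{\widehat D_{\omega,T}^{(Hete)}}}
+T\,\mathcal{T}_{2\omega}^c\Big(\big(\widehat D_{\omega,T}^{(Hete)}\big)^{-1/2}-\big(D_{\omega,T}^{(Hete)}\big)^{-1/2}\Big).
\end{align*}
Then I will show that each of the three terms is $o_p(1)$. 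With the weaker rate $M/T\to0$, Theorem \ref{theorem1} requires its additional dependence condition on $C^{\Lambda}_{r,4}$; I take this to be among the inherited assumptions. Slutsky's lemma then gives the result.

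\emph{Step 1 (expansion of residuals).} By the regularity assumption (Assumption \ref{regularityassumption}, assumed to give smoothness of $\mu_X,\mu_Z$ in $\theta$ and negligible initial-value effects), a mean-value expansion gives
\begin{align*}
\widehat X_t-X_t&=-\nabla_\theta\mu_X(\bar\theta_1,\cdot)'(\widehat\theta_1-\theta_1^0)+r_{1t},\\
\widehat Z_t-Z_t&=-\nabla_\theta\mu_Z(\bar\theta_2,\cdot)'(\widehat\theta_2-\theta_2^0)+r_{2t},
\end{align*}
with $\sum_t\|r_{it}\|^2=O_p(1)$. Since $\widehat\Gamma_X-\Gamma_X=O_p(T^{-1/2})$ and $\widehat\Gamma_Z-\Gamma_Z=O_p(T^{-1/2})$, the same form of expansion holds for $\widehat U_t,\widehat V_t$. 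The key consequence is that each sample cross-covariance satisfies
\[
\widehat\Gamma_{UV}(j)=\Gamma^{*}_{UV}(j)+A_j(\widehat\theta-\theta^0)+O_p(T^{-1}),
\]
where $\Gamma^{*}_{UV}(j)$ denotes the sample cross-covariance of the true $(U_t,V_t)$ and $A_j=O_p(1)$. Two facts make the leading term small. First, the derivative of $\mu_X$ is $\mathcal I(t-1)$-measurable, and $V_{t-j}$ is too; under $\mathcal H_0$ combined with Assumption \ref{martingale_estimated}, the product with $U_t$ is a martingale difference, so the corresponding average is $O_p(T^{-1/2})$ without bias. Second, for the $Z$-part, $\mathbb E[U_t\,\partial\mu_Z(\cdot)_{t-j}]=0$ by the same argument, and I will use the score orthogonality (Assumption \ref{score_orthogonality}) to kill the cross terms involving $\widehat\theta_2-\theta_2^0$ and $X$. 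This should give $\sup_j\|\widehat\Gamma_{UV}(j)-\Gamma^{*}_{UV}(j)\|=O_p(T^{-1})$, following Hong (2001, Lemma A.1--A.3) and Bouhaddioui and Roy (2006).

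\emph{Step 2 (statistic components).} With $\omega(j)$ summable to $O(M)$, the difference $T\sum_j\omega(j)\big(\|\widehat\Gamma_{UV}(j)\|^2-\|\Gamma^{*}_{UV}(j)\|^2\big)$ is bounded by a sum of two pieces:
\[
T\sum_j\omega(j)\|\widehat\Gamma-\Gamma^{*}\|^2=O_p(M/T),
\qquad
2T\sum_j\omega(j)\|\Gamma^{*}\|\,\|\widehat\Gamma-\Gamma^{*}\|=O_p(M\cdot T^{-1/2}).
\]
The second bound is too crude: divided by $M^{1/2}$ it requires $M/T\to0$ only after a sharper argument. So I will exploit the martingale structure, bounding $\sum_j\omega(j)\langle\mathrm{vec}\,\Gamma^{*}(j),\mathrm{vec}\,A_j\rangle$ in $L_1$ by $O(M^{1/2}T^{-1/2})$ using orthogonality of the $\Gamma^{*}(j)$ across $j$ under $\mathcal H_0$, exactly as in Hong (2001, Theorem A.4). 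The corrected cross-product part $\widehat{\mathcal T}^c_{2\omega}$ is handled by the same expansion applied to the retained pairs $j>|t-s|$. Since $\mathcal T_{1\omega}+\mathcal T^c_{2\omega}$ is a fixed subsum of $\mathcal T_\omega$, the bound on the restricted index set follows identically.

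\emph{Step 3 (variance estimator).} I must show $\widehat D^{(Hete)}_{\omega,T}/D^{(Hete)}_{\omega,T}\xrightarrow{p}1$. This splits into two parts. The first is replacing $\widehat V$ by $V$, which costs $O_p(MT^{-1/2})$ relative to order $M$. The second is the law of large numbers for $\widehat\gamma_{t,s}(j,\ell)$ around $\gamma_{t,s}(j,\ell)$, which follows from the covariance summability in Lemma \ref{lemma_mixing_implications} via the same $\mathrm{Var}/\mathbb E^2$ argument used for $V_T^2$ in the proof of Theorem \ref{theorem1}. I note that the normalization $2d_1$ in $\widehat D$ versus $4d_1$ in $D$ must be reconciled by the symmetric double counting over ordered pairs $(s,t)$, and I will check this explicitly.

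\textbf{Main obstacle.} The main obstacle is Step 2's linear cross term: achieving $o_p(1)$ under only $M/T\to0$, rather than $M^2/T\to0$, requires that the estimation-effect term has zero mean under $\mathcal H_0$ uniformly in $j$. This is where Assumption \ref{score_orthogonality} must be used carefully, because inverse causality makes $V_{t-j}$ correlated with past $U$.
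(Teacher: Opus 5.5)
Your overall architecture is the paper's: show the estimation effect is negligible in the quadratic-form part and in the corrected cross-product part, show $\widehat D^{(Hete)}_{\omega,T}$ is consistent, then apply Slutsky with Theorem \ref{theorem1}. However, two steps do not go through as written.

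\textbf{The pieces that actually enter $\widehat{\mathcal T}^{(Hete)}$ are never controlled.} Your Step 2 works through the identity $\|a\|^2-\|b\|^2=\|a-b\|^2+2\langle b,a-b\rangle$. That identity is available only for the complete Frobenius norm of $\widehat\Gamma_{UV}(j)$.

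The statistic instead uses $\widehat{\mathcal T}_{1\omega}$ and the band-restricted sum $\widehat{\mathcal T}^c_{2\omega}$ (pairs with $j>|t-s|$) separately, with different scalings $\sqrt{D_{\omega,T}}$ and $\sqrt{\widehat D^{(Hete)}_{\omega,T}}$. The band sum is not a squared norm. A bound on the total does not transfer to a sub-sum, because errors in the retained and discarded pairs can cancel in the total. So ``follows identically'' leaves the dominant term of the statistic unproved.

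Brute force does not rescue it. The band contains of order $T\sum_j j\,\omega(j)\asymp TM^2$ products, each perturbed by $O_p(T^{-1/2})$. This gives $O_p(M^2T^{-1/2})$, which is not $o_p(M^{1/2})$ under $M/T\to0$.

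What is needed, as the paper does in Proposition \ref{prop_conditional3}, is the following:
\begin{itemize}
\item expand $\langle\breve U_t,\breve U_s\rangle\langle\breve V_{t-j},\breve V_{s-j}\rangle-\langle U_t,U_s\rangle\langle V_{t-j},V_{s-j}\rangle$ term by term, after removing the $\widehat\Gamma_X,\widehat\Gamma_Z$ standardization;
\item control each linear piece through its martingale structure in $t$. For example, after a Taylor expansion pulls out $\widehat\theta_1-\theta_1^0$, the piece linear in $\breve U_s-U_s$ becomes $(\widehat\theta_1-\theta_1^0)'\,T^{-1}\sum_t R_tU_t$, with $R_t=\sum_s\nabla_{\theta_1}\widetilde U_s(\theta_1^0)\,W_{ts}\in\mathcal I(t-1)$.
\end{itemize}

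\textbf{Step 1's rate rests on the wrong mechanism, and your ``main obstacle'' is misplaced.} The leading $X$-side term of $\widehat\Gamma_{UV}(j)-\Gamma^{*}_{UV}(j)$ is $T^{-1}\sum_t\nabla_{\theta_1}\widetilde U_t(\theta_1^0)'(\widehat\theta_1-\theta_1^0)V_{t-j}'$. $U_t$ does not enter it, and both factors are $\mathcal I(t-1)$-measurable, so it is not a martingale average. Under inverse causality its mean is generically nonzero. For example, with $W_{1,t}=aW_{1,t-1}+X_t$ and $Z_t=\beta X_{t-1}+\varepsilon_t$, $\mathcal H_0$ holds but $\mathbb E[W_{1,t-1}Z_{t-j}]=\beta a^{j}\,\mathbb E[X_t^2]\neq0$.

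The $O_p(T^{-1/2})$ rate for this average is therefore not a consequence of $\mathcal H_0$. It is exactly the first condition of Assumption \ref{score_orthogonality}, which is what the paper invokes for $F^{(12)}_{Tj}$.

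Likewise, $\sum_t\|r_{1t}\|^2=O_p(1)$ combined with Cauchy--Schwarz only yields $T^{-1}\sum_t r_{1t}V_{t-j}'=O_p(T^{-1/2})$. Then $T\sum_j\omega(j)\|T^{-1}\sum_t r_{1t}V_{t-j}'\|^2$ is only $O_p(M)$, not $o_p(M^{1/2})$. The paper uses Assumption \ref{martingale_estimated} precisely for this initial-value piece ($F^{(11)}_{Tj}$).

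Once the $O_p(T^{-1})$ rate is secured this way, the cross term you worried about is harmless: $O_p(MT^{-1/2})/M^{1/2}=O_p((M/T)^{1/2})=o_p(1)$. This is how the paper closes Proposition \ref{prop_conditional1}, so no orthogonality across lags is needed.

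Finally, the $2d_1$ in $\widehat D^{(Hete)}_{\omega,T}$ cannot be reconciled by double counting. It runs over the same ordered set $s<t$ as $D^{(Hete)}_{\omega,T}$. The constant must be $4d_1$, as in $D^{(f),(Hete)}_{\omega,T}$ of Appendix \ref{appendixC_statistics}; otherwise the ratio in Proposition \ref{prop_conditional4} tends to $1/\sqrt2$.
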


Theorem \ref{theorem_conditional} shows that estimation error does not affect the limiting distribution under three sets of additional conditions. First, correct model specification delivers parametric-rate estimation of the innovations, so the approximation error is asymptotically negligible. Second and third, appropriate smoothness ensures uniform $\ell_2$-convergence and bounded derivatives (i.e., Eq.(\ref{regularitycondition1})–(\ref{regularitycondition2})). Consequently, the plug-in statistic constructed from standardized estimated residuals retains the same asymptotic distribution as in the infeasible case. \\

To prove Theorem \ref{theorem_conditional}, some additional notations and objects need to be introduced.
Given a parameter belonging to a parameter space, $\theta \in \Theta$, I define the gradient and Hessian operators with respect to $\theta$ to be $\nabla_{\theta}$ and $\nabla_{\theta}^2$, whenever have proper meaning. Define $\{\Theta_i\}_{i=1,2}$ the parameter spaces of the real parameters $\{\theta_i^0\}_{i=1,2}$, respectively. Following Eq.(\ref{conditional_representation}), define the processes: $\widetilde{X}_{t}(\theta_1)=W_{1,t}-\mu_X(\theta_{1},\mathcal{I}_X(t-1)), \widetilde{Z}_{t}(\theta_2) =W_{2,t}-\mu_Z(\theta_{2},\mathcal{I}_Z(t-1))$, 
with corresponding standardized innovations of the unobservable infinite past: $\widetilde{U}_t(\theta_1)=\left(\Gamma_X\right)^{-1/2}\widetilde{X}_{t}(\theta_1), \;
\widetilde{V}_t(\theta_2)=\left(\Gamma_Z\right)^{-1/2}\widetilde{Z}_{t}(\theta_2)$. Under the parametrization of eq.(\ref{conditional_representation}), we have: $ U_t=\widetilde{U}_t(\theta_1^0), \; V_t=\widetilde{V}_t(\theta_2^0)$. Define further the population-standardized estimated innovations as follow: $\breve{U}_t=\left(\Gamma_X\right)^{-1/2}\widehat{X}_{t}, \;
\breve{V}_t=\left(\Gamma_Z\right)^{-1/2}\widehat{Z}_{t}$, where, using the previous notation, we have: $	\widehat{X}_{t}=W_{1,t}-\mu_X(\widehat{\theta}_{1},\widehat{\mathcal{I}}_X(t-1))$, $\widehat{Z}_{t}=W_{2,t}-\mu_Z(\widehat{\theta}_{2},\widehat{\mathcal{I}}_Z(t-1))$
where $\widehat{\mathcal{I}}_X(t-1)$ and $\widehat{\mathcal{I}}_Z(t-1)$ are the feasible information sets, i.e., the information sets constrained to the observable finite past of the time series, $\{W_{i,t}\}_{t=1,...,T}^{i=1,2}$. Note that, generally:
$\widehat{X}_{t}\not=\widetilde{X}_t(\widehat{\theta}_1), \;    \widehat{Z}_{t}\not=\widetilde{Z}_t(\widehat{\theta}_2)$. Recall that: 
$\widehat{U}_t=\left(\widehat{\Gamma}_X\right)^{-1/2}\widehat{X}_{t}, \; \widehat{V}_t=\left(\widehat{\Gamma}_Z\right)^{-1/2}\widehat{Z}_{t}$.
Denote the $k^{th}$ entry-wise element of $\widehat{U}_{t},\breve{U}_{t}, \widetilde{U}_{t}, U_{t}$ with $\widehat{U}_{k,t}, \breve{U}_{k,t}, \widetilde{U}_{k,t}, U_{k,t}$, respectively. In a similar fashion, denote $\widehat{V}_{l,t-j},\breve{V}_{l,t-j}, \widetilde{V}_{l,t-j}, V_{l,t-j}$ the $l^{th}$ element of, respectively, $\widehat{V}_{t-j},\breve{V}_{t-j}, \widetilde{V}_{t-j}, V_{t-j}$. Denote: 
\begin{align*}
& C_{UV}(j)=\frac{1}{T} \sum_{t=j+1}^T U_t (V_{t-j})^\prime, \; \; C_{X}=\frac{1}{T} \sum_{t=1}^T X_t (X_{t})^\prime, \; \; C_{Z}=\frac{1}{T} \sum_{t=1}^T Z_t (Z_{t})^\prime\\
& \widehat{\Gamma}_{UV}(j)=\frac{1}{T} \sum_{t=j+1}^T \widehat{U}_t (\widehat{V}_{t-j})^\prime, \; \; \widehat{C}_{UV}(j)=\frac{1}{T}\sum_{t=j+1}^T \breve{U}_t (\breve{V}_{t-j})^\prime, \\
& \widehat{\Gamma}_{\hat{X}\hat{Z}}(j)=\frac{1}{T}\sum_{t=j+1}^T \widehat{X}_t \widehat{Z}_{t-j}^\prime
\end{align*}
which are respectively: i) the sample covariance between standardized innovations and the sample variances of the innovations; ii) the sample covariance between feasible standardized residuals, iii) and the population-standardized sample covariance between estimated residuals, iv) the sample covariance between estimated residuals. Additional to the assumptions listed in Theorem \ref{theorem_conditional}, this paper presume the following assumptions:
\begin{assum}\label{martingale_estimated}
For each $j\ge 1$ and each $(k,l)$, the processes
$\{(\breve{U}_{k,t}-\widetilde{U}_{k,t})V_{l,t-j}\}_{t\ge j+1}$ and $\{U_{k,t}(\breve{V}_{l,t-j}-\widetilde{V}_{l,t-j})\}_{t\ge j+1}$ are martingale difference sequences with respect to the feasible filtration, $\widehat{\mathcal{I}}(t-1)$, with: $\sup_{t\geq j+1}\mathbb{E}\left[(\breve{U}_{k,t}-\widetilde{U}_{k,t})^2V_{l,t-j}^2 \vert \widehat{\mathcal{I}}(t-1)\right]=O(T^{-1})$, \\$\sup_{t\geq j+1}\mathbb{E}\left[U_{k,t}^2(\breve{V}_{l,t-j}-\widetilde{V}_{l,t-j})^2 \vert \widehat{\mathcal{I}}(t-1)\right]=O(T^{-1})$.
\end{assum}
\begin{assum}\label{regularityassumption}
\begin{equation}\label{regularitycondition1}
\begin{aligned}
\sup_{\theta_1\in\Theta_1}\frac{1}{T} \sum_{t=1}^T \mathbb{E} ||U_t-\widetilde{U}_t||^2=O(T^{-1}), \quad  \sup_{\theta_1\in\Theta_1}\frac{1}{T} \sum_{t=1}^T  \mathbb{E} ||\breve{U}_t-\widetilde{U}_t||^2=O(T^{-1})\\
\sup_{\theta_2\in\Theta_2}\frac{1}{T} \sum_{t=1}^T   \mathbb{E} ||V_t-\widetilde{V}_t||^2=O(T^{-1}), \quad \sup_{\theta_2\in\Theta_2}\frac{1}{T} \sum_{t=1}^T  \mathbb{E} ||\breve{V}_t-\widetilde{V}_t||^2=O(T^{-1}) 
\end{aligned}
\end{equation}
\begin{equation}\label{regularitycondition2}
\begin{aligned}
\sup_{\theta_1\in\Theta_1}\frac{1}{T}\sum_{t=1}^T \mathbb{E} ||\nabla_{\theta_1}\widetilde{U}_t(\theta_1)||^4=O(1), & &	\sup_{\theta_2\in\Theta_2}\frac{1}{T}\sum_{t=1}^T \mathbb{E} ||\nabla_{\theta_2}\widetilde{V}_t(\theta_2)||^4=O(1) \\
\sup_{\theta_1\in\Theta_1}\frac{1}{T}\sum_{t=1}^T \mathbb{E} ||\nabla_{\theta_1}^2\widetilde{U}_t(\theta_1)||^4=O(1), & &	\sup_{\theta_2\in\Theta_2}\frac{1}{T}\sum_{t=1}^T \mathbb{E} ||\nabla_{\theta_2}^2\widetilde{V}_t(\theta_2)||^4=O(1)
\end{aligned}
\end{equation}    
\end{assum}
\begin{assum}\label{score_orthogonality}
For each $k=1,...,d_1$, $l=1,...,d_2$
\begin{align*}
    \mathbb{E}\left[\left\|\sum_{t=j+1}^T \nabla_{\theta_1}\widetilde U_{k,t}(\theta_1^0) V_{l,t-j} \right\|^2 \right]=O(T), \quad  \mathbb{E}\left[\left\|\sum_{t=j+1}^TU_{k,t}\nabla_{\theta_2}\widetilde V_{l,t-j}(\theta_2^0)\right\|^2\right]=O(T)
\end{align*}
and for any value $\bar\theta_i \in [\widehat\theta_i,\theta_i^0]$:
\begin{align*}
    \mathbb{E}\left[\left\|\sum_{t=j+1}^T\nabla_{\theta_1}^2\widetilde U_{k,t}(\bar\theta_1) V_{l,t-j}\right\|^2\right]=O(T^2), \quad 
    \mathbb{E}\left[\left\|\sum_{t=j+1}^T U_{k,t} \nabla_{\theta_2}^2\widetilde V_{l,t-j}(\bar\theta_2)\right\|^2\right]
=O(T^2).
\end{align*}
\end{assum}
which are regularity conditions on the uniform $\ell_2-$convergence, the second-order differentiability and boundedness of the derivatives. These last conditions are comparably standard in the literature. For instance, see Assumption A3 in \cite{hong2005generalized}, Assumptions 3.2-3.4 in \cite{wang2022testing}, and Assumptions 2.3-2.4 in \cite{leong2023practical}. A remark is needed. Assumption \ref{martingale_estimated} reads as a high-level orthogonality condition on the estimation error of the structural shock $X$, that ensures that the estimation error is (sufficiently) asymptotically uncorrelated, given the consistent $\sqrt{T}-$estimator. When admitting the following expansion: $\breve U_t-\widetilde U_t = \nabla_{\theta_1}\widetilde U_t(\theta_1^0)^{\prime}(\widehat\theta_1-\theta_1^0)+r_t$, with: $\sup_t \mathbb{E} r_t^2=o(T^{-1})$, Assumption \ref{martingale_estimated} essentially requires the estimation error term to be orthogonal to $V_{t-j}$, conditionally on the feasible past. Other conditions may be imposed to reach similar conclusions \citep{dominguez2020specification}.\\

We write the following: 
\begin{align*}
  \widehat{\mathcal{T}}^{(Hete)} &=  \dfrac{T \left(\widehat{\mathcal{T}}_{\omega}-\widehat{\mathcal{T}}_{\omega}^\star \right)}{\sqrt{D_{\omega,T}}}+
    \dfrac{T \left(\widehat{\mathcal{T}}_{\omega}^\star -\mathcal{T}_{\omega}^\star\right)}{\sqrt{D_{\omega,T}}}+
    \dfrac{T \left(\mathcal{T}_{\omega}^\star\right)}{\sqrt{D_{\omega,T}}} -\dfrac{T\left(\mathcal{T}_{2\omega}^{\star}\right)}{\sqrt{D_{\omega,T}}}\\
    &+ 
    \dfrac{T\left(\widehat{\mathcal{T}}_{2\omega}^c-\mathcal{T}_{2\omega}^{c\star}\right)}{\sqrt{\widehat{D}_{\omega,T}^{(Hete)}}} +\dfrac{\sqrt{D_{\omega,T}^{(Hete)}}}{\sqrt{\widehat{D}_{\omega,T}^{(Hete)}}} \left(\dfrac{T (\mathcal{T}_{2\omega}^{c\star})}{\sqrt{D_{\omega,T}^{(Hete)}}}\right)
\end{align*}
\begin{align*}
&\widehat{\mathcal{T}}_{\omega}=\sum_{j=1}^{T-1} \omega(j) ||\widehat{\Gamma}_{UV}(j)||_F^2, \; \widehat{\mathcal{T}}_{\omega}^\star =\sum_{j=1}^{T-1} \omega(j)||\widehat{C}_{UV}(j)||_F^2, \;  \mathcal{T}_{\omega}^\star =\sum_{j=1}^{T-1} \omega(j) ||C_{UV}(j)||_F^2 \\
&\widehat{\mathcal{T}}_{2\omega}^c= \frac{1}{T^2} \sum_{j=1}^{T-2} \omega(j) \sum_{s,t=j+1,s\not=t,j>|t-s|}^{T}  \langle\widehat{U}_t,\widehat{U}_s\rangle\langle\widehat{V}_{t-j},\widehat{V}_{s-j}\rangle \\
&\widehat{\mathcal{T}}_{2\omega}^{c\star}= \frac{1}{T^2} \sum_{j=1}^{T-2} \omega(j) \sum_{s,t=j+1,s\not=t,j>|t-s|}^{T}\langle U_t,U_s\rangle\langle V_{t-j},V_{s-j}\rangle 
\end{align*}
where, $\mathcal{T}_{2\omega}^{\star}$ and $\mathcal{T}_{2\omega}^{c\star}$, is defined accordingly to Eq.(\ref{corrected_statistic}), and the scale, $\widehat{D}_{\omega,T}^{(Hete)}$, is defined accordingly to Eq.(\ref{kernel1}). The proof of Theorem \ref{theorem_conditional} follows from Propositions (\ref{prop_conditional1})-(\ref{prop_conditional4}), and by Slutsky's together with Theorem \ref{theorem1} which shows that:\\ $(D_{\omega,T})^{-1/2}(T \cdot \mathcal{T}_{1\omega}^{c\star}-\mu_{\omega,T})+(D_{\omega,T}^{(Hete)})^{-1/2}(T \cdot \mathcal{T}_{2\omega}^{c\star})\xrightarrow{d}\mathcal{N}(0,1)$.
\begin{prop}\label{prop_conditional1}
	Under the assumptions of Th.\ref{theorem_conditional}: $(\widehat{D}_{\omega,T}^{(Hete)})^{-1/2}T \left(\widehat{\mathcal{T}}_{\omega}^\star -\mathcal{T}_{\omega}^\star\right)\xrightarrow{p} 0$.
\end{prop}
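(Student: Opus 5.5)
The plan is to work lag by lag and reduce the claim to a uniform-in-$j$ bound on the difference of squared Frobenius norms. For each $j$,
\begin{align*}
\|\widehat{C}_{UV}(j)\|_F^2-\|C_{UV}(j)\|_F^2=\|\widehat{C}_{UV}(j)-C_{UV}(j)\|_F^2+2\,\text{tr}\!\left[C_{UV}(j)^\prime\big(\widehat{C}_{UV}(j)-C_{UV}(j)\big)\right].
\end{align*}
Since $\widehat{D}_{\omega,T}^{(Hete)}$ is consistent for $D_{\omega,T}^{(Hete)}\asymp M$ (a fact handled in the companion result on the scale, or via Lemma \ref{lemma_mixing_implications}), it suffices to show $T\sum_{j}\omega(j)\big(\|\widehat{C}_{UV}(j)\|_F^2-\|C_{UV}(j)\|_F^2\big)=o_p(M^{1/2})$.

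Entrywise, I would decompose $\widehat{C}_{UV}(j)-C_{UV}(j)$ by writing $\breve U_t-U_t=(\breve U_t-\widetilde U_t)+(\widetilde U_t-U_t)$, and similarly for $V$. This produces four types of terms:
\begin{enumerate}[i)]
\item $T^{-1}\sum_t(\breve U_{k,t}-\widetilde U_{k,t})V_{l,t-j}$ and its $V$-counterpart. These are martingale sums by Assumption \ref{martingale_estimated}, so their second moments are $T^{-2}\cdot T\cdot O(T^{-1})=O(T^{-2})$.
\item The parameter-estimation pieces $\widetilde U_t(\widehat\theta_1)-U_t$. I would handle these by a second-order Taylor expansion around $\theta_1^0$. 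The first-order term is $(\widehat\theta_1-\theta_1^0)^\prime T^{-1}\sum_t\nabla_{\theta_1}\widetilde U_{k,t}(\theta_1^0)V_{l,t-j}$, which is $O_p(T^{-1/2})\cdot O_p(T^{-1/2})$ by Assumption \ref{score_orthogonality}. The Hessian term is $O_p(T^{-1})\cdot O_p(1)$ by the same assumption.
\item Initial-value truncation pieces, controlled by Eq.(\ref{regularitycondition1}) together with Cauchy--Schwarz.
\item Products of two errors, which are of smaller order by (\ref{regularitycondition2}).
\end{enumerate}
Together these give $\sup_j\mathbb{E}\|\widehat{C}_{UV}(j)-C_{UV}(j)\|_F^2=O(T^{-2})$, at least after truncating the Taylor remainder on an event of probability tending to one.

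With that bound, the quadratic part contributes $T\sum_j\omega(j)O_p(T^{-2})=O_p(M/T)$, which is $o_p(M^{1/2})$. For the cross term I would not apply Cauchy--Schwarz crudely. Under $\mathcal H_0$, $\mathbb{E}\|C_{UV}(j)\|_F^2=O(T^{-1})$, so the naive bound gives $T\sum_j\omega(j)O_p(T^{-1/2})O_p(T^{-1})=O_p(M T^{-1/2})$. This is $o(M^{1/2})$ only when $M/T\to0$, which is exactly the stated rate. The sharper route is to note that the leading first-order term is $2(\widehat\theta-\theta^0)^\prime\sum_j\omega(j)\,\text{tr}[C_{UV}(j)^\prime G_j]$, where $G_j$ is the score-type average. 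Then $T\sum_j\omega(j)C_{UV}(j)$ weighted by $O_p(1)$ quantities is $O_p(M^{1/2})\cdot O_p(T^{-1/2})$ by the martingale orthogonality of $C_{UV}(j)$ across lags.

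The main obstacle I expect is this cross term, specifically getting uniformity over the $O(M)$ lags while the Taylor remainder involves $\bar\theta$ depending on the data. I would first try the crude Cauchy--Schwarz bound and the rate $M/T\to0$, and fall back on the martingale-across-lags argument only if the constants do not close.
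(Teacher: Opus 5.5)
Your proposal matches the paper's proof step for step. You use the same split into $\|\widehat C_{UV}(j)-C_{UV}(j)\|_F^2$ plus a cross term, and the same decomposition $\breve U_t-U_t=(\breve U_t-\widetilde U_t)+(\widetilde U_t-U_t)$, handled by Assumption \ref{martingale_estimated}, a second-order Taylor expansion under Assumptions \ref{regularityassumption}--\ref{score_orthogonality}, and Cauchy--Schwarz for products of errors; this gives $O_p(M/T)$ for the quadratic part. For the cross term, the paper closes the argument with exactly your crude Cauchy--Schwarz bound $O_p(MT^{-1/2})=o_p(M^{1/2})$ under $M/T\to0$, so the ``sharper'' martingale-across-lags fallback is unnecessary.
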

\begin{prop}\label{prop_conditional2}
	Under the assumptions of Th.\ref{theorem_conditional}: 
    $(\widehat{D}_{\omega,T}^{(Hete)})^{-1/2}T \left(\widehat{\mathcal{T}}_{\omega}-\widehat{\mathcal{T}}_{\omega}^\star \right)\xrightarrow{p} 0$.
\end{prop}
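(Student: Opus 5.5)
The statement is Proposition \ref{prop_conditional2}: replacing the population standardization $(\Gamma_X,\Gamma_Z)$ by the sample one $(\widehat{\Gamma}_X,\widehat{\Gamma}_Z)$ in the residual-based quadratic form is asymptotically negligible once scaled by $T(\widehat{D}_{\omega,T}^{(Hete)})^{-1/2}$. Since $\widehat{U}_t=(\widehat{\Gamma}_X)^{-1/2}\Gamma_X^{1/2}\breve{U}_t$ and $\widehat{V}_t=(\widehat{\Gamma}_Z)^{-1/2}\Gamma_Z^{1/2}\breve{V}_t$, we have exactly
\begin{align*}
\widehat{\Gamma}_{UV}(j)=A_T\,\widehat{C}_{UV}(j)\,B_T^\prime,\quad A_T:=(\widehat{\Gamma}_X)^{-1/2}\Gamma_X^{1/2},\quad B_T:=(\widehat{\Gamma}_Z)^{-1/2}\Gamma_Z^{1/2}.
\end{align*}
The first step is to show $A_T-I=O_p(T^{-1/2})$ and $B_T-I=O_p(T^{-1/2})$. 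For this I would write $\widehat{\Gamma}_X-\Gamma_X=(C_X-\Gamma_X)+(\widehat{\Gamma}_X-C_X)$. The first piece is $O_p(T^{-1/2})$ by a CLT/variance bound for $X_tX_t^\prime$, using finite fourth moments and conditional homokurtosis (the analogous statement for $Z$ uses Assumption \ref{assumptionZmixing} and Lemma \ref{lemma_mixing_implications}). The second piece is $O_p(T^{-1/2})$ by Cauchy–Schwarz together with Assumption \ref{regularityassumption}, which gives $T^{-1}\sum_t\|\breve{U}_t-U_t\|^2=O_p(T^{-1})$ via $\breve U-\widetilde U$ and $\widetilde U-U$ and $\sqrt T$-consistency of $\widehat\theta_1$. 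A delta-method argument for the matrix square root then transfers the rate to $A_T,B_T$.

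Next, expand the Frobenius norm:
\begin{align*}
\|A_T\widehat{C}_{UV}(j)B_T^\prime\|_F^2-\|\widehat{C}_{UV}(j)\|_F^2
=\operatorname{tr}\!\left[\widehat{C}_{UV}(j)^\prime(A_T^\prime A_T-I)\widehat{C}_{UV}(j)B_T^\prime B_T\right]+\operatorname{tr}\!\left[\widehat{C}_{UV}(j)^\prime\widehat{C}_{UV}(j)(B_T^\prime B_T-I)\right].
\end{align*}
The absolute value of this difference is bounded by $\big(\|A_T^\prime A_T-I\|\,\|B_T\|^2+\|B_T^\prime B_T-I\|\big)\|\widehat{C}_{UV}(j)\|_F^2$. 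Summing with weights gives
\begin{align*}
|\widehat{\mathcal{T}}_\omega-\widehat{\mathcal{T}}_\omega^\star|\le O_p(T^{-1/2})\,\widehat{\mathcal{T}}_\omega^\star .
\end{align*}
By Proposition \ref{prop_conditional1}, $\widehat{\mathcal{T}}_\omega^\star=\mathcal{T}_\omega^\star+o_p(M^{1/2}/T)$. Under $\mathcal H_0$, Proposition \ref{prop_moments} and Theorem \ref{theorem1} give $T\mathcal{T}_\omega^\star=\mu_{\omega,T}+O_p(M^{1/2})=O_p(M)$. The only subtlety is the inverse-causality component $\mathcal{C}_\omega$ inside $\mathcal{T}_\omega^\star$. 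Its mean is zero, since every term contains $X_t$ with $t>s$ paired with $\mathcal I(t-1)$-measurable factors. Its second moment is $O(M)$ by a crude Cauchy–Schwarz count, so $T\mathcal{T}_\omega^\star=O_p(M)$ still holds. Hence
\begin{align*}
T|\widehat{\mathcal{T}}_\omega-\widehat{\mathcal{T}}_\omega^\star|=O_p(M\,T^{-1/2}).
\end{align*}

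Finally, divide by $(\widehat{D}_{\omega,T}^{(Hete)})^{1/2}$. I would show $\widehat{D}_{\omega,T}^{(Hete)}/D_{\omega,T}^{(Hete)}\to_p 1$, or at least that it is bounded below by $cM$ with probability tending to one. This follows by replacing $\widehat V$ with $V$ (same perturbation bounds as above) and using the variance bound for the $\gamma$-products in Lemma \ref{lemma_mixing_implications}. The ratio is then $O_p(M^{1/2}T^{-1/2})=o_p(1)$ because $M/T\to0$.

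The main obstacle is the crude bound $T\mathcal{T}_\omega^\star=O_p(M)$ without controlling inverse causality. If the $O(M)$ second-moment count for $\mathcal C_\omega$ fails, I would instead bound $\mathbb E\|C_{UV}(j)\|_F^2\le C/T$ uniformly in $j$. Under $\mathcal H_0$ with conditional homoskedasticity, $U_tV_{t-j}^\prime$ is a martingale difference, so this bound holds, and by Markov $\sum_j\omega(j)\|C_{UV}(j)\|_F^2=O_p(M/T)$. This bypasses $\mathcal{C}_\omega$ entirely.
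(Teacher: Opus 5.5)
Your proposal is correct and follows essentially the paper's route: the paper writes the difference as the quadratic form $\text{vec}(\widehat{\Gamma}_{\hat{X}\hat{Z}}(j))^\prime(\widehat{\Gamma}_{Z}^{-1}\otimes\widehat{\Gamma}_{X}^{-1}-\Gamma_{Z}^{-1}\otimes\Gamma_{X}^{-1})\text{vec}(\widehat{\Gamma}_{\hat{X}\hat{Z}}(j))$, which is equivalent to your $A_T\widehat{C}_{UV}(j)B_T^\prime$ factorization; it obtains the $O_p(T^{-1/2})$ rate through the same split $\widehat{\Gamma}_X-\Gamma_X=(\widehat{\Gamma}_X-C_X)+(C_X-\Gamma_X)$, and it bounds $\sum_j\omega(j)\|\widehat{C}_{UV}(j)\|_F^2=O_p(M/T)$ via Proposition \ref{prop_conditional1} plus the true-process term, giving $O_p(MT^{-1/2})$ before dividing by $\sqrt{M}$. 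One minor point: with unrestricted inverse causality, a crude Cauchy--Schwarz count gives only $\mathbb{E}[(T\mathcal{C}_\omega)^2]=O(M^2)$, not $O(M)$, because cross terms no longer vanish once $Z_{t-j}$ may depend on earlier shocks; this still suffices for $O_p(M)$, but your fallback is the cleaner way to get the $O_p(M/T)$ bound the paper simply asserts, namely $\mathbb{E}\|C_{UV}(j)\|_F^2\le C/T$ uniformly in $j$ under $\mathcal{H}_0$ (equivalently $\mathbb{E}[T\mathcal{T}_\omega^\star]=\mu_{\omega,T}$ plus nonnegativity and Markov).
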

\begin{prop}\label{prop_conditional3}
	Under the assumptions of Th.\ref{theorem_conditional}: $(\widehat{D}_{\omega,T}^{(Hete)})^{-1/2}T\left(\widehat{\mathcal{T}}_{2\omega}^c-\mathcal{T}_{2\omega}^{c\star}\right)\xrightarrow{p} 0$.
\end{prop}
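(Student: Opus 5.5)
The statement to prove is Proposition \ref{prop_conditional3}: replacing the feasible standardized residuals $(\widehat U_t,\widehat V_t)$ by the true standardized innovations $(U_t,V_t)$ in the corrected cross-product term $\mathcal{T}_{2\omega}^c$ costs only $o_p\bigl((\widehat D^{(Hete)}_{\omega,T})^{1/2}/T\bigr)$. The plan is to telescope this replacement through the intermediate population-standardized residuals $(\breve U_t,\breve V_t)$. This gives
\begin{align*}
\widehat{\mathcal{T}}_{2\omega}^c-\mathcal{T}_{2\omega}^{c\star}
=\bigl(\widehat{\mathcal{T}}_{2\omega}^c-\breve{\mathcal{T}}_{2\omega}^c\bigr)+\bigl(\breve{\mathcal{T}}_{2\omega}^c-\mathcal{T}_{2\omega}^{c\star}\bigr),
\end{align*}
where $\breve{\mathcal{T}}_{2\omega}^c$ is the corrected cross-product sum built from $\breve U_t,\breve V_t$. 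Before bounding either piece, I would show that $\widehat D^{(Hete)}_{\omega,T}/D^{(Hete)}_{\omega,T}\to_p 1$, so that $\widehat D^{(Hete)}_{\omega,T}\asymp M$ in probability. This follows the variance-of-$V_T^2$ argument in the proof of Theorem \ref{theorem1} together with Lemma \ref{lemma_mixing_implications}, plus the same residual-replacement bounds used below. It then suffices to show that both differences are $o_p(M^{1/2}/T)$.

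\textbf{Standardization piece.} Since $\widehat U_t=(\widehat\Gamma_X)^{-1/2}\Gamma_X^{1/2}\breve U_t$ and $\widehat\Gamma_X-\Gamma_X=O_p(T^{-1/2})$, the first difference is a bilinear perturbation of $\breve{\mathcal{T}}_{2\omega}^c$ by matrices of order $O_p(T^{-1/2})$. It is therefore $O_p(T^{-1/2})$ times terms of the same form as $\breve{\mathcal{T}}_{2\omega}^c$. Those terms are $O_p(M^{1/2}/T)$, by the variance computation of Proposition \ref{prop_moments}(ii) transferred to $\breve U,\breve V$. The standardization piece is thus negligible.

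\textbf{Estimation piece.} For the second difference, I write $\breve U_t=U_t+a_t$ with $a_t=(\breve U_t-\widetilde U_t)+(\widetilde U_t-U_t)$, and similarly $\breve V_t=V_t+b_t$. Expanding $\langle \breve U_t,\breve U_s\rangle\langle\breve V_{t-j},\breve V_{s-j}\rangle$ produces terms that are linear, quadratic, and so on in $(a,b)$. The higher-order terms are bounded crudely. By Cauchy–Schwarz and Assumption \ref{regularityassumption}, $T^{-1}\sum_t\mathbb{E}\|a_t\|^2=O(T^{-1})$. There are at most $O(TM)$ index pairs $(s,t)$ with $|t-s|<j$, weighted by the kernel. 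Together these give contributions of order $O_p(M/T^{2})$ or smaller, which are $o(M^{1/2}/T)$.

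\textbf{Linear terms (main obstacle).} The linear terms are the genuine obstacle, for example
\begin{align*}
T^{-2}\sum_j\omega(j)\sum_{|t-s|<j}\langle a_t,U_s\rangle\langle V_{t-j},V_{s-j}\rangle .
\end{align*}
A naive bound gives $O_p(M\cdot T^{-1/2}\cdot T^{-1/2})$, which is too large relative to $M^{1/2}/T$. For these terms I would proceed in two steps.
\begin{enumerate}[i)]
\item Use the mean-value expansion $a_t=\nabla_{\theta_1}\widetilde U_t(\theta_1^0)'(\widehat\theta_1-\theta_1^0)+r_t$. The factor $(\widehat\theta_1-\theta_1^0)=O_p(T^{-1/2})$ then comes out of the sum.
\item Exploit the martingale structure of the remaining sum in $t$, as guaranteed by Assumption \ref{martingale_estimated} and by the mds property of $U_s$ under $\mathcal H_0$. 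Combined with the score orthogonality bounds of Assumption \ref{score_orthogonality}, this shows that
\begin{align*}
\sum_t\sum_{s}\nabla\widetilde U_t\, U_s\, W_{ts}=O_p(T\,M^{1/2}),
\end{align*}
mirroring the computation $\mathbb{E}[V_T^2]\asymp T^2M$.
\end{enumerate}
Multiplying the pieces gives $T^{-2}\cdot T^{-1/2}\cdot TM^{1/2}=o(M^{1/2}/T)$. The Hessian remainder is controlled by the $O(T^2)$ bound in Assumption \ref{score_orthogonality} times $\|\widehat\theta-\theta^0\|^2=O_p(T^{-1})$. The terms linear in $b$ are treated symmetrically using $V$-derivatives, noting that $W_{ts}$ stays $\mathcal G_{s-1}$-measurable thanks to the correction. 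Making this martingale-in-$s$ structure survive the perturbation is the step I expect to be the delicate one.
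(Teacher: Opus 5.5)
Your route is the paper's. You telescope through $\breve{\mathcal{T}}^{c}_{2\omega}$, handle the standardization piece as in Proposition~\ref{prop_conditional2}, and expand the product difference into terms linear and higher order in $\breve U-U$ and $\breve V-V$. The paper then simply asserts that every bracketed term is $O_p(T^{-3/2})$ ``by Proposition~\ref{prop_conditional1}''. You are more explicit than the paper, but the step you single out does not go through as written.

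\textbf{The linear term.} Write $W_{ts}=\sum_{t-s<j\le s-1}\omega(j)\langle V_{t-j},V_{s-j}\rangle$ for $s<t$. Because the inner sum runs over unordered pairs, your linear term contains both $\langle U_t,a_s\rangle W_{ts}$ and $\langle a_t,U_s\rangle W_{ts}$.
\begin{itemize}
\item The first is fine. After step i) it is $(\widehat\theta_1-\theta_1^0)'\sum_t\bigl(\sum_{s<t}W_{ts}\nabla_{\theta_1}\widetilde U_s(\theta_1^0)\bigr)U_t$, a martingale in $t$.
\item The second becomes $(\widehat\theta_1-\theta_1^0)'\sum_t\nabla_{\theta_1}\widetilde U_t(\theta_1^0)S_t$ with $S_t=\sum_{s<t}W_{ts}U_s$, and here no martingale survives. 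Since $W_{1,t}$ does not depend on $\theta_1$, $\nabla_{\theta_1}\widetilde U_t(\theta_1^0)$ is a function of $\mathcal{I}_X(t-1)$ only, so every summand is predictable. For fixed $t$ the $s$-sum is not a martingale either, because this weight depends on $X_s,\dots,X_{t-1}$ and is not $\mathcal{G}_{s-1}$-measurable.
\end{itemize}
The mean of the second term vanishes for linear filters under conditional homoskedasticity, but need not for a nonlinear $\mu_X$ under inverse causality. The assumptions you cite do not control this double sum: Assumption~\ref{martingale_estimated} concerns the estimation error times a single $V_{l,t-j}$, and Assumption~\ref{score_orthogonality} concerns single sums such as $\sum_t\nabla\widetilde U_{k,t}V_{l,t-j}$.

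A workable repair is to split $\nabla_{\theta_1}\widetilde U_t(\theta_1^0)=G+\bigl(\nabla_{\theta_1}\widetilde U_t(\theta_1^0)-G\bigr)$, with $G$ its mean. The $G$-part equals $G\sum_s\bigl(\sum_{t>s}W_{ts}\bigr)U_s$, a martingale in $s$ because every $W_{ts}$ is $\mathcal{G}_{s-1}$-measurable. The centred part needs an additional weak-dependence condition (mixing or cumulant summability) on the joint process $(\nabla_{\theta_1}\widetilde U_t,X_t,Z_t)$, which is not among the stated assumptions.

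\textbf{The rate bookkeeping.} Your claim $\widehat{D}^{(Hete)}_{\omega,T}\asymp M$ is not right. We have $D^{(Hete)}_{\omega,T}=4d_1T^{-2}\sum_t\sum_{s<t}\mathbb{E}[W_{ts}^2]$, and for each $t$ only $t-s=O(M)$ matters, with $\mathbb{E}[W_{ts}^2]=O(M)$. Hence $D^{(Hete)}_{\omega,T}=O(M^2/T)$. This order is attained: for mutually independent i.i.d.\ $X,Z$ it is $\approx 4d_1d_2T^{-1}\sum_j(j-1)\omega^2(j)\asymp M^2/T$. The paper's ``$O(M)$'' is only an upper bound, and its proofs make the same slip.

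The correct target is therefore $o_p(MT^{-3/2})$, smaller than your $o_p(M^{1/2}/T)$ by a factor $(M/T)^{1/2}$. Your standardization piece and the martingale half of the linear term still clear it. Your higher-order bound does not. Each $W_{ts}$ carries $O(M)$ lags, so the crude Cauchy--Schwarz count over the $O(TM)$ pairs gives $O_p(M^2/T^2)$, not $O_p(M/T^2)$. That is $o_p(MT^{-3/2})$ only when $M^2/T\to0$; even your own target would need $M^{3/2}/T\to0$. Theorem~\ref{theorem_conditional}, however, allows $M/T\to0$.

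To close this you must keep the cancellation the crude bound discards:
\begin{itemize}
\item in $\langle a_t,a_s\rangle W_{ts}$, use $\mathbb{E}[W_{ts}^2]=O(M)$, which gives $O_p(M^{3/2}T^{-2})$;
\item for terms (including Taylor remainders) in which the later-dated $U_t$ is unperturbed, expand in $\theta_2$ and use the martingale in $t$;
\item terms with $a_t$ at the later date inherit the problem described above.
\end{itemize}
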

\begin{prop}\label{prop_conditional4}
	Under the assumptions of Th.\ref{theorem_conditional}: $\sqrt{\widehat{D}_{\omega,T}^{(Hete)}}/\sqrt{D_{\omega,T}^{(Hete)}}\xrightarrow{p}1$
	\begin{proof}
		Given the consistency of the estimators, the boundedness of the moments and eq.(\ref{regularitycondition1})-(\ref{regularitycondition2}), showing that $\left(\widehat{D}_{\omega,T}^{(Hete)}-D_{\omega,T}^{(Hete)}\right)=o_p(1)$ is parallel to the proof of Proposition \ref{prop_conditional2}. The proof concludes by application of the continuous mapping theorem.
	\end{proof}
\end{prop}

\subsubsection{Proof of Proposition \ref{prop_conditional1}}\label{proof_prop_conditional1}

The proof is parallel to the one of \cite{hong2001test}'s Lemma A.1-2, \cite{bouhaddioui2006generalized}'s Lemma 2, and \cite{leong2023practical}'s Appendix B. The aim is to show that: $ T \left(\widehat{\mathcal{T}}_{\omega}^\star -\mathcal{T}_{\omega}^\star\right)=o_p(M^{1/2}) $, since $\widehat{D}_{\omega,T}^{(Hete)}=O(M)$ by Proposition \ref{prop_moments} and Proposition \ref{prop_conditional4}. Consider the difference:
\begin{align*}
   T\left(\widehat{\mathcal{T}}_{\omega}^\star -\mathcal{T}_{\omega}^\star\right)&
   = T \sum_{j=1}^{T-1} \omega(j) \left(\left\vert\left\vert\text{vec}\left[\widehat{C}_{UV}(j)\right] \right\vert\right\vert^2-\left\vert\left\vert\text{vec}\left[C_{UV}(j)\right] \right\vert\right\vert^2\right)\\
   &=T\sum_{j=1}^{T-1}  \omega(j) \Big(
||\text{vec}(\widehat{C}_{UV}(j))-\text{vec}(C_{UV}(j))||^2  \\
& \; \; \; \; \; \; \; \; \; \; \; \; + 2 \langle \text{vec}(C_{UV}(j)), \text{vec}(\widehat{C}_{UV}(j))-\text{vec}(C_{UV}(j)) \rangle \Big) 
\end{align*}
Let us consider the first term. We have: 
\begin{align*}
    & ||\text{vec}(\widehat{C}_{UV}(j))-\text{vec}(C_{UV}(j))||^2 = 
    \sum_{k=1}^{d_1}\sum_{l=1}^{d_2} \left(\sum_{t=j+1}^{T} \frac{\breve{U}_{k,t} \breve{V}_{l,t-j}}{T}-\frac{U_{k,t} V_{l,t-j}}{T}\right)^2 \\
    &=\sum_{k=1}^{d_1}\sum_{l=1}^{d_2} \Bigg(\sum_{t=j+1}^{T} \frac{(\breve{U}_{k,t}-U_{k,t})V_{l,t-j}}{T}+\frac{U_{k,t}(\breve{V}_{l,t-j}- V_{l,t-j}) }{T}+\frac{(\breve{U}_{k,t}-U_{k,t})(\breve{V}_{l,t-j}- V_{l,t-j})}{T}\Bigg)^2\\
    &= \sum_{k=1}^{d_1}\sum_{l=1}^{d_2} \left(F_{Tj}^{(1)}+F_{Tj}^{(2)}+F_{Tj}^{(3)}\right)^2
\end{align*} 
By Cauchy-Schwarz inequality, for some finite $\Delta>0$: \\
$||\text{vec}(\widehat{C}_{UV}(j))-\text{vec}(C_{UV}(j))||^2\leq \Delta \sum_{k=1}^{d_1}\sum_{l=1}^{d_2} \left((F_{Tj}^{(1)})^2+(F_{Tj}^{(2)})^2+(F_{Tj}^{(3)})^2\right)$. Let us study the three terms:
    \begin{enumerate}
    \item By Cauchy--Schwarz,
    $\sup_j (F_{Tj}^{(3)})^2 \leq \left(\frac{1}{T}\sum_{t=1}^T(\breve U_{k,t}-U_{k,t})^2\right)\left(\frac{1}{T}\sum_{t=1}^T(\breve V_{l,t}-V_{l,t})^2\right)$.
    Next, by decomposing the difference as:
    $\breve U-U=(\breve U-\widetilde U)+(\widetilde U-U)$, we write:
    \begin{align*}
    \frac{1}{T}\sum_{t=1}^T(\breve U_{k,t}-U_{k,t})^2
    &\le 2\left(\frac{1}{T}\sum_{t=1}^T(\breve U_{k,t}-\widetilde U_{k,t})^2\right)
       +2\left(\frac{1}{T}\sum_{t=1}^T(\widetilde U_{k,t}-U_{k,t})^2\right)
    =O_p(T^{-1}),
    \end{align*}
    where the last equality follows from \eqref{regularitycondition1} and by Markov inequality.
    The same holds for: $\frac{1}{T}\sum_{t=1}^T(\breve V_{l,t}-V_{l,t})^2=O_p(T^{-1})$. Hence: $\sup_j (F_{Tj}^{(3)})^2 = O_p(T^{-2})$.
    \item By similar decomposition:\\ $F_{Tj}^{(1)}=\frac{1}{T}\sum_{t=j+1}^T(\breve U_{k,t}-\widetilde U_{k,t})V_{l,t-j}
    +\frac{1}{T}\sum_{t=j+1}^T(\widetilde U_{k,t}-U_{k,t})V_{l,t-j}=F_{Tj}^{(11)}+F_{Tj}^{(12)}$.
    \begin{itemize}
        \item Under Assumption \ref{martingale_estimated}, we can write:\\
        $\mathbb{E}\big[(F_{Tj}^{(11)})^2\big]
        =\frac{1}{T^2}\sum_{t=j+1}^T \mathbb{E}\big[(\breve U_{k,t}-\widetilde U_{k,t})^2V_{l,t-j}^2\big] = O(T^{-2})$.
        By Markov inequality, we have: $F_{Tj}^{(11)}=O_p(T^{-1})$ and $ (F_{Tj}^{(11)})^2=O_p(T^{-2})$.
        \item By the conditions of eq.(\ref{regularitycondition2}), the term $F^{(12)}_{Tj}$ can be expressed into two terms using its Taylor expansion (up to the second order):
		\begin{align*}
		F^{(12)}_{Tj} 
		& = \frac{1}{T}(\hat{\theta}_1-\theta_1^0)^\prime \sum_{t=j+1}^T (\nabla_{\theta_1}\widetilde{U}_{k,t}(\theta_1^0)V_{l,t-j} ) \\
		& \quad \quad+ \frac{1}{2T} (\hat{\theta}_1-\theta_1^0)^\prime \sum_{t=j+1}^T (\nabla_{\theta_1}^2\widetilde{U}_{k,t}(\breve{\theta}_1)V_{l,t-j}) (\hat{\theta}_1-\theta_1^0)
		\end{align*}
		where $\breve{\theta}_1\in[\hat{\theta}_1,\theta_1^0]$. By virtue of Cauchy-Schwarz inequality: 
		\begin{align*}
		& \frac{1}{T}(\hat{\theta}_1-\theta_1^0)^\prime \sum_{t=j+1}^T (\nabla_{\theta_1}\widetilde{U}_{k,t}(\theta_1^0)V_{l,t-j}) = \frac{1}{T}(\hat{\theta}_1-\theta_1^0)^\prime \sum_{t=j+1}^T ((\nabla_{\theta_1}U_{k,t})(V_{l,t-j}))\\
				& \; \; \; \; \; \; \; \; \; \; \leq \frac{1}{T^2} \mathbb{E}\left[ ||\hat{\theta}_1-\theta_1^0||^2\right]\mathbb{E} \left[\left\{\sum_{t=j+1}^T || (\nabla_{\theta_1}U_{k,t})(V_{l,t-j}) || \right\}^2\right] \\
				& \; \; \; \; \; \; \; \; \; \; \leq  \frac{1}{T^2} \mathbb{E}\left[ ||\hat{\theta}_1-\theta_1^0||^2\right] \sum_{t=j+1}^T \mathbb{E} \left[|| \nabla_{\theta_1}U_{k,t}||^4\right]\mathbb{E}\left[||V_{l,t-j}||^2\right] = O_p(T^{-2})
		\end{align*}
		where the last equality follows from the boundedness of the moments, consistency of the estimator, and Assumptions \ref{regularityassumption}-\ref{score_orthogonality}.
		By the same logic:
		\begin{align*} &\frac{1}{2T} (\hat{\theta}_1-\theta_1^0)^\prime \sum_{t=j+1}^T (\nabla_{\theta_1}^2\widetilde{U}_{k,t}(\breve{\theta}_1)V_{l,t-j} ) (\hat{\theta}_1-\theta_1^0)\\
		& \qquad \qquad \leq \frac{1}{4T^2} \mathbb{E}\left[ ||\hat{\theta}_1-\theta_1^0||^4\right]\mathbb{E} \left[\left\vert \left\vert \sum_{t=j+1}^T  \nabla_{\theta_1}^2\widetilde{U}_{k,t}(\breve{\theta}_1)V_{l,t-j} \right\vert \right\vert^2\right]= O_p(T^{-2})
		\end{align*}
        which means: $F^{(12)}_{Tj}=O_p(T^{-2})$. In conclusions: $(F_{Tj}^{(1)})^2=O_p(T^{-2})$.
    \end{itemize}
    \item By reasoning analogue to the one above, we have: $(F_{Tj}^{(2)})^2=O_p(T^{-2})$.
    \end{enumerate}
    Since: $\sum_{j=1}^{T-1}\omega(j)||\text{vec}(\widehat C_{UV}(j))-\text{vec}(C_{UV}(j))||^2=O_p(MT^{-2})$, we have:\\ $T\sum_{j=1}^{T-1}\omega(j)||\text{vec}(\widehat C_{UV}(j))-\text{vec}(C_{UV}(j))||^2=O_p(M/T)$.
    Regarding the second term, by Cauchy-Schwarz inequality, we have:
    \begin{align*}
     &\sum_{j=1}^{T-1}\omega(j) \langle \text{vec}(C_{UV}(j)), \text{vec}(\widehat{C}_{UV}(j))-\text{vec}(C_{UV}(j)) \rangle \\
    & \; \; \; \; \; \; \; \; \; \; \; \;  \; \; \;=  \sum_{k=1}^{d_1d_2}\sum_{l=1}^{d_1d_2}  \sum_{j=1}^{T-1}\omega(j)  C_{k,UV}(j)\left( \widehat{C}_{l,UV}(j)-C_{l,UV}(j)\right) \\
    & \; \; \; \; \; \; \; \; \; \; \; \;  \; \; \;\leq \sum_{k=1}^{d_1d_2}\sum_{l=1}^{d_1d_2} \left(\sum_{j=1}^{T-1}\omega(j)  (C_{k,UV}(j))^2\right)^{1/2}\left(\sum_{j=1}^{T-1}\omega(j) \left(\widehat{C}_{l,UV}(j)-C_{l,UV}(j)\right)^2\right)^{1/2}
\end{align*}
By previous results: $\left(\sum_{j=1}^{T -1}\omega(j)\left(\widehat{C}_{l,UV}(j)-C_{l,UV}(j)\right)^2\right)^{1/2}=O_p(M^{1/2}T^{-1})$ and \\
$\left(\sum_{j=1}^{T-1}\omega(j) (C_{k,UV}(j))^2\right)^{1/2}=O_p(M^{1/2}T^{-1/2})$, \\
where the last equality is by: \\
$\sum_{j=1}^{T-1}\omega(j) (C_{k,UV}(j))^2 \leq \sum_{j=1}^{T-1}\omega(j)||C_{k,UV}(j)||^2 =O(MT^{-3/2})$.\\
In conclusions: $ T\sum_{j=1}^{T-1}\omega(j) \langle \text{vec}(C_{UV}(j)), \text{vec}(\widehat{C}_{UV}(j))-\text{vec}(C_{UV}(j)) \rangle= O_p(MT^{-1/2})$.
As announced, since $M/T\rightarrow 0$, as $T,M\rightarrow \infty$:$ \widehat{\mathcal{T}}_{\omega}^\star -\mathcal{T}_{\omega}^\star=o_p(M^{1/2})$, which means that the proof is concluded by Proposition \ref{prop_moments} and Proposition \ref{prop_conditional4}.

\subsubsection{Proof of Proposition \ref{prop_conditional2}}\label{proof_prop_conditional2}
The proof is parallel to the one of \cite{hong2001test}'s Lemma A.1, \cite{bouhaddioui2006generalized}'s Proposition 3, and \cite{leong2023practical}'s Appendix B. Similar to the proof of Proposition \ref{prop_conditional1}, the aim is to show that: $ T \left(\widehat{\mathcal{T}}_{\omega}-\widehat{\mathcal{T}}_{\omega}^\star \right)=O_p(MT^{-1/2})$, since $\widehat{D}_{\omega,T}^{(Hete)}=O(M)$ by Proposition \ref{prop_conditional4} and Proposition \ref{prop_moments}. We write:
\begin{align*}
   T \left(\widehat{\mathcal{T}}_{\omega}-\widehat{\mathcal{T}}_{\omega}^\star \right)
   &= T\sum_{j=1}^{T-1} k^2\left(\frac{j}{M}\right) \Big(\text{vec}(\widehat{\Gamma}_{\hat{X}\hat{Z}}(j))^\prime \left( \widehat{\Gamma}_{Z}^{-1} \otimes \widehat{\Gamma}_{X}^{-1}\right)\text{vec}(\widehat{\Gamma}_{\hat{X}\hat{Z}}(j))\\
   & \qquad \qquad \qquad \qquad \qquad -\text{vec}(\widehat{\Gamma}_{\hat{X}\hat{Z}}(j))^\prime \left( \Gamma_{Z}^{-1} \otimes \Gamma_{X}^{-1}\right)\text{vec}(\widehat{\Gamma}_{\hat{X}\hat{Z}}(j))\Big)\\
&= T\sum_{j=1}^{T-1} k^2\left(\frac{j}{M}\right) \text{vec}(\widehat{\Gamma}_{\hat{X}\hat{Z}}(j))^\prime \left( \widehat{\Gamma}_{Z}^{-1} \otimes \widehat{\Gamma}_{X}^{-1}- \Gamma_{Z}^{-1} \otimes \Gamma_{X}^{-1}\right)\text{vec}(\widehat{\Gamma}_{\hat{X}\hat{Z}}(j)) 
\end{align*}
Recall that $\widehat{X}_{k,t}, X_{k,t}$ are the $k^{th}$ element of $\widehat{X}_{t}(\hat{\theta}_1)$, $X_{t}$, respectively. Consider the term $(\widehat{\Gamma}_{X} - \Gamma_{X})$: $||\widehat{\Gamma}_{X} - \Gamma_{X}||_F \leq \sum_{k=1}^{d_1} \sum_{l=1}^{d_1}||\widehat{\Gamma}_{kl,X}-\Gamma_{kl,X}|| $. By triangular equality: $||\widehat{\Gamma}_{kl,X}-\Gamma_{kl,X}||=||\widehat{\Gamma}_{kl,X}-C_{kl,X}+C_{kl,X}-\Gamma_{kl,X}||\leq ||\widehat{\Gamma}_{kl,X}-C_{kl,X}||+||C_{kl,X}-\Gamma_{kl,X}||$. For the first term, by Cauchy-Schwarz inequality, for the diagonal terms ($k=l$): 
\begin{align*}
&\widehat{\Gamma}_{kl,X}-C_{kl,X}=\frac{1}{T} \sum_{t=1}^{T} 2(\widehat{X}_{k,t}-X_{k,t})X_{k,t}+(\widehat{X}_{k,t}-X_{k,t})^2 \\
& \leq \frac{1}{T}  \left(\sum_{t=1}^T (X_{k,t})^2   \right)^{1/2}\left( \sum_{t=1}^T (\widehat{X}_{k,t}-X_{k,t})^2  \right)^{1/2}+\frac{1}{T} \sum_{t=1}^{T}(\widehat{X}_{k,t}-X_{k,t})^2=O_p\left(\frac{1}{\sqrt{T}}\right) 
\end{align*}
since we have that the term $\frac{1}{T}  \sum_{t=1}^T (X_{k,t})^2 =O_p(1)$ because of Chebyshev inequality and boundedness of moments, while the term $\frac{1}{T} \sum_{t=1}^{T}(\widehat{X}_{k,t}-X_{k,t})^2=O_p(T^{-1})$ as consequence of what proved in Proposition \ref{prop_conditional1}. Similar logic can be applied for $k\not=l$. By application of Chebyshev inequality: $||C_{kl,X}-\Gamma_{kl,X}||=O_p(T^{-1/2}), \; \forall k,l$. 
This in turn implies: $\widehat{\Gamma}_{kl,X}-\Gamma_{kl,X}=O_p(T^{-1/2})$, and $\widehat{\Gamma}_{kl,Z}-\Gamma_{kl,Z}=O_p(T^{-1/2})$, with $\frac{1}{T}  \sum_{t=1}^T (Z_{k,t})^2 =O_p(1)$, as well as $\Gamma_{kl,X}=O(1)$ and $\Gamma_{kl,Z}=O(1)$. By virtue of the continuous mapping theorem: $\widehat{\Gamma}_{Z}^{-1}\otimes\widehat{\Gamma}_{X}^{-1} - \Gamma_{Z}^{-1} \otimes \Gamma_{X}^{-1}=O_p(T^{-1/2})$. Since $\Gamma_X$ and $\Gamma_Z$ are bounded (i.e., $\Gamma_X=O(1),\Gamma_Z=O(1)$), studying the boundedness of the term, $\widehat{\Gamma}_{\hat{X}\hat{Z}}(j)$, is equivalent to studying the boundedness of the term, $\widehat{C}_{UV}(j)$. Thus, the focus shifts to the term:
\begin{align*}
& \sum_{j=1}^{T-1} \omega(j) \big(\text{vec}(\widehat{\Gamma}_{\hat{X}\hat{Z}}(j))^\prime \text{vec}(\widehat{\Gamma}_{\hat{X}\hat{Z}}(j))\big)\asymp \sum_{j=1}^{T-1} \omega(j) \big(\text{vec}(\widehat{C}_{UV}(j))^\prime \text{vec}(\widehat{C}_{UV}(j))\big) \\
&\; \; \; \;  \; \; = \sum_{j=1}^{T-1} \omega(j)\Big(\text{vec}(\widehat{C}_{UV}(j))^\prime \text{vec}(\widehat{C}_{UV}(j))-\text{vec}(C_{UV}(j))^\prime \text{vec}(C_{UV}(j)) \\
& \; \; \; \; \; \; \; \;  \; \; \; \;  \; \; \; \;  \; \; \; \;  \; \; \; \;  \; \; \; \;  \; \; \; \;  +\text{vec}(C_{UV}(j))^\prime \text{vec}(C_{UV}(j)) \Big)
\end{align*}
Both parts of the sum are $O_p(MT^{-1})$ by Proposition \ref{prop_conditional1}, since:
\begin{align*}
    & \sum_{j=1}^{T-1} \omega(j)\left(\text{vec}(\widehat{C}_{UV}(j))^\prime \text{vec}(\widehat{C}_{UV}(j))-\text{vec}(C_{UV}(j))^\prime \text{vec}(C_{UV}(j)) \right) \\
    & \; = \sum_{k=1}^{d_1} \sum_{l=1}^{d_2} \sum_{j=1}^{T-1} \omega(j)\left\{\left(\widehat{C}_{kl,UV}(j)-C_{kl,UV}(j)\right)^2+2 \widehat{C}_{kl,UV}(j)\left(\widehat{C}_{kl,UV}(j)-C_{kl,UV}(j)\right)\right\},\\
    &\sum_{j=1}^{T-1} \omega(j)\big(\text{vec}(C_{UV}(j))^\prime \text{vec}(C_{UV}(j)) \big)= \sum_{r=1}^{d_1d_2} \sum_{j=1}^{T-1} \omega(j) (C_{r,UV}(j))^2
\end{align*}
This means that: $\sum_{j=1}^{T-1} \omega(j) \text{vec}(\widehat{C}_{UV}(j))^\prime \text{vec}(\widehat{C}_{UV}(j))=O_p(MT^{-1})$. \\Hence: $T \left(\widehat{\mathcal{T}}_{\omega}-\widehat{\mathcal{T}}_{\omega}^\star \right)= T \cdot O_p(MT^{-1}) \cdot O_p(T^{-1/2})=O_p(MT^{-1/2})$, 
which concludes the proof as $\widehat{D}_{\omega,T}^{(Hete)}=O(M)$ by  Proposition \ref{prop_conditional4} and Proposition \ref{prop_moments}.

\subsubsection{Proof of Proposition \ref{prop_conditional3}}\label{proof_prop_conditional3}

We consider the decomposition: $\widehat{\mathcal{T}}_{2\omega}^{c}-\mathcal{T}_{2\omega}^{c\star}=\left(\widehat{\mathcal{\mathcal{T}}}_{2\omega}^{c}-\breve{\mathcal{T}}_{2\omega}^{c}\right)+\left(\breve{\mathcal{T}}_{2\omega}^{c}-\mathcal{T}_{2\omega}^{c\star}\right)$, with:\\ $ \breve{\mathcal{T}}_{2\omega}= \frac{1}{T^2} \sum_{j=1}^{T-2} \omega(j) \sum_{s,t=j+1,s\not=t,j>|t-s|}^{T} \langle\breve{U}_t,\breve{U}_s \rangle  \langle \breve{V}_{t-j},\breve{V}_{s-j} \rangle $. By a similar argument of Proposition \ref{prop_conditional2}: $ (\widehat{D}_{\omega,T}^{(Hete)})^{-1/2}T\left(\widehat{\mathcal{T}}_{2\omega}^c-\breve{\mathcal{T}}_{2\omega}^c\right)\xrightarrow{p} 0$,
since the convergence is uniquely driven by the distance between $\widehat{\Gamma}_{X}$ and $\Gamma_{X}$, and between $\widehat{\Gamma}_{Z}$ and $\Gamma_{Z}$. It remains to prove: $T\left(\breve{\mathcal{T}}_{2\omega}^c-\mathcal{T}_{2\omega}^{c\star}\right)=o_p(M^{1/2})$. We have:\\
\begin{align*}
&T\left(\breve{\mathcal{T}}_{2\omega}^c-\mathcal{T}_{2\omega}^{c\star}\right)= \frac{1}{T} \sum_{j=1}^{T-2} \omega(j) \sum_{s,t=j+1,s\not=t,j>|t-s|}^{T}  \left( \breve{U}_t^\prime\breve{U}_s \breve{V}_{t-j}^\prime\breve{V}_{s-j}-(U_t)^\prime U_s (V_{t-j})^\prime V_{s-j}\right)\\
&=\sum_{k=1}^{d_1}\sum_{l=1}^{d_2}\frac{1}{T} \sum_{j=1}^{T-2} \omega(j) \sum_{s,t=j+1,s\not=t,j>|t-s|}^{T} \left( \breve{U}_{k,t}\breve{U}_{k,s} \breve{V}_{l,t-j}\breve{V}_{l,s-j}- U_{k,t} U_{k,s} V_{l,t-j} V_{l,s-j}\right)\\
&=\sum_{k=1}^{d_1}\sum_{l=1}^{d_2}\frac{1}{T} \sum_{j=1}^{T-2} \omega(j) \sum_{s,t=j+1,s\not=t,j>|t-s|}^{T} \Bigg( \left[\breve{U}_{k,t}\breve{U}_{k,s} - U_{k,t}U_{k,s} \right]V_{l,t-j} V_{l,s-j} +U_{k,t}U_{k,s}\\
& \cdot\left[\breve{V}_{l,t-j}\breve{V}_{l,s-j}-
V_{l,t-j} V_{l,s-j}\right]+ \left[\breve{U}_{k,t}\breve{U}_{k,s} - U_{k,t}U_{k,s}\right]\left[\breve{V}_{l,t-j}\breve{V}_{l,s-j}-
V_{l,t-j} V_{l,s-j}\right]
\Bigg)\\
&=\sum_{k=1}^{d_1}\sum_{l=1}^{d_2} \sum_{j=1}^{T-2} \omega(j) \sum_{s,t=j+1,s\not=t,j>|t-s|}^{T}T^{-1}\Bigg( \Big[
(\breve{U}_{k,t}- U_{k,t})U_{k,s} \\
& + (\breve{U}_{k,s} - U_{k,s})U_{k,t}+
(\breve{U}_{k,t}- U_{k,t})(\breve{U}_{k,s}-U_{k,s})
\Big]V_{l,t-j} V_{l,s-j}+U_{k,t}U_{k,s}\Big[\\
&
(\breve{V}_{l,t-j}-V_{l,t-j}) V_{l,s-j}+(\breve{V}_{l,s-j}-V_{l,s-j})V_{l,t-j}+(\breve{V}_{l,t-j}-V_{l,t-j}) (\breve{V}_{l,s-j}-V_{l,s-j})
\Big]
\\
&+\left[\breve{U}_{k,t}\breve{U}_{k,s} - U_{k,t}U_{k,s}\right]\left[\breve{V}_{l,t-j}\breve{V}_{l,s-j}-
V_{l,t-j} V_{l,s-j}\right]\Bigg)
\end{align*}
Proposition \ref{prop_conditional1} concludes the proof as all terms inside the brackets are $O_p(T^{-3/2})$.

\subsection{Monte Carlo study}

\subsubsection{DGPs under the Null}
We consider four DGP families where $X$ is a standardized Strong White Noise (SWN): $X_{t} = \epsilon_x, \quad \epsilon_x\sim \text{i.i.d.}(0,1)$, ensuring weak exogeneity holds by construction. Following Lemma \ref{coroll_toyexample1}, the DGPs vary along two dimensions: (i) the inverse causality channel from past $X$ to present $Z$, and (ii) the conditional mean and variance properties of $Z$. The four specifications for the univariate process $Z$ are:
\begin{enumerate}[a)]
\item \textsc{DGP 1a (Linear-in-Mean):}
\begin{align*}
Z_t=\alpha Z_{t-1}+\beta_1 X_{t-1}+\epsilon_z, \quad \epsilon_z \sim \text{i.i.d.}(0,1)
\end{align*}
\item \textsc{DGP 2a (Nonlinear-in-Mean):}
\begin{align*}
Z_t=\alpha Z_{t-1}+\beta_1 X_{t-1}+\sum_{h=1}^2\frac{\beta_1}{2h}X_{t-1}^{2h}+\epsilon_z, \quad \epsilon_z \sim \text{i.i.d.}(0,1)
\end{align*}
\item \textsc{DGP 3a (ARCH-type Inverse Causality):}
\begin{align*}
Z_t&=\alpha Z_{t-1}+\sigma_{z,t}\epsilon_z, \quad \epsilon_z \sim \text{i.i.d.}(0,1) \\
\sigma_{z,t}^2&=1+\sum_{h=1}^3\frac{\beta_1}{h!} X_{t-h}^2
\end{align*}
\item \textsc{DGP 4a (GARCH-type Inverse Causality):}
\begin{align*}
Z_t&=\sigma_{z,t}\epsilon_z, \quad \epsilon_z \sim \text{i.i.d.}(0,1) \\
\sigma_{z,t}^2&=1+\alpha \sigma_{z,t-1}^2+\beta_2 X_{t-1}^2
\end{align*}
\end{enumerate}
Parameter values span: $\alpha\in\{0, 0.2, 0.3, 0.4, 0.5, 0.6, 0.7\}$, $\beta_1\in\{0, 0.2, 0.4, 0.6, 1.0\}$, $\beta_2\in\{0.05, 0.1, 0.2, 0.3\}$.
When $\beta_1=0$ the two processes are mutually independent. When $\beta_1\not=0$ ($\beta_2\not=0$) inverse causality is present while weak exogeneity is maintained.

To examine the magnitude of the correction term, given that higher-order moments of $X$ are relevant (Proposition \ref{prop_moments}), innovations are drawn from a multivariate $t$-distribution with 6 degrees of freedom: $(\epsilon_x, \epsilon_z) \sim t_6(0,I_2)$.\footnote{Results under Gaussian innovations, $(\epsilon_x, \epsilon_z) \sim \mathcal{N}(0,I_2)$, are qualitatively similar.} This choice aligns with the underlying assumption of the empirical analysis of this paper (Section \ref{empirics}), and with empirical evidence from macroeconomic applications. For instance, \cite{brunnermeier2021feedbacks} estimate structural shocks identified through heteroskedasticity in SVARs as scaled 
$t$-variates with 5.7 degrees of freedom, thus motivating our calibration.
The smoothing parameter $M$ takes value $12$ or $36$, which corresponds to lags up to 1 year or 3 years, when the data is observed at monthly frequency. For each design, $1,000$ Monte Carlo simulations are run, with $T=1000$. All results report rejection rates at the 5\% nominal significance level.

\begin{table}[h!]
\centering
\scalebox{0.73}{
\begin{threeparttable}
\caption[Empirical rates]{Rejection frequencies for \textsc{DGP1a}:  \footnotesize{This table presents the rejection frequencies of five testing procedures, when the time series are generated by \textsc{DGP1a}; sample size: $T=1000$; $1000$ iterations; the weighting function is the quadratic spectral kernel;  the smoothing parameter range is: $M=\{12,36\}$; nominal significance level is 5\%.}}
\footnotesize
\label{size_DGP1a_T100}
\begin{tabular}{lcccccccccccc}
  $M=12$ & $\approx 2\ln{T}$ &  &  &  &  &  &  &  &  &  &  &  \\
  \toprule
   & \multicolumn{5}{c}{Hete} &  &  & \multicolumn{5}{c}{Hong} \\
  \cmidrule{2-6}\cmidrule{9-13}
   & $\beta_1=0$ & $\beta_1=0.2$ & $\beta_1=0.4$ & $\beta_1=0.6$ & $\beta_1=1$ &  &  & $\beta_1=0$ & $\beta_1=0.2$ & $\beta_1=0.4$ & $\beta_1=0.6$ & $\beta_1=1$ \\
  \cmidrule{2-6}\cmidrule{9-13}
  $\alpha=0$ & 0.046 & 0.056 & 0.041 & 0.048 & 0.044 &  &  & 0.051 & 0.044 & 0.034 & 0.037 & 0.043 \\
  $\alpha=0.2$ & 0.049 & 0.064 & 0.059 & 0.044 & 0.048 &  &  & 0.044 & 0.051 & 0.044 & 0.050 & 0.034 \\
  $\alpha=0.3$ & 0.046 & 0.065 & 0.036 & 0.052 & 0.051 &  &  & 0.052 & 0.051 & 0.065 & 0.059 & 0.070 \\
  $\alpha=0.4$ & 0.041 & 0.043 & 0.055 & 0.041 & 0.045 &  &  & 0.052 & 0.066 & 0.060 & 0.062 & 0.054 \\
  $\alpha=0.5$ & 0.054 & 0.045 & 0.040 & 0.046 & 0.052 &  &  & 0.062 & 0.079 & 0.059 & 0.068 & 0.065 \\
  $\alpha=0.6$ & 0.049 & 0.043 & 0.055 & 0.055 & 0.050 &  &  & 0.068 & 0.079 & 0.089 & 0.085 & 0.081 \\
  $\alpha=0.7$ & 0.046 & 0.041 & 0.052 & 0.052 & 0.035 &  &  & 0.075 & 0.099 & 0.090 & 0.087 & 0.092 \\
  \cmidrule{2-13}
\end{tabular}%

\vspace{0.4cm}

\begin{tabular}{lccccc}
  $M=12$ & $\approx 2\ln{T}$ &  &  &  &  \\
  \toprule
   & \multicolumn{5}{c}{HeteF} \\
  \cmidrule{2-6}
   & $\beta_1=0$ & $\beta_1=0.2$ & $\beta_1=0.4$ & $\beta_1=0.6$ & $\beta_1=1$ \\
  \cmidrule{2-6}
  $\alpha=0$ & 0.050 & 0.054 & 0.040 & 0.046 & 0.045 \\
  $\alpha=0.2$ & 0.039 & 0.050 & 0.045 & 0.043 & 0.031 \\
  $\alpha=0.3$ & 0.033 & 0.041 & 0.030 & 0.049 & 0.040 \\
  $\alpha=0.4$ & 0.037 & 0.024 & 0.033 & 0.026 & 0.030 \\
  $\alpha=0.5$ & 0.022 & 0.028 & 0.021 & 0.030 & 0.018 \\
  $\alpha=0.6$ & 0.020 & 0.020 & 0.030 & 0.034 & 0.029 \\
  $\alpha=0.7$ & 0.021 & 0.018 & 0.026 & 0.028 & 0.022 \\
  \cmidrule{2-6}
\end{tabular}%

\vspace{0.4cm}

\begin{tabular}{lcccccccccccc}
  $M=36$ & $\approx \sqrt{T}$ &  &  &  &  &  &  &  &  &  &  &  \\
  \toprule
   & \multicolumn{5}{c}{Hete} &  &  & \multicolumn{5}{c}{Hong} \\
  \cmidrule{2-6}\cmidrule{9-13}
   & $\beta_1=0$ & $\beta_1=0.2$ & $\beta_1=0.4$ & $\beta_1=0.6$ & $\beta_1=1$ &  &  & $\beta_1=0$ & $\beta_1=0.2$ & $\beta_1=0.4$ & $\beta_1=0.6$ & $\beta_1=1$ \\
  \cmidrule{2-6}\cmidrule{9-13}
  $\alpha=0$ & 0.043 & 0.051 & 0.041 & 0.062 & 0.034 &  &  & 0.046 & 0.049 & 0.043 & 0.045 & 0.050 \\
  $\alpha=0.2$ & 0.044 & 0.058 & 0.061 & 0.055 & 0.045 &  &  & 0.052 & 0.055 & 0.051 & 0.051 & 0.050 \\
  $\alpha=0.3$ & 0.059 & 0.055 & 0.048 & 0.049 & 0.044 &  &  & 0.058 & 0.055 & 0.058 & 0.054 & 0.068 \\
  $\alpha=0.4$ & 0.037 & 0.060 & 0.060 & 0.054 & 0.048 &  &  & 0.055 & 0.075 & 0.083 & 0.072 & 0.056 \\
  $\alpha=0.5$ & 0.056 & 0.056 & 0.049 & 0.052 & 0.049 &  &  & 0.084 & 0.094 & 0.079 & 0.090 & 0.087 \\
  $\alpha=0.6$ & 0.039 & 0.033 & 0.048 & 0.056 & 0.052 &  &  & 0.090 & 0.102 & 0.129 & 0.115 & 0.114 \\
  $\alpha=0.7$ & 0.050 & 0.037 & 0.040 & 0.046 & 0.049 &  &  & 0.135 & 0.151 & 0.133 & 0.131 & 0.147 \\
  \cmidrule{2-13}
\end{tabular}%

\vspace{0.4cm}

\begin{tabular}{lccccc}
  $M=36$ & $\approx \sqrt{T}$ &  &  &  &  \\
  \toprule
   & \multicolumn{5}{c}{HeteF} \\
  \cmidrule{2-6}
   & $\beta_1=0$ & $\beta_1=0.2$ & $\beta_1=0.4$ & $\beta_1=0.6$ & $\beta_1=1$ \\
  \cmidrule{2-6}
  $\alpha=0$ & 0.046 & 0.050 & 0.037 & 0.065 & 0.033 \\
  $\alpha=0.2$ & 0.040 & 0.041 & 0.048 & 0.043 & 0.039 \\
  $\alpha=0.3$ & 0.041 & 0.039 & 0.043 & 0.035 & 0.039 \\
  $\alpha=0.4$ & 0.018 & 0.033 & 0.033 & 0.021 & 0.026 \\
  $\alpha=0.5$ & 0.022 & 0.021 & 0.024 & 0.030 & 0.028 \\
  $\alpha=0.6$ & 0.024 & 0.020 & 0.022 & 0.029 & 0.015 \\
  $\alpha=0.7$ & 0.020 & 0.016 & 0.029 & 0.025 & 0.029 \\
  \cmidrule{2-6}
\end{tabular}%

\vspace{0.4cm}

\begin{tabular}{lcccccccccccc}
  \toprule
   & \multicolumn{5}{c}{Wald Single} &  &  & \multicolumn{5}{c}{Wald Double} \\
  \cmidrule{2-6}\cmidrule{9-13}
   & $\beta_1=0$ & $\beta_1=0.2$ & $\beta_1=0.4$ & $\beta_1=0.6$ & $\beta_1=1$ &  &  & $\beta_1=0$ & $\beta_1=0.2$ & $\beta_1=0.4$ & $\beta_1=0.6$ & $\beta_1=1$ \\
  \cmidrule{2-6}\cmidrule{9-13}
  $\alpha=0$ & 0.046 & 0.060 & 0.060 & 0.051 & 0.050 &  &  & 0.043 & 0.044 & 0.054 & 0.055 & 0.049 \\
  $\alpha=0.2$ & 0.050 & 0.039 & 0.054 & 0.055 & 0.045 &  &  & 0.048 & 0.045 & 0.044 & 0.059 & 0.037 \\
  $\alpha=0.3$ & 0.051 & 0.044 & 0.051 & 0.058 & 0.059 &  &  & 0.055 & 0.055 & 0.043 & 0.061 & 0.076 \\
  $\alpha=0.4$ & 0.041 & 0.054 & 0.056 & 0.052 & 0.046 &  &  & 0.048 & 0.054 & 0.056 & 0.045 & 0.043 \\
  $\alpha=0.5$ & 0.058 & 0.045 & 0.055 & 0.049 & 0.060 &  &  & 0.036 & 0.054 & 0.051 & 0.043 & 0.049 \\
  $\alpha=0.6$ & 0.044 & 0.041 & 0.044 & 0.043 & 0.060 &  &  & 0.037 & 0.040 & 0.049 & 0.056 & 0.048 \\
  $\alpha=0.7$ & 0.058 & 0.050 & 0.049 & 0.054 & 0.052 &  &  & 0.048 & 0.058 & 0.041 & 0.045 & 0.051 \\
  \cmidrule{2-13}
\end{tabular}%
\end{threeparttable}
}
\end{table}


\begin{table}[h!]
\centering
\scalebox{0.73}{
\begin{threeparttable}
\caption[Empirical rates]{Rejection frequencies for \textsc{DGP2a}:  \footnotesize{This table presents the rejection frequencies of five testing procedures, when the time series are generated by \textsc{DGP2a}; sample size: $T=1000$; $1000$ iterations; the weighting function is the quadratic spectral kernel;  the smoothing parameter range is: $M=\{12,36\}$; nominal significance level is 5\%.}}
\footnotesize
\label{size_DGP2a_T100}
\begin{tabular}{lcccccccccc}
  $M=12$ & $\approx 2\ln{T}$ &  &  &  &  &  &  &  &  &  \\
  \toprule
   & \multicolumn{4}{c}{Hete} &  &  & \multicolumn{4}{c}{Hong} \\
  \cmidrule{2-5}\cmidrule{8-11}
   & $\beta_1=0.2$ & $\beta_1=0.4$ & $\beta_1=0.6$ & $\beta_1=1$ &  &  & $\beta_1=0.2$ & $\beta_1=0.4$ & $\beta_1=0.6$ & $\beta_1=1$ \\
  \cmidrule{2-5}\cmidrule{8-11}
  $\alpha=0.2$ & 0.056 & 0.069 & 0.070 & 0.064 &  &  & 0.061 & 0.066 & 0.052 & 0.044 \\
  $\alpha=0.3$ & 0.077 & 0.071 & 0.075 & 0.069 &  &  & 0.055 & 0.061 & 0.065 & 0.056 \\
  $\alpha=0.4$ & 0.056 & 0.075 & 0.074 & 0.052 &  &  & 0.056 & 0.069 & 0.065 & 0.061 \\
  $\alpha=0.5$ & 0.055 & 0.059 & 0.079 & 0.062 &  &  & 0.054 & 0.054 & 0.070 & 0.058 \\
  $\alpha=0.6$ & 0.074 & 0.037 & 0.080 & 0.071 &  &  & 0.084 & 0.068 & 0.099 & 0.077 \\
  $\alpha=0.7$ & 0.066 & 0.069 & 0.068 & 0.058 &  &  & 0.094 & 0.086 & 0.083 & 0.079 \\
  \cmidrule{2-11}
\end{tabular}%

\vspace{0.4cm}

\begin{tabular}{lcccc}
  $M=12$ & $\approx 2\ln{T}$ &  &  &  \\
  \toprule
   & \multicolumn{4}{c}{HeteF} \\
  \cmidrule{2-5}
   & $\beta_1=0.2$ & $\beta_1=0.4$ & $\beta_1=0.6$ & $\beta_1=1$ \\
  \cmidrule{2-5}
  $\alpha=0.2$ & 0.054 & 0.059 & 0.075 & 0.060 \\
  $\alpha=0.3$ & 0.062 & 0.071 & 0.077 & 0.065 \\
  $\alpha=0.4$ & 0.055 & 0.071 & 0.070 & 0.045 \\
  $\alpha=0.5$ & 0.061 & 0.058 & 0.070 & 0.059 \\
  $\alpha=0.6$ & 0.072 & 0.034 & 0.075 & 0.064 \\
  $\alpha=0.7$ & 0.061 & 0.072 & 0.064 & 0.056 \\
  \cmidrule{2-5}
\end{tabular}%

\vspace{0.4cm}

\begin{tabular}{lcccccccccc}
  $M=36$ & $\approx \sqrt{T}$ &  &  &  &  &  &  &  &  &  \\
  \toprule
   & \multicolumn{4}{c}{Hete} &  &  & \multicolumn{4}{c}{Hong} \\
  \cmidrule{2-5}\cmidrule{8-11}
   & $\beta_1=0.2$ & $\beta_1=0.4$ & $\beta_1=0.6$ & $\beta_1=1$ &  &  & $\beta_1=0.2$ & $\beta_1=0.4$ & $\beta_1=0.6$ & $\beta_1=1$ \\
  \cmidrule{2-5}\cmidrule{8-11}
  $\alpha=0.2$ & 0.072 & 0.098 & 0.085 & 0.075 &  &  & 0.076 & 0.070 & 0.060 & 0.060 \\
  $\alpha=0.3$ & 0.083 & 0.086 & 0.080 & 0.069 &  &  & 0.076 & 0.080 & 0.079 & 0.060 \\
  $\alpha=0.4$ & 0.056 & 0.081 & 0.083 & 0.079 &  &  & 0.087 & 0.081 & 0.076 & 0.085 \\
  $\alpha=0.5$ & 0.075 & 0.069 & 0.091 & 0.095 &  &  & 0.090 & 0.080 & 0.087 & 0.090 \\
  $\alpha=0.6$ & 0.072 & 0.062 & 0.070 & 0.085 &  &  & 0.111 & 0.116 & 0.116 & 0.120 \\
  $\alpha=0.7$ & 0.076 & 0.084 & 0.095 & 0.070 &  &  & 0.170 & 0.164 & 0.166 & 0.151 \\
  \cmidrule{2-11}
\end{tabular}%

\vspace{0.4cm}

\begin{tabular}{lcccc}
  $M=36$ & $\approx \sqrt{T}$ &  &  &  \\
  \toprule
   & \multicolumn{4}{c}{HeteF} \\
  \cmidrule{2-5}
   & $\beta_1=0.2$ & $\beta_1=0.4$ & $\beta_1=0.6$ & $\beta_1=1$ \\
  \cmidrule{2-5}
  $\alpha=0.2$ & 0.069 & 0.095 & 0.083 & 0.066 \\
  $\alpha=0.3$ & 0.074 & 0.072 & 0.076 & 0.066 \\
  $\alpha=0.4$ & 0.058 & 0.075 & 0.079 & 0.074 \\
  $\alpha=0.5$ & 0.076 & 0.055 & 0.074 & 0.081 \\
  $\alpha=0.6$ & 0.065 & 0.060 & 0.076 & 0.070 \\
  $\alpha=0.7$ & 0.069 & 0.076 & 0.087 & 0.055 \\
  \cmidrule{2-5}
\end{tabular}%

\vspace{0.4cm}

\begin{tabular}{lcccccccccc}
  \toprule
   & \multicolumn{4}{c}{Wald Single} &  &  & \multicolumn{4}{c}{Wald Double} \\
  \cmidrule{2-5}\cmidrule{8-11}
   & $\beta_1=0.2$ & $\beta_1=0.4$ & $\beta_1=0.6$ & $\beta_1=1$ &  &  & $\beta_1=0.2$ & $\beta_1=0.4$ & $\beta_1=0.6$ & $\beta_1=1$ \\
  \cmidrule{2-5}\cmidrule{8-11}
  $\alpha=0.2$ & 0.064 & 0.055 & 0.040 & 0.046 &  &  & 0.054 & 0.049 & 0.041 & 0.037 \\
  $\alpha=0.3$ & 0.046 & 0.050 & 0.033 & 0.048 &  &  & 0.043 & 0.036 & 0.034 & 0.051 \\
  $\alpha=0.4$ & 0.050 & 0.045 & 0.061 & 0.043 &  &  & 0.040 & 0.028 & 0.056 & 0.036 \\
  $\alpha=0.5$ & 0.049 & 0.045 & 0.031 & 0.039 &  &  & 0.037 & 0.039 & 0.039 & 0.046 \\
  $\alpha=0.6$ & 0.044 & 0.041 & 0.044 & 0.041 &  &  & 0.028 & 0.021 & 0.040 & 0.040 \\
  $\alpha=0.7$ & 0.040 & 0.037 & 0.031 & 0.043 &  &  & 0.033 & 0.021 & 0.025 & 0.024 \\
  \cmidrule{2-11}
\end{tabular}%
\end{threeparttable}
}
\end{table}


\begin{table}[h!]
\centering
\scalebox{0.73}{
\begin{threeparttable}
\caption[Empirical rates]{Rejection frequencies for \textsc{DGP3a}:  \footnotesize{This table presents the rejection frequencies of five testing procedures, when the time series are generated by \textsc{DGP3a}; sample size: $T=1000$; $1000$ iterations; the weighting function is the quadratic spectral kernel;  the smoothing parameter range is: $M=\{12,36\}$; nominal significance level is 5\%.}}
\footnotesize
\label{size_DGP3a_T100}
\begin{tabular}{lcccccccccc}
  $M=12$ & $\approx 2\ln{T}$ &  &  &  &  &  &  &  &  &  \\
  \toprule
   & \multicolumn{4}{c}{Hete} &  &  & \multicolumn{4}{c}{Hong} \\
  \cmidrule{2-5}\cmidrule{8-11}
   & $\beta_1=0.2$ & $\beta_1=0.4$ & $\beta_1=0.6$ & $\beta_1=1$ &  &  & $\beta_1=0.2$ & $\beta_1=0.4$ & $\beta_1=0.6$ & $\beta_1=1$ \\
  \cmidrule{2-5}\cmidrule{8-11}
  $\alpha=0.2$ & 0.071 & 0.045 & 0.048 & 0.045 &  &  & 0.060 & 0.050 & 0.056 & 0.050 \\
  $\alpha=0.3$ & 0.049 & 0.041 & 0.052 & 0.054 &  &  & 0.059 & 0.070 & 0.069 & 0.043 \\
  $\alpha=0.4$ & 0.054 & 0.062 & 0.049 & 0.055 &  &  & 0.069 & 0.068 & 0.061 & 0.058 \\
  $\alpha=0.5$ & 0.050 & 0.046 & 0.040 & 0.031 &  &  & 0.064 & 0.086 & 0.068 & 0.066 \\
  $\alpha=0.6$ & 0.059 & 0.066 & 0.046 & 0.041 &  &  & 0.072 & 0.079 & 0.068 & 0.092 \\
  $\alpha=0.7$ & 0.046 & 0.051 & 0.043 & 0.060 &  &  & 0.095 & 0.092 & 0.084 & 0.090 \\
  \cmidrule{2-11}
\end{tabular}%

\vspace{0.4cm}

\begin{tabular}{lcccc}
  $M=12$ & $\approx 2\ln{T}$ &  &  &  \\
  \toprule
   & \multicolumn{4}{c}{HeteF} \\
  \cmidrule{2-5}
   & $\beta_1=0.2$ & $\beta_1=0.4$ & $\beta_1=0.6$ & $\beta_1=1$ \\
  \cmidrule{2-5}
  $\alpha=0.2$ & 0.061 & 0.040 & 0.036 & 0.034 \\
  $\alpha=0.3$ & 0.035 & 0.024 & 0.030 & 0.048 \\
  $\alpha=0.4$ & 0.033 & 0.031 & 0.030 & 0.035 \\
  $\alpha=0.5$ & 0.030 & 0.019 & 0.019 & 0.024 \\
  $\alpha=0.6$ & 0.036 & 0.036 & 0.028 & 0.026 \\
  $\alpha=0.7$ & 0.029 & 0.018 & 0.028 & 0.034 \\
  \cmidrule{2-5}
\end{tabular}%

\vspace{0.4cm}

\begin{tabular}{lcccccccccc}
  $M=36$ & $\approx \sqrt{T}$ &  &  &  &  &  &  &  &  &  \\
  \toprule
   & \multicolumn{4}{c}{Hete} &  &  & \multicolumn{4}{c}{Hong} \\
  \cmidrule{2-5}\cmidrule{8-11}
   & $\beta_1=0.2$ & $\beta_1=0.4$ & $\beta_1=0.6$ & $\beta_1=1$ &  &  & $\beta_1=0.2$ & $\beta_1=0.4$ & $\beta_1=0.6$ & $\beta_1=1$ \\
  \cmidrule{2-5}\cmidrule{8-11}
  $\alpha=0.2$ & 0.058 & 0.054 & 0.037 & 0.043 &  &  & 0.061 & 0.043 & 0.046 & 0.058 \\
  $\alpha=0.3$ & 0.049 & 0.035 & 0.037 & 0.037 &  &  & 0.059 & 0.055 & 0.056 & 0.051 \\
  $\alpha=0.4$ & 0.048 & 0.046 & 0.052 & 0.037 &  &  & 0.080 & 0.075 & 0.061 & 0.081 \\
  $\alpha=0.5$ & 0.046 & 0.052 & 0.046 & 0.045 &  &  & 0.085 & 0.090 & 0.089 & 0.091 \\
  $\alpha=0.6$ & 0.049 & 0.061 & 0.052 & 0.054 &  &  & 0.101 & 0.115 & 0.117 & 0.119 \\
  $\alpha=0.7$ & 0.044 & 0.046 & 0.045 & 0.060 &  &  & 0.131 & 0.159 & 0.124 & 0.149 \\
  \cmidrule{2-11}
\end{tabular}%

\vspace{0.4cm}

\begin{tabular}{lcccc}
  $M=36$ & $\approx \sqrt{T}$ &  &  &  \\
  \toprule
   & \multicolumn{4}{c}{HeteF} \\
  \cmidrule{2-5}
   & $\beta_1=0.2$ & $\beta_1=0.4$ & $\beta_1=0.6$ & $\beta_1=1$ \\
  \cmidrule{2-5}
  $\alpha=0.2$ & 0.046 & 0.043 & 0.035 & 0.033 \\
  $\alpha=0.3$ & 0.033 & 0.025 & 0.026 & 0.035 \\
  $\alpha=0.4$ & 0.030 & 0.025 & 0.024 & 0.029 \\
  $\alpha=0.5$ & 0.025 & 0.020 & 0.021 & 0.025 \\
  $\alpha=0.6$ & 0.028 & 0.025 & 0.019 & 0.025 \\
  $\alpha=0.7$ & 0.028 & 0.022 & 0.025 & 0.026 \\
  \cmidrule{2-5}
\end{tabular}%

\vspace{0.4cm}

\begin{tabular}{lcccccccccc}
  \toprule
   & \multicolumn{4}{c}{Wald Single} &  &  & \multicolumn{4}{c}{Wald Double} \\
  \cmidrule{2-5}\cmidrule{8-11}
   & $\beta_1=0.2$ & $\beta_1=0.4$ & $\beta_1=0.6$ & $\beta_1=1$ &  &  & $\beta_1=0.2$ & $\beta_1=0.4$ & $\beta_1=0.6$ & $\beta_1=1$ \\
  \cmidrule{2-5}\cmidrule{8-11}
  $\alpha=0.2$ & 0.054 & 0.055 & 0.045 & 0.055 &  &  & 0.041 & 0.068 & 0.049 & 0.058 \\
  $\alpha=0.3$ & 0.046 & 0.050 & 0.060 & 0.051 &  &  & 0.049 & 0.046 & 0.051 & 0.043 \\
  $\alpha=0.4$ & 0.061 & 0.058 & 0.050 & 0.045 &  &  & 0.068 & 0.051 & 0.041 & 0.051 \\
  $\alpha=0.5$ & 0.040 & 0.054 & 0.050 & 0.051 &  &  & 0.040 & 0.049 & 0.046 & 0.051 \\
  $\alpha=0.6$ & 0.041 & 0.060 & 0.045 & 0.050 &  &  & 0.043 & 0.048 & 0.049 & 0.055 \\
  $\alpha=0.7$ & 0.054 & 0.043 & 0.039 & 0.050 &  &  & 0.048 & 0.039 & 0.056 & 0.068 \\
  \cmidrule{2-11}
\end{tabular}%
\end{threeparttable}
}
\end{table}


\begin{table}[h!]
\centering
\scalebox{0.73}{
\begin{threeparttable}
\caption[Empirical rates]{Rejection frequencies for \textsc{DGP4a}:  \footnotesize{This table presents the rejection frequencies of five testing procedures, when the time series are generated by \textsc{DGP4a}; sample size: $T=1000$; $1000$ iterations; the weighting function is the quadratic spectral kernel;  the smoothing parameter range is: $M=\{12,36\}$; nominal significance level is 5\%.}}
\footnotesize
\label{size_DGP4a_T100}
\begin{tabular}{lcccccccccc}
  $M=12$ & $\approx 2\ln{T}$ &  &  &  &  &  &  &  &  &  \\
  \toprule
   & \multicolumn{4}{c}{Hete} &  &  & \multicolumn{4}{c}{Hong} \\
  \cmidrule{2-5}\cmidrule{8-11}
   & $\beta_2=0.05$ & $\beta_2=0.1$ & $\beta_2=0.2$ & $\beta_2=0.3$ &  &  & $\beta_2=0.05$ & $\beta_2=0.1$ & $\beta_2=0.2$ & $\beta_2=0.3$ \\
  \cmidrule{2-5}\cmidrule{8-11}
  $\alpha=0.2$ & 0.049 & 0.046 & 0.040 & 0.044 &  &  & 0.045 & 0.052 & 0.037 & 0.050 \\
  $\alpha=0.3$ & 0.039 & 0.051 & 0.040 & 0.074 &  &  & 0.058 & 0.049 & 0.044 & 0.036 \\
  $\alpha=0.4$ & 0.055 & 0.059 & 0.056 & 0.061 &  &  & 0.050 & 0.048 & 0.044 & 0.059 \\
  $\alpha=0.5$ & 0.052 & 0.040 & 0.046 & 0.072 &  &  & 0.046 & 0.036 & 0.051 & 0.040 \\
  $\alpha=0.6$ & 0.043 & 0.059 & 0.054 & 0.048 &  &  & 0.051 & 0.059 & 0.052 & 0.051 \\
  $\alpha=0.7$ & 0.041 & 0.055 & 0.052 & 0.061 &  &  & 0.043 & 0.045 & 0.051 & 0.051 \\
  \cmidrule{2-11}
\end{tabular}%

\vspace{0.4cm}

\begin{tabular}{lcccc}
  $M=12$ & $\approx 2\ln{T}$ &  &  &  \\
  \toprule
   & \multicolumn{4}{c}{HeteF} \\
  \cmidrule{2-5}
   & $\beta_2=0.05$ & $\beta_2=0.1$ & $\beta_2=0.2$ & $\beta_2=0.3$ \\
  \cmidrule{2-5}
  $\alpha=0.2$ & 0.045 & 0.046 & 0.041 & 0.044 \\
  $\alpha=0.3$ & 0.043 & 0.049 & 0.040 & 0.075 \\
  $\alpha=0.4$ & 0.054 & 0.058 & 0.059 & 0.062 \\
  $\alpha=0.5$ & 0.049 & 0.037 & 0.045 & 0.070 \\
  $\alpha=0.6$ & 0.045 & 0.054 & 0.050 & 0.050 \\
  $\alpha=0.7$ & 0.046 & 0.054 & 0.046 & 0.059 \\
  \cmidrule{2-5}
\end{tabular}%

\vspace{0.4cm}

\begin{tabular}{lcccccccccc}
  $M=36$ & $\approx \sqrt{T}$ &  &  &  &  &  &  &  &  &  \\
  \toprule
   & \multicolumn{4}{c}{Hete} &  &  & \multicolumn{4}{c}{Hong} \\
  \cmidrule{2-5}\cmidrule{8-11}
   & $\beta_2=0.05$ & $\beta_2=0.1$ & $\beta_2=0.2$ & $\beta_2=0.3$ &  &  & $\beta_2=0.05$ & $\beta_2=0.1$ & $\beta_2=0.2$ & $\beta_2=0.3$ \\
  \cmidrule{2-5}\cmidrule{8-11}
  $\alpha=0.2$ & 0.068 & 0.048 & 0.041 & 0.046 &  &  & 0.050 & 0.050 & 0.040 & 0.049 \\
  $\alpha=0.3$ & 0.051 & 0.062 & 0.052 & 0.056 &  &  & 0.049 & 0.046 & 0.039 & 0.033 \\
  $\alpha=0.4$ & 0.058 & 0.054 & 0.044 & 0.060 &  &  & 0.048 & 0.051 & 0.059 & 0.050 \\
  $\alpha=0.5$ & 0.048 & 0.034 & 0.043 & 0.056 &  &  & 0.050 & 0.041 & 0.059 & 0.049 \\
  $\alpha=0.6$ & 0.045 & 0.072 & 0.060 & 0.062 &  &  & 0.035 & 0.048 & 0.046 & 0.050 \\
  $\alpha=0.7$ & 0.049 & 0.049 & 0.051 & 0.046 &  &  & 0.039 & 0.046 & 0.043 & 0.049 \\
  \cmidrule{2-11}
\end{tabular}%

\vspace{0.4cm}

\begin{tabular}{lcccc}
  $M=36$ & $\approx \sqrt{T}$ &  &  &  \\
  \toprule
   & \multicolumn{4}{c}{HeteF} \\
  \cmidrule{2-5}
   & $\beta_2=0.05$ & $\beta_2=0.1$ & $\beta_2=0.2$ & $\beta_2=0.3$ \\
  \cmidrule{2-5}
  $\alpha=0.2$ & 0.068 & 0.048 & 0.037 & 0.049 \\
  $\alpha=0.3$ & 0.050 & 0.059 & 0.051 & 0.051 \\
  $\alpha=0.4$ & 0.054 & 0.055 & 0.040 & 0.059 \\
  $\alpha=0.5$ & 0.048 & 0.035 & 0.044 & 0.054 \\
  $\alpha=0.6$ & 0.041 & 0.064 & 0.060 & 0.064 \\
  $\alpha=0.7$ & 0.050 & 0.043 & 0.051 & 0.044 \\
  \cmidrule{2-5}
\end{tabular}%

\vspace{0.4cm}

\begin{tabular}{lcccccccccc}
  \toprule
   & \multicolumn{4}{c}{Wald Single} &  &  & \multicolumn{4}{c}{Wald Double} \\
  \cmidrule{2-5}\cmidrule{8-11}
   & $\beta_2=0.05$ & $\beta_2=0.1$ & $\beta_2=0.2$ & $\beta_2=0.3$ &  &  & $\beta_2=0.05$ & $\beta_2=0.1$ & $\beta_2=0.2$ & $\beta_2=0.3$ \\
  \cmidrule{2-5}\cmidrule{8-11}
  $\alpha=0.2$ & 0.054 & 0.046 & 0.046 & 0.048 &  &  & 0.045 & 0.052 & 0.041 & 0.044 \\
  $\alpha=0.3$ & 0.049 & 0.050 & 0.050 & 0.052 &  &  & 0.058 & 0.040 & 0.048 & 0.045 \\
  $\alpha=0.4$ & 0.048 & 0.046 & 0.044 & 0.058 &  &  & 0.048 & 0.048 & 0.041 & 0.060 \\
  $\alpha=0.5$ & 0.059 & 0.046 & 0.048 & 0.045 &  &  & 0.054 & 0.048 & 0.054 & 0.048 \\
  $\alpha=0.6$ & 0.049 & 0.044 & 0.044 & 0.054 &  &  & 0.060 & 0.052 & 0.051 & 0.052 \\
  $\alpha=0.7$ & 0.074 & 0.064 & 0.046 & 0.050 &  &  & 0.058 & 0.050 & 0.059 & 0.065 \\
  \cmidrule{2-11}
\end{tabular}%
\end{threeparttable}
}
\end{table}


\newpage
\clearpage
\newpage
\clearpage

\subsubsection{DGPs under the Alternatives}
We consider three DGP families where weak exogeneity does not hold such that past $Z$ impacts present $X$, meaning that, in macroeconometric terms, the process $X$ cannot be considered as structural shocks. The three specifications for the bivariate process $\{X_{t},Z_{t}\}$ are:
\begin{enumerate}[a)]
\item \textsc{DGP 1b (Linear-in-Mean):}
\begin{align*}
X_{t}=\gamma_1 Z_{t-1}+\epsilon_x, \quad 
Z_{t}=0.4 Z_{t-1}-\beta_3 X_{t-1}+\epsilon_z, \quad (\epsilon_x,\epsilon_z) \sim \text{i.i.d.}(0,I_2)
\end{align*}
\item \textsc{DGP 2b (Nonlinear-in-Mean):}
\begin{align*}
X_{t}=\gamma_1 Z_{t-1}^2/8+\epsilon_x, \quad 
Z_{t}=0.4 Z_{t-1}-\beta_3 X_{t-1}/2+\epsilon_z, \quad (\epsilon_x,\epsilon_z) \sim \text{i.i.d.}(0,I_2)
\end{align*}
\item \textsc{DGP 4b (GARCH-type Causality):}
\begin{align*}
& X_{t}=\epsilon_x, \quad Z_t=\epsilon_z, \quad (\epsilon_x,\epsilon_z) \sim \mathcal{N}(0,\Sigma_t), \quad \Sigma_t = (\sigma_{x,t}^2, 0; 0, \sigma_{z,t}^2)  \\
&\sigma_{x,t}^2= 0.5+\gamma_2 \sigma_{x,t-1}^2+0.1 Z_{t-1}^2, \quad \sigma_{z,t}= 0.5+\beta_3 \sigma_{z,t-1}^2+0.1 Z_{t-1}^2
\end{align*}
\end{enumerate}
Parameter values span: $\gamma_1\in\{-0.4, -0.2, -0.05, 0.05, 0.2, 0.4\}$, \\
$\gamma_2\in\{0.05, 0.1, 0.2, 0.4, 0.6, 0.8\}$, 
$\beta_3\in\{0, 0.3, 0.8\}$.

The smoothing parameter $M$ takes value $\{12, 24, 36\}$, which corresponds to lags up to 1 year, 2 years and 3 years, when the data is observed at monthly frequency. For each design, $1,000$ Monte Carlo simulations are run, with $T=1000$. All results report rejection rates at the 5\% nominal significance level.

\clearpage
\begin{figure}[h!]
\centering
\begin{subfigure}{0.75\textwidth}
 \centering 
  \includegraphics[width=0.8\linewidth]{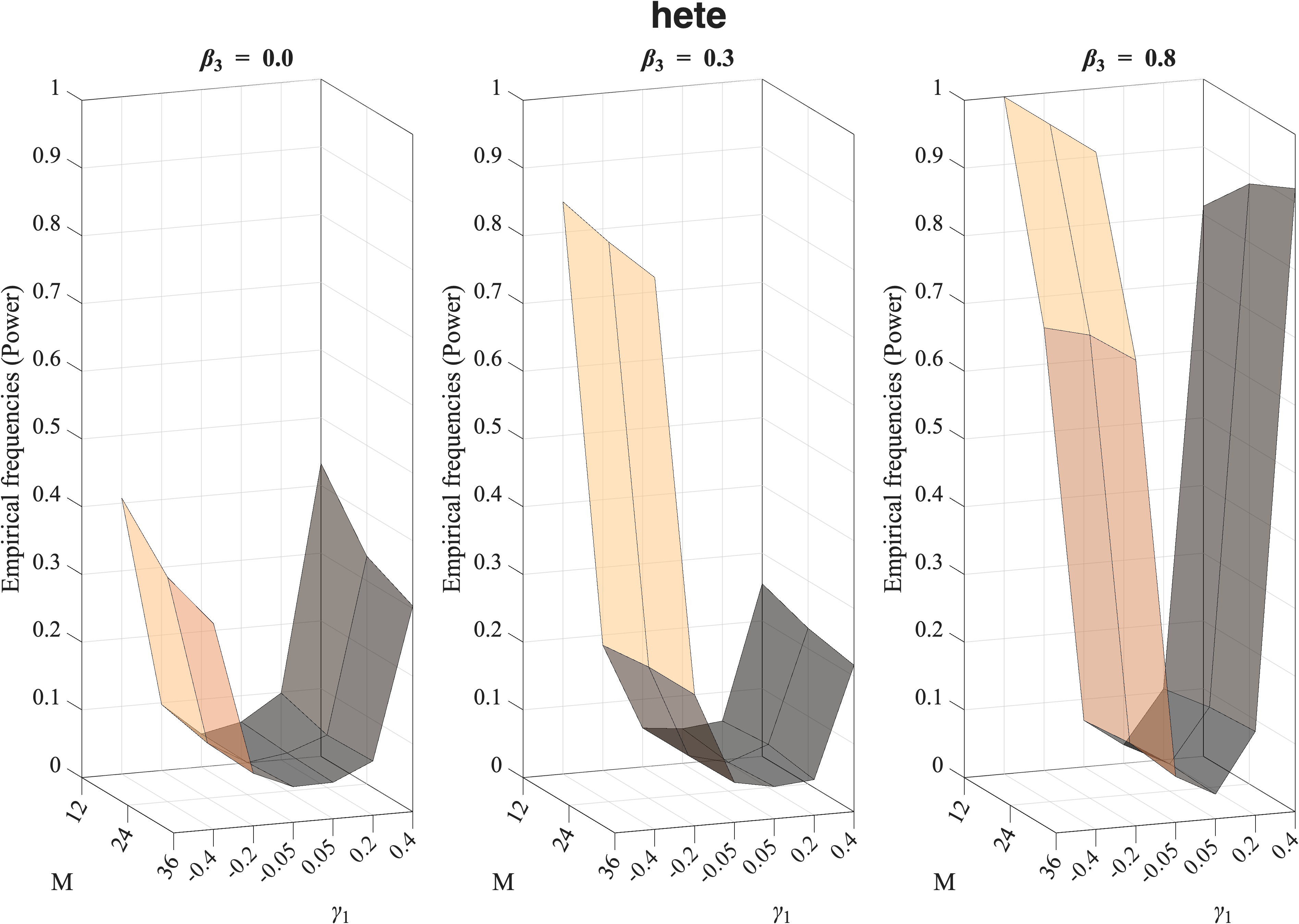}
  \caption{\footnotesize{\textsc{DGP 1b}, with $\beta=\{0,0.3,0.8\}$}}
\end{subfigure}
\par\medskip
\begin{subfigure}{0.75\textwidth}
 \centering
  \includegraphics[width=0.8\linewidth]{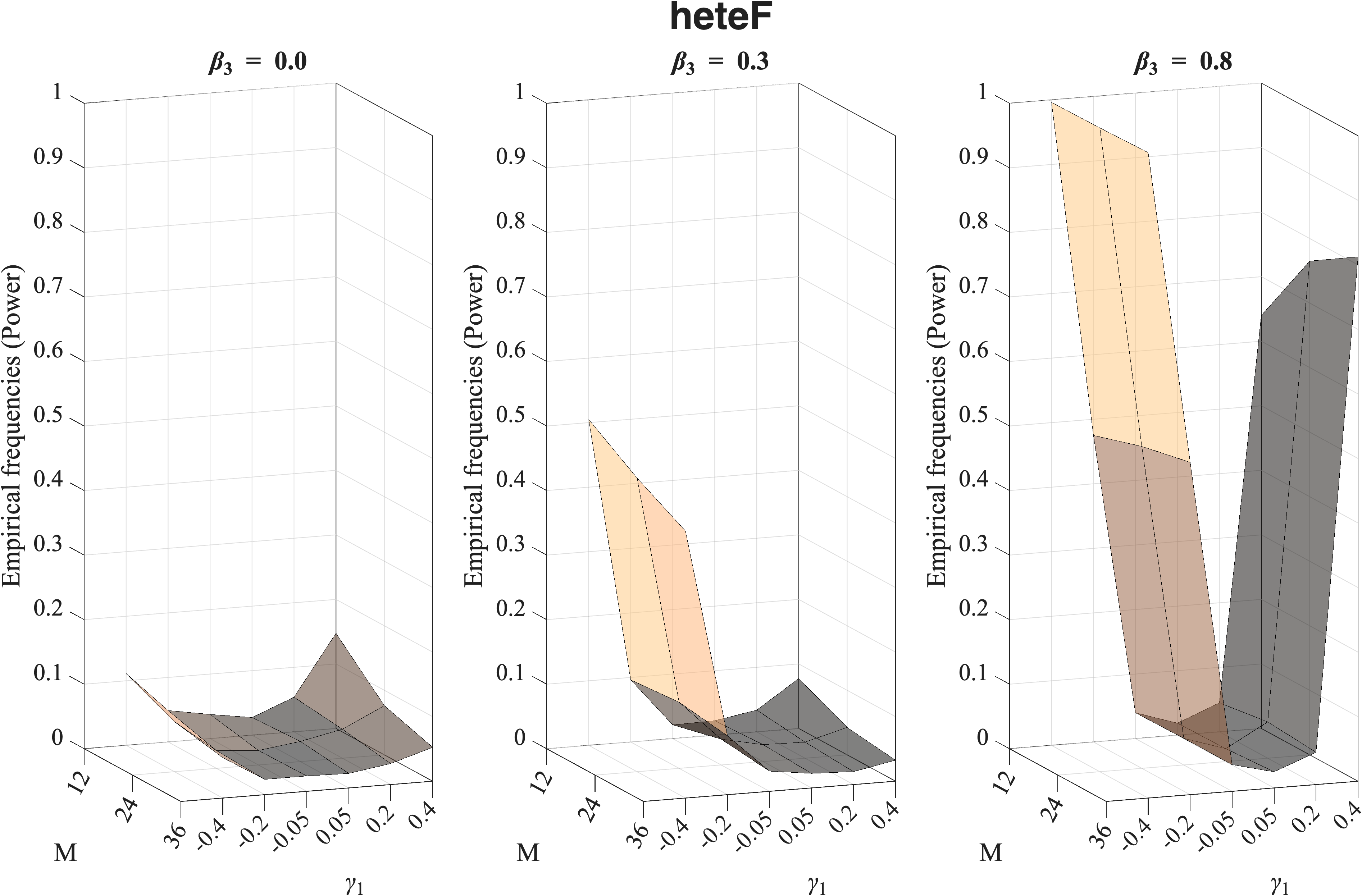}
  \caption{\footnotesize{\textsc{DGP 1b}, with $\beta=\{0,0.3,0.8\}$}}
\end{subfigure}
\par\medskip
\caption[Empirical rates]{
				Power curves of the proposed tests:  \footnotesize{These figures present the rejection rates of the testing procedure associated to three proposed statistics (\textit{Hete, HeteF}), under the alternatives (empirical power); sample size: $T=1000$; $1000$ iterations; the weighting function is the quadratic spectral kernel; the smoothing parameter: $M=\{12,24,36\}$; nominal significance level is 5\%.
			}}
\label{power_LiM_corrected}
\end{figure}
\clearpage

\begin{figure}[h!]
\centering
\begin{subfigure}{0.65\textwidth}
 \centering 
  \includegraphics[width=0.8\linewidth]{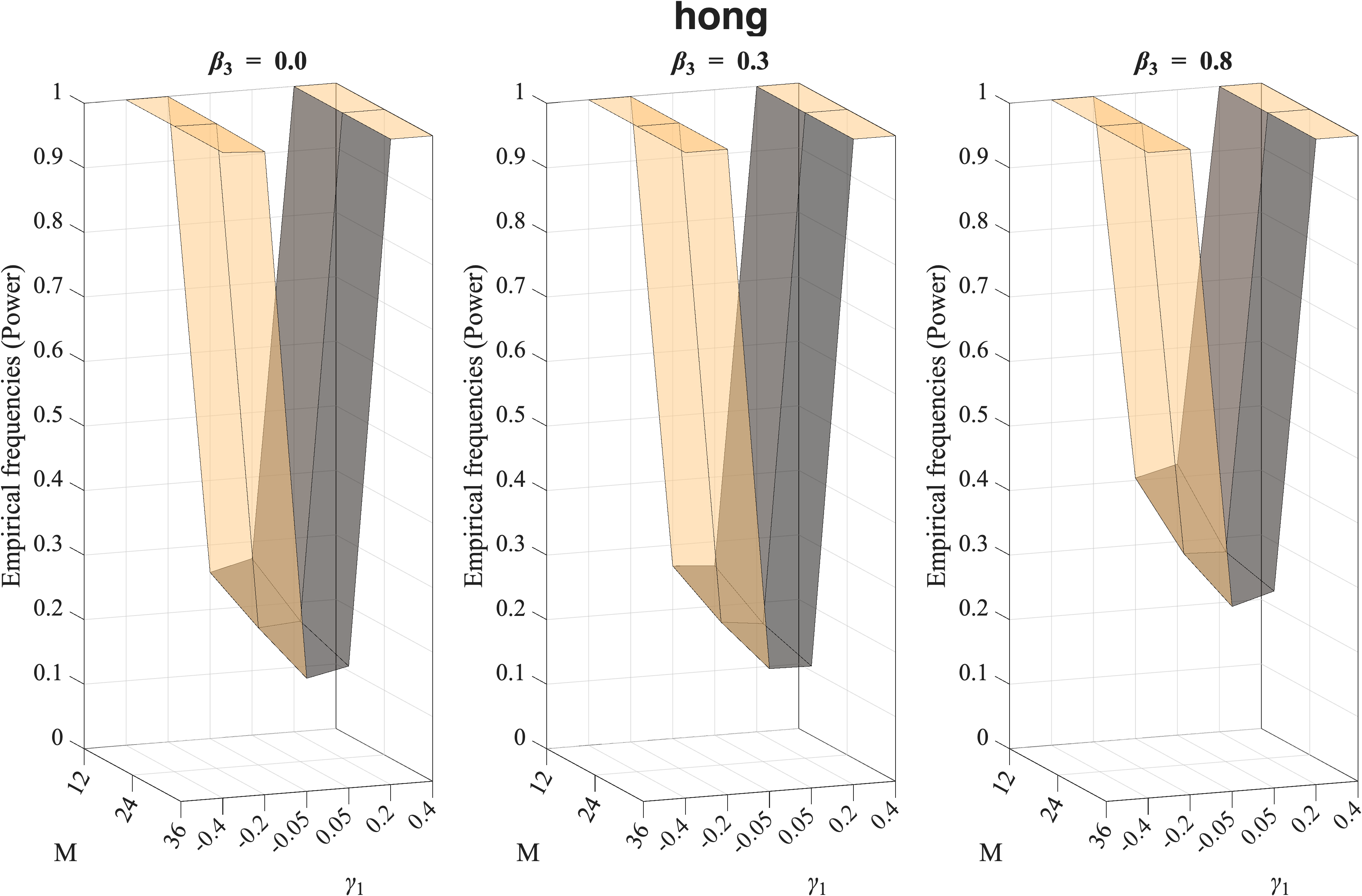}
  \caption{\footnotesize{\textsc{DGP 1b}, with $\beta=\{0,0.3,0.8\}$}}
\end{subfigure}
\par\medskip
\begin{subfigure}{0.65\textwidth}
 \centering
  \includegraphics[width=0.8\linewidth]{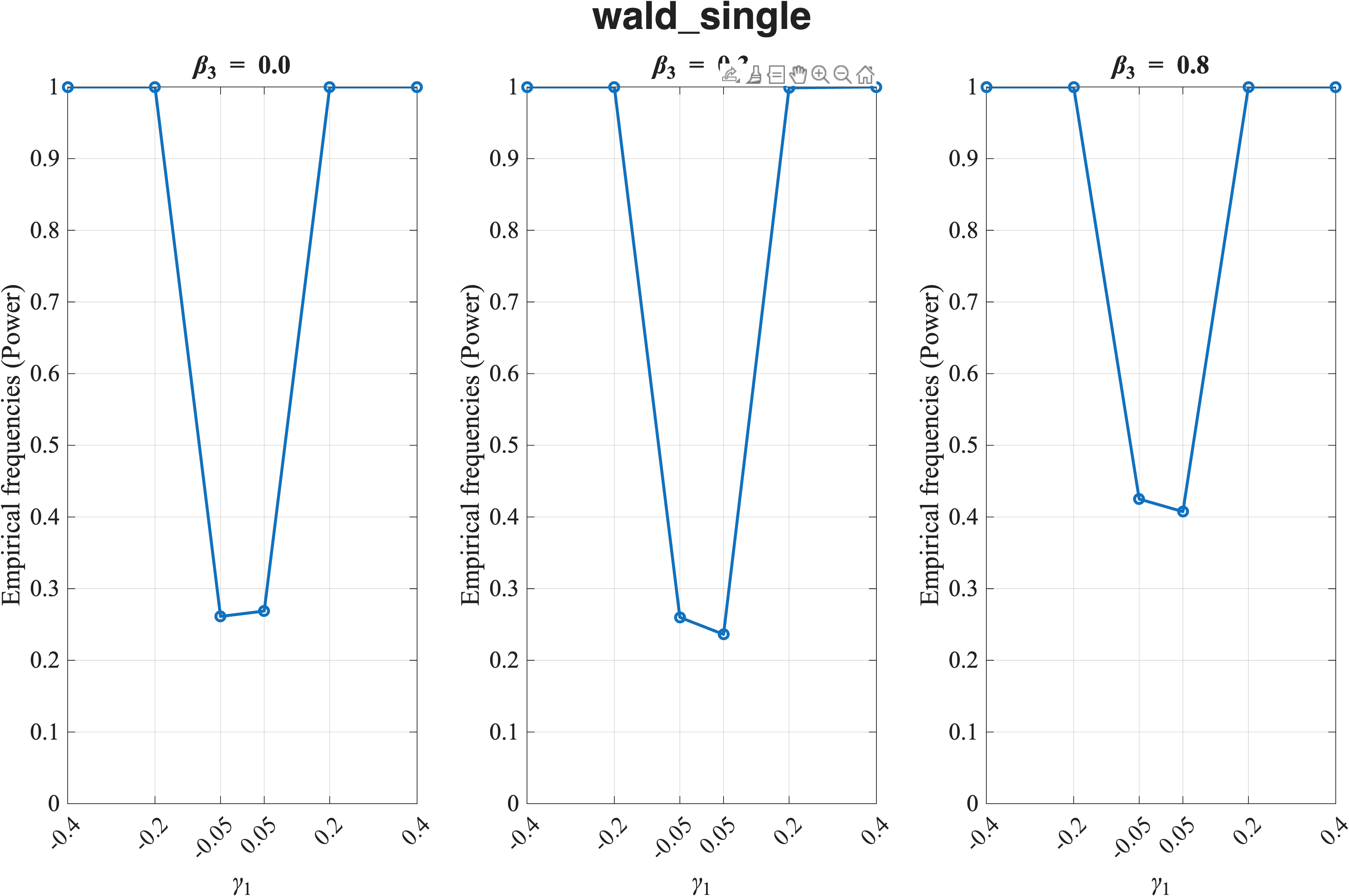}
  \caption{\footnotesize{\textsc{DGP 1b}, with $\beta=\{0,0.3,0.8\}$}}
\end{subfigure}
\par\medskip
\begin{subfigure}{0.65\textwidth}
 \centering
  \includegraphics[width=0.8\linewidth]{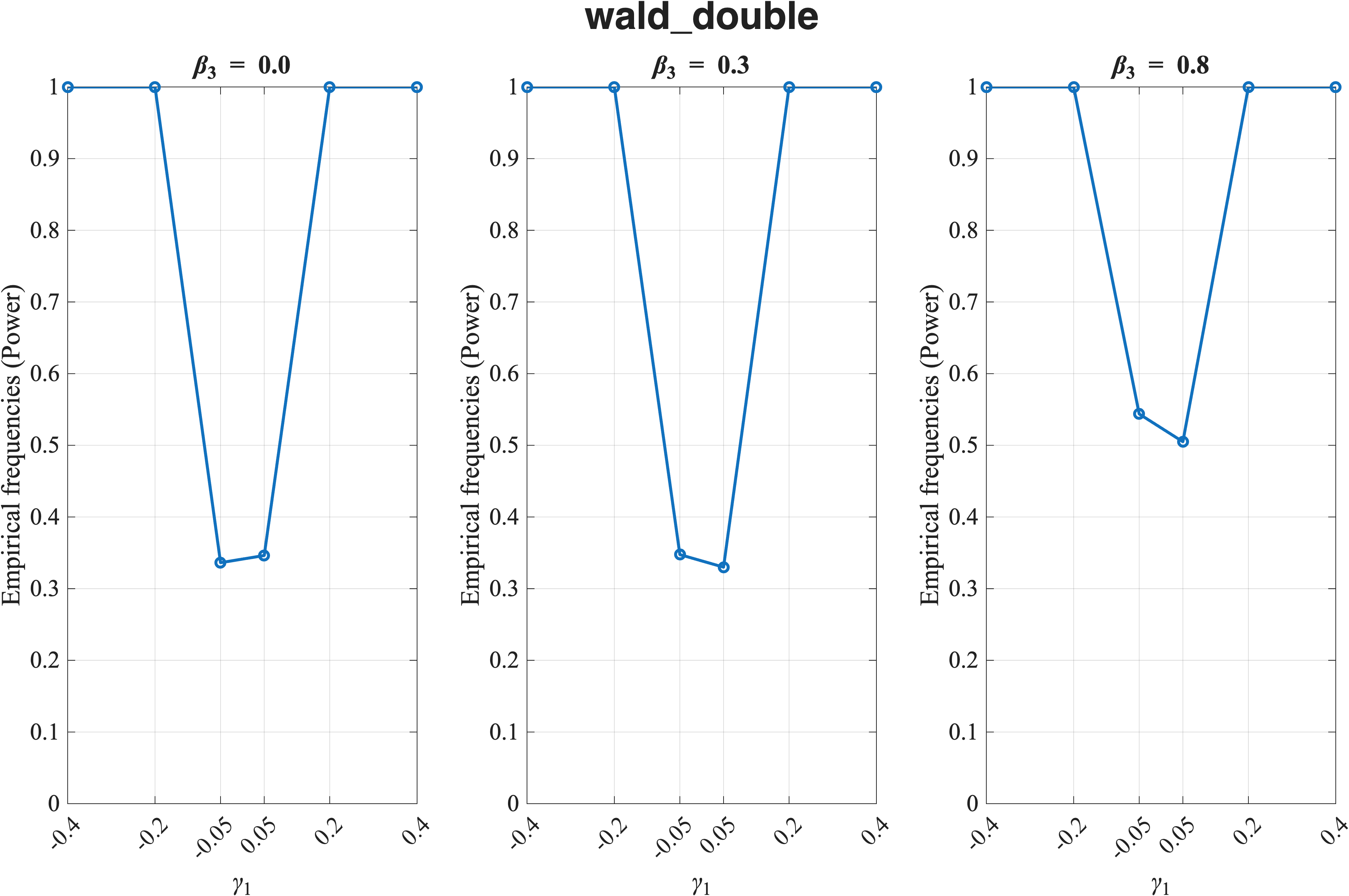}
  \caption{\footnotesize{\textsc{DGP 1b}, with $\beta=\{0,0.3,0.8\}$}}
\end{subfigure}
\caption[Empirical rates]{
				Power curves of the benchmark tests:  \footnotesize{These figures present the rejection rates of the testing procedure associated to three benchmark statistics (\textit{Hong, Wald Single, Wald Double}), under the alternatives (empirical power); sample size, $T=1000$; $1000$ iterations; the weighting function is the quadratic spectral kernel;  the smoothing parameter range is: $M=\{12,24,36\}$; nominal significance level is 5\%.
			}}
\label{power_LiM_benchmark}
\end{figure}
\clearpage
\begin{figure}[h!]
\centering
\begin{subfigure}{0.75\textwidth}
 \centering
  \includegraphics[width=0.8\linewidth]{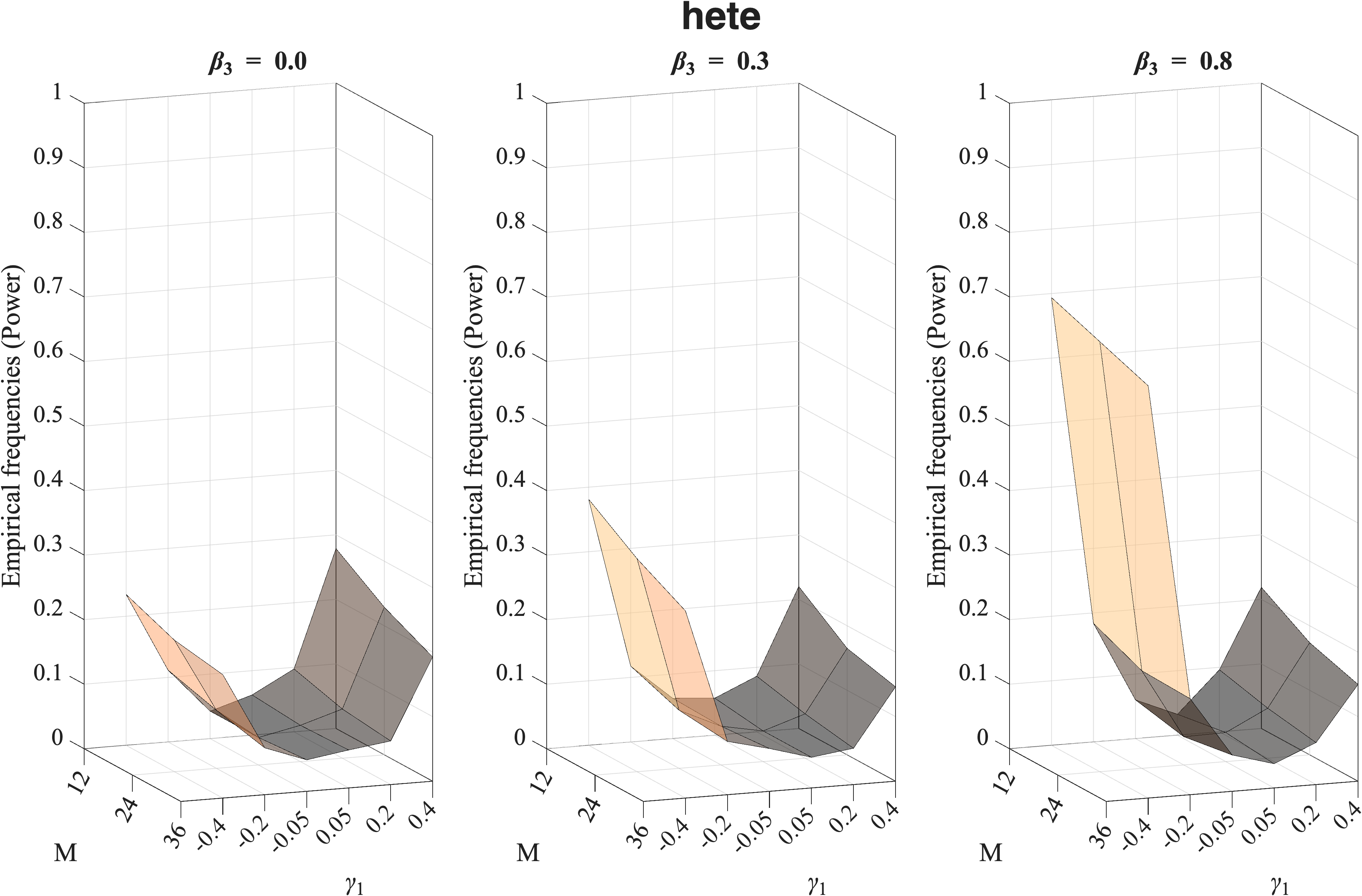}
  \caption{\footnotesize{\textsc{DGP 2b}, with $\beta=\{0,0.3,0.8\}$}}
\end{subfigure}
\par\medskip
\begin{subfigure}{0.75\textwidth}
 \centering
  \includegraphics[width=0.8\linewidth]{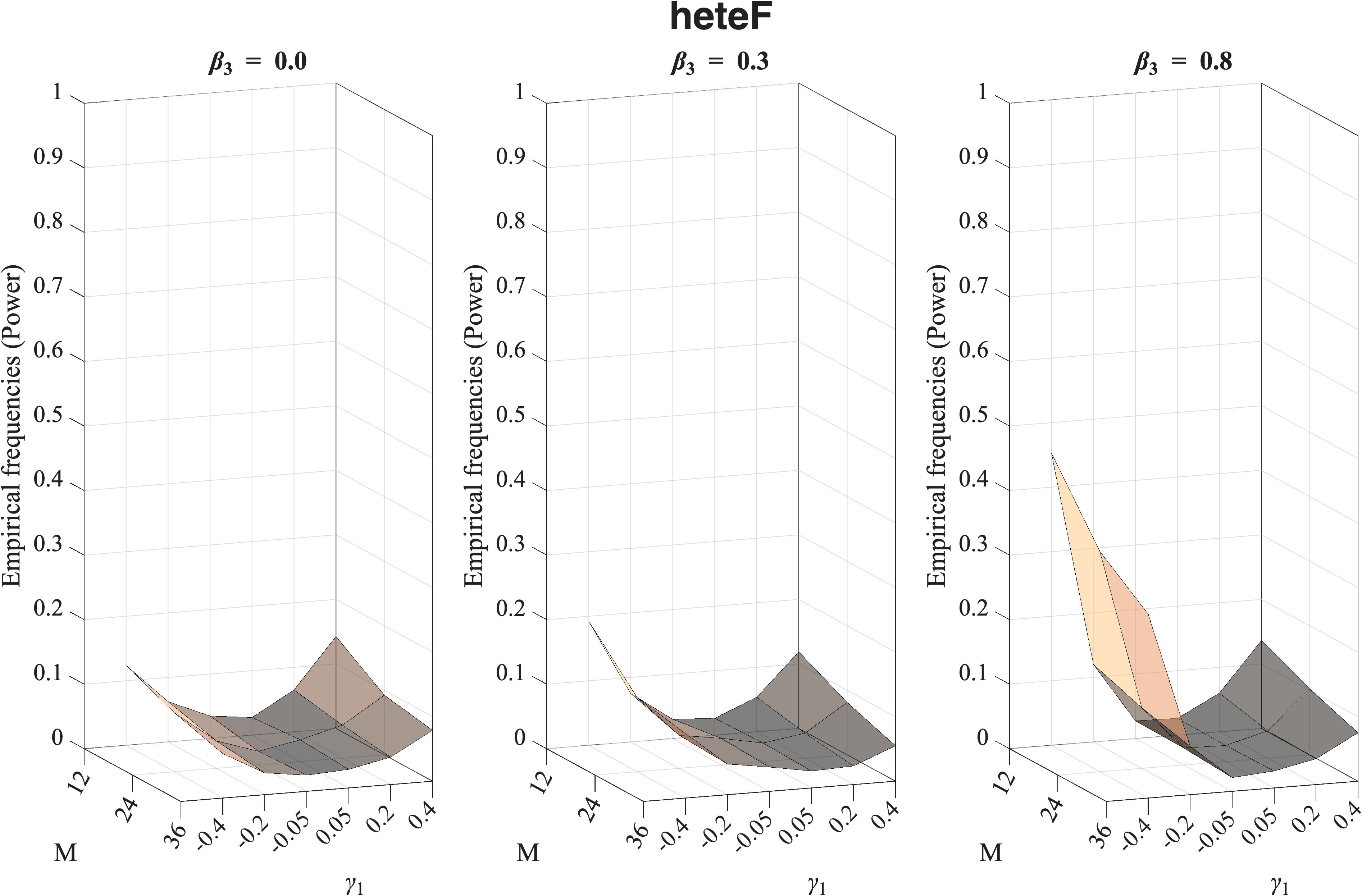}
  \caption{\footnotesize{\textsc{DGP 2b}, with $\beta=\{0,0.3,0.8\}$}}
\end{subfigure}
\caption[Empirical rates]{
				Power curves of the proposed tests:  \footnotesize{These figures present the rejection rates of the testing procedure associated to the proposed statistics (\textit{Hete, HeteF}), under the alternatives (empirical power); sample size: $T=1000$; $1000$ iterations; the weighting function is the quadratic spectral kernel; the smoothing parameter: $M=\{12,24,36\}$; nominal significance level is 5\%.
			}}
\label{power_LiS_corrected}
\end{figure}
\clearpage

\begin{figure}[h!]
\centering
\begin{subfigure}{0.65\textwidth}
 \centering 
  \includegraphics[width=0.8\linewidth]{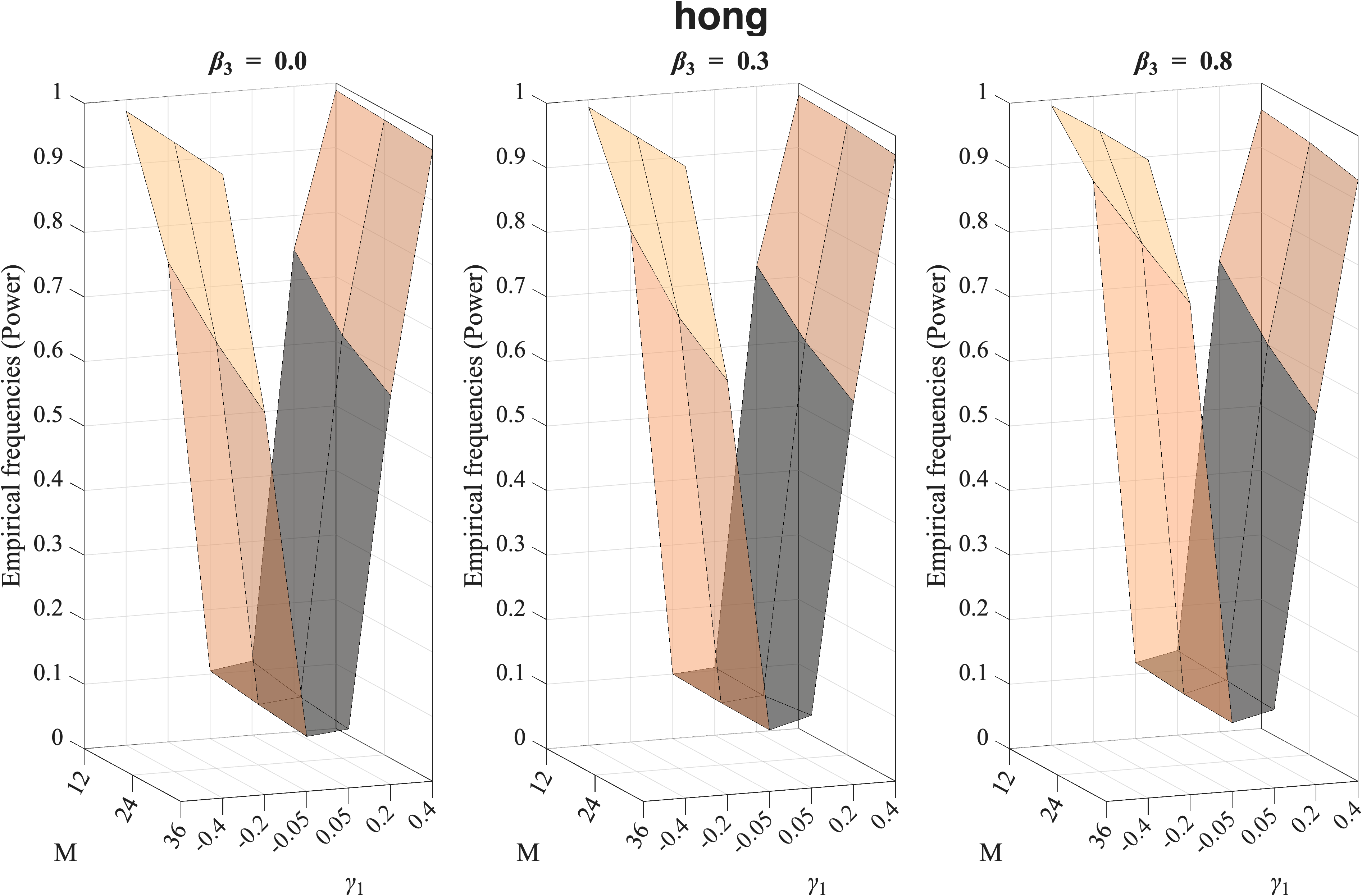}
  \caption{\footnotesize{\textsc{DGP 2b}, with $\beta=\{0,0.3,0.8\}$}}
\end{subfigure}
\par\medskip
\begin{subfigure}{0.65\textwidth}
 \centering
  \includegraphics[width=0.8\linewidth]{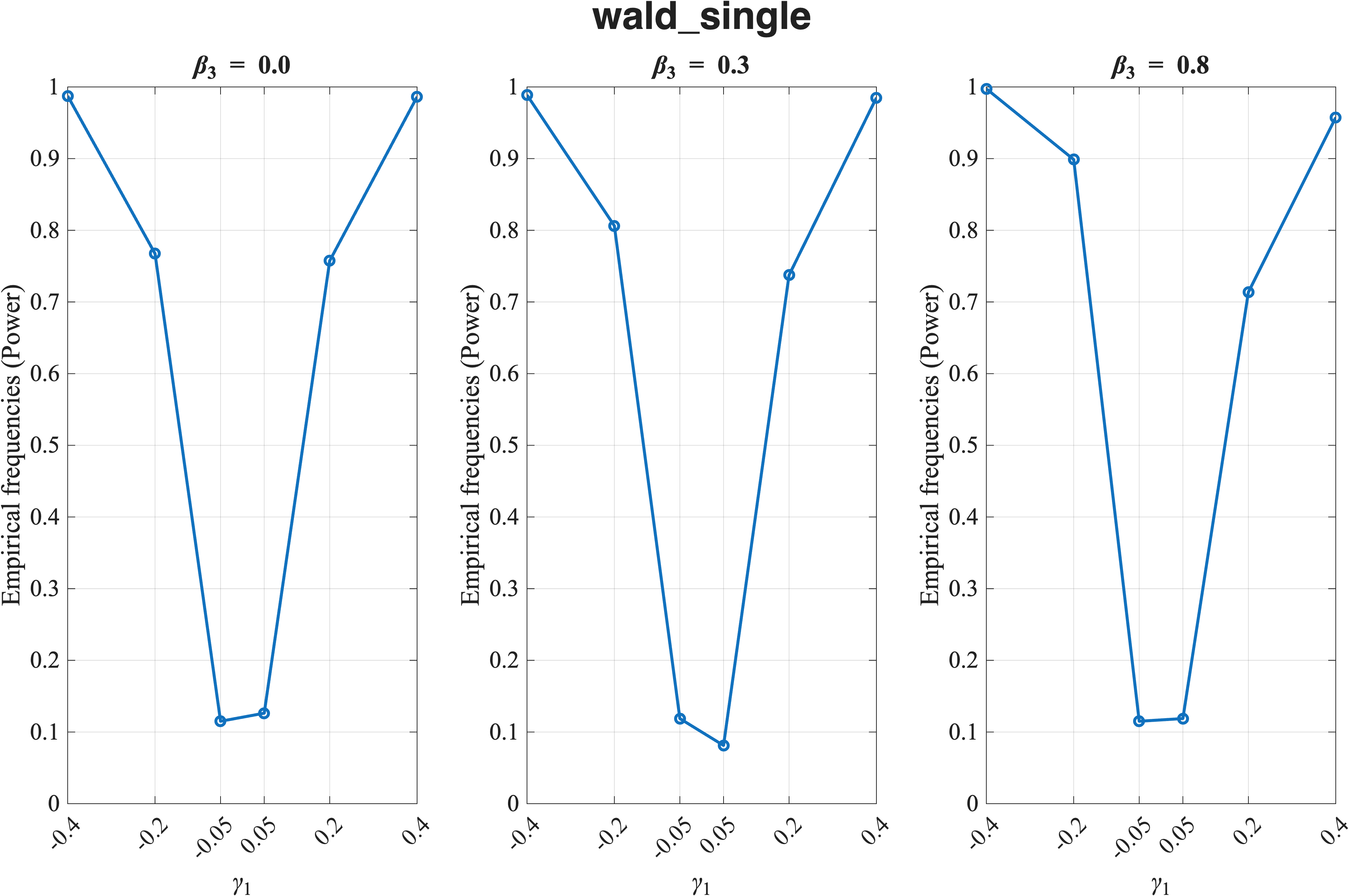}
  \caption{\footnotesize{\textsc{DGP 2b}, with $\beta=\{0,0.3,0.8\}$}}
\end{subfigure}
\par\medskip
\begin{subfigure}{0.65\textwidth}
 \centering
  \includegraphics[width=0.8\linewidth]{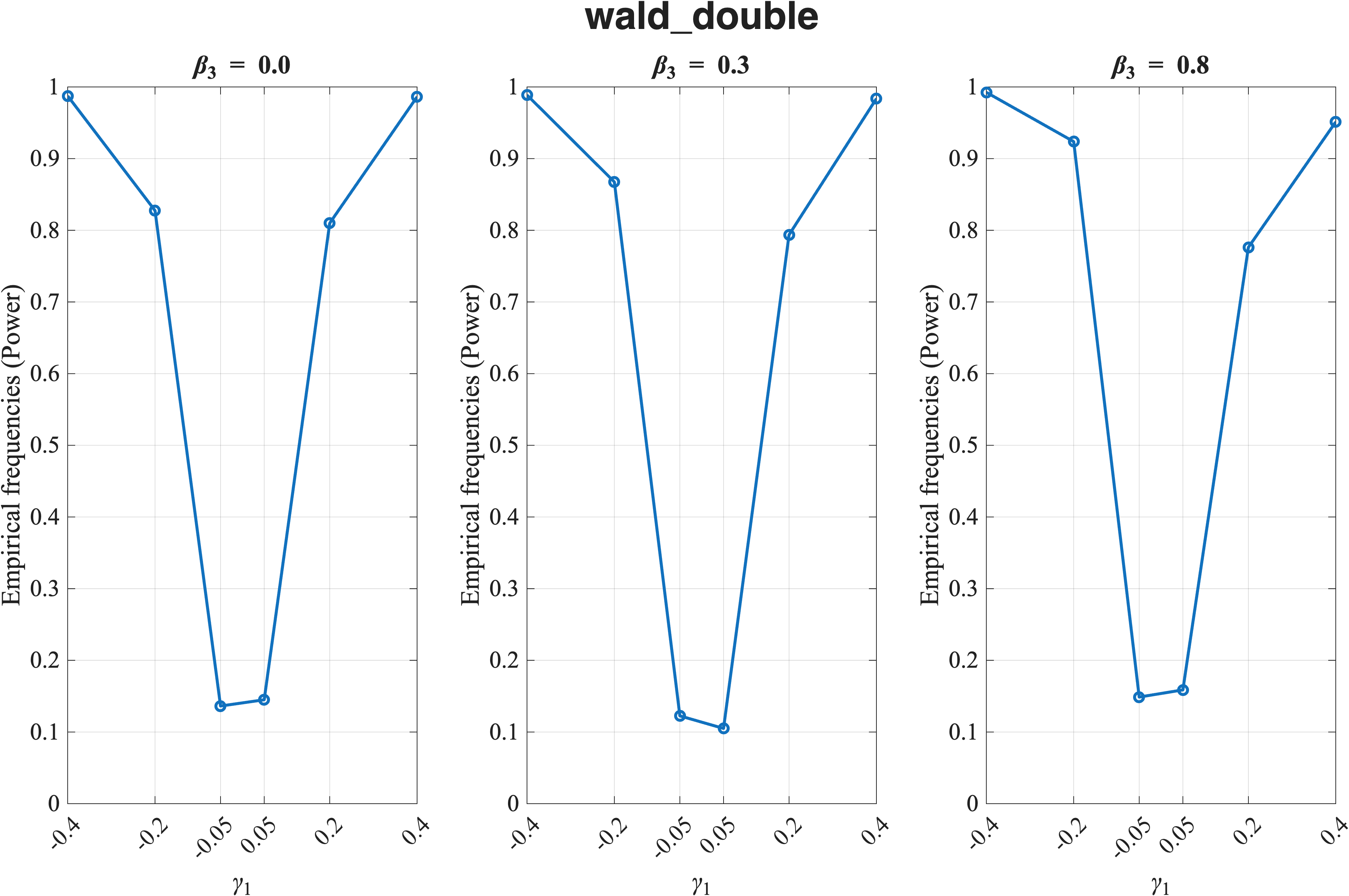}
  \caption{\footnotesize{\textsc{DGP 2b}, with $\beta=\{0,0.3,0.8\}$}}
\end{subfigure}
\caption[Empirical rates]{
				Power curves of the benchmark tests:  \footnotesize{These figures present the rejection rates of the testing procedure associated to three benchmark statistics (\textit{Hong, Wald Single, Wald Double}), under the alternatives (empirical power); sample size: $T=1000$; $1000$ iterations; the weighting function is the quadratic spectral kernel; the smoothing parameter: $M=\{12,24,36\}$; nominal significance level is 5\%.
			}}
\label{power_LiS_benchmark}
\end{figure}

\clearpage
\begin{figure}[h!]
\centering
\begin{subfigure}{0.75\textwidth}
 \centering
  \includegraphics[width=0.8\linewidth]{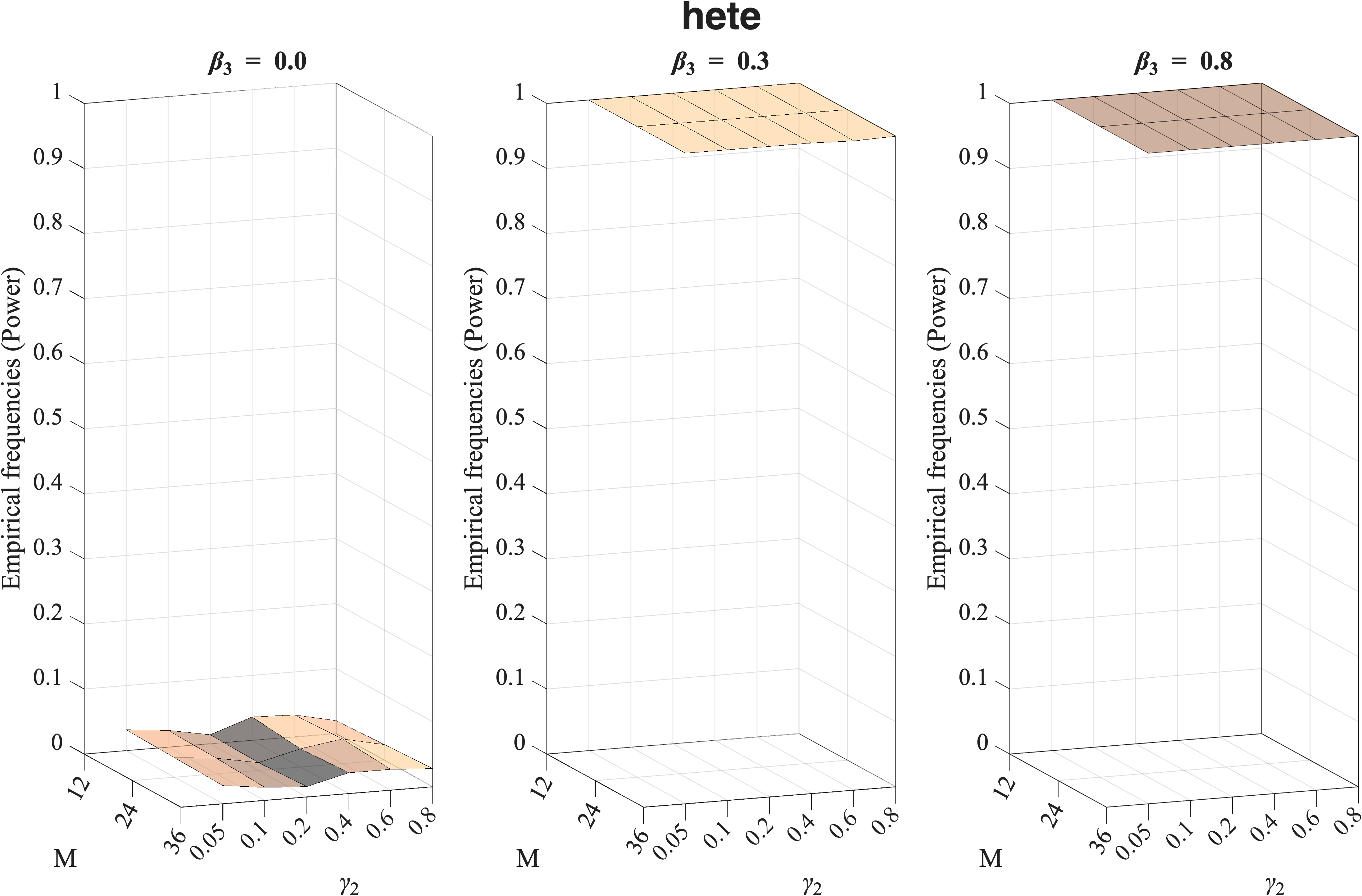}
  \caption{\footnotesize{\textsc{DGP 4b}, with $\beta=\{0,0.3,0.8\}$}}
\end{subfigure}
\par\medskip
\begin{subfigure}{0.75\textwidth}
 \centering
  \includegraphics[width=0.8\linewidth]{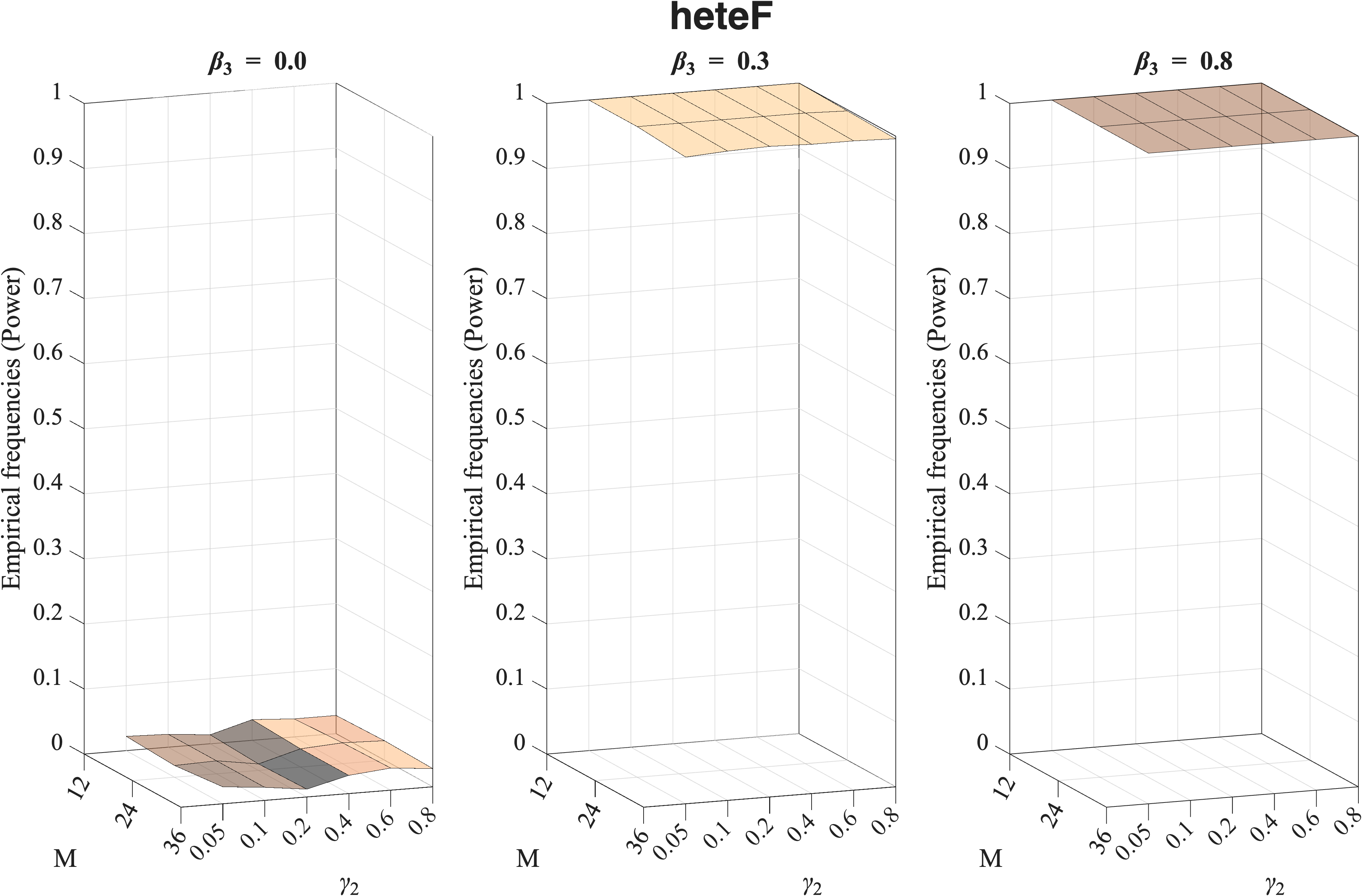}
  \caption{\footnotesize{\textsc{DGP 4b}, with $\beta=\{0,0.3,0.8\}$}}
\end{subfigure}
\caption[Empirical rates]{
				Power curves of the proposed tests:  \footnotesize{These figures present the rejection rates of the testing procedure associated to the proposed statistics (\textit{Hete, HeteF}), under the alternatives (empirical power); sample size: $T=1000$; $1000$ iterations; the weighting function is the quadratic spectral kernel; the smoothing parameter: $M=\{12,24,36\}$; nominal significance level is 5\%.
			}}
\label{power_ViS_corrected}
\end{figure}
\clearpage

\begin{figure}[h!]
\centering
\begin{subfigure}{0.65\textwidth}
 \centering 
  \includegraphics[width=0.8\linewidth]{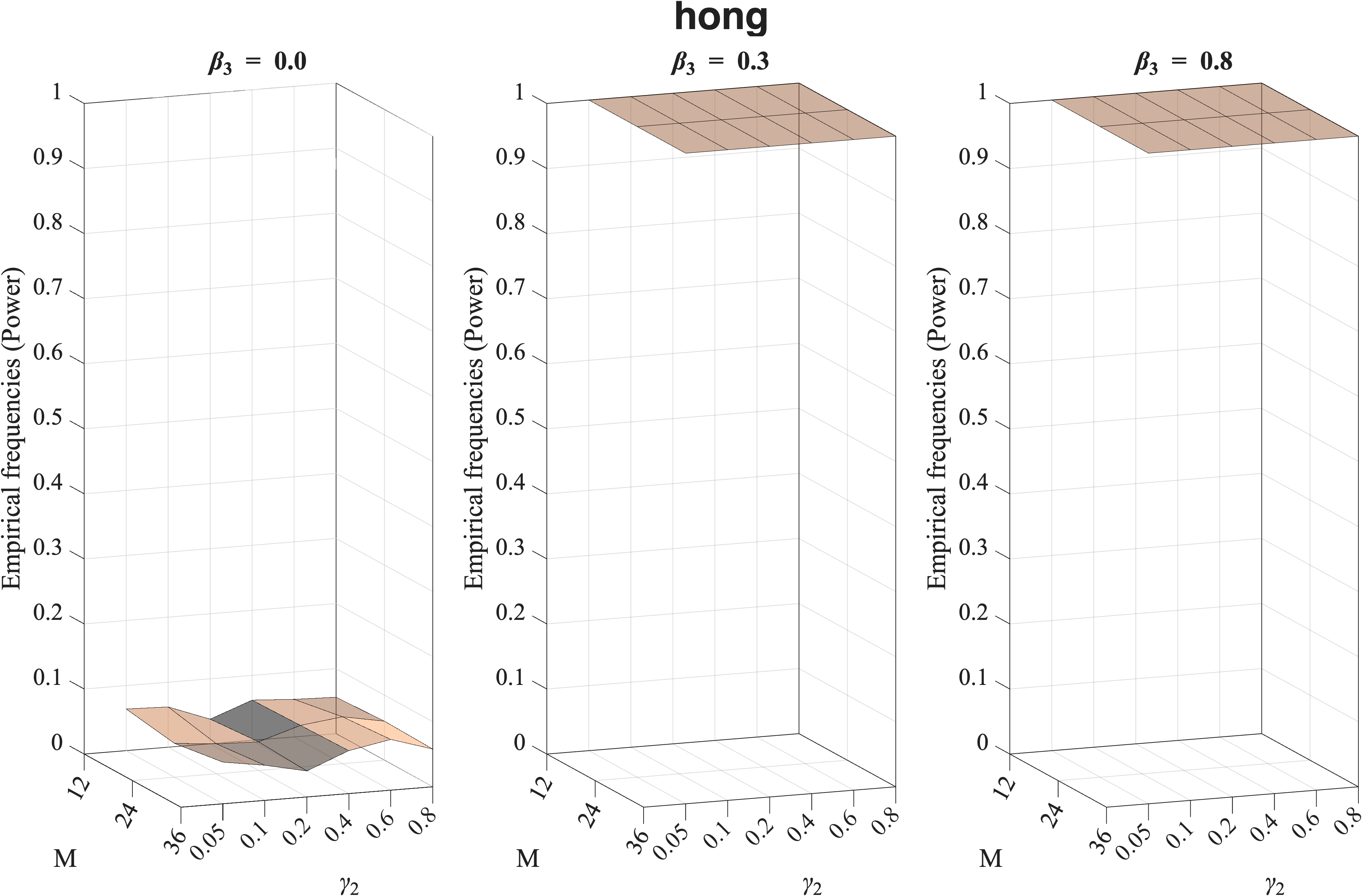}
  \caption{\footnotesize{\textsc{DGP 4b}, with $\beta=\{0,0.3,0.8\}$}}
\end{subfigure}
\par\medskip
\begin{subfigure}{0.65\textwidth}
 \centering
  \includegraphics[width=0.8\linewidth]{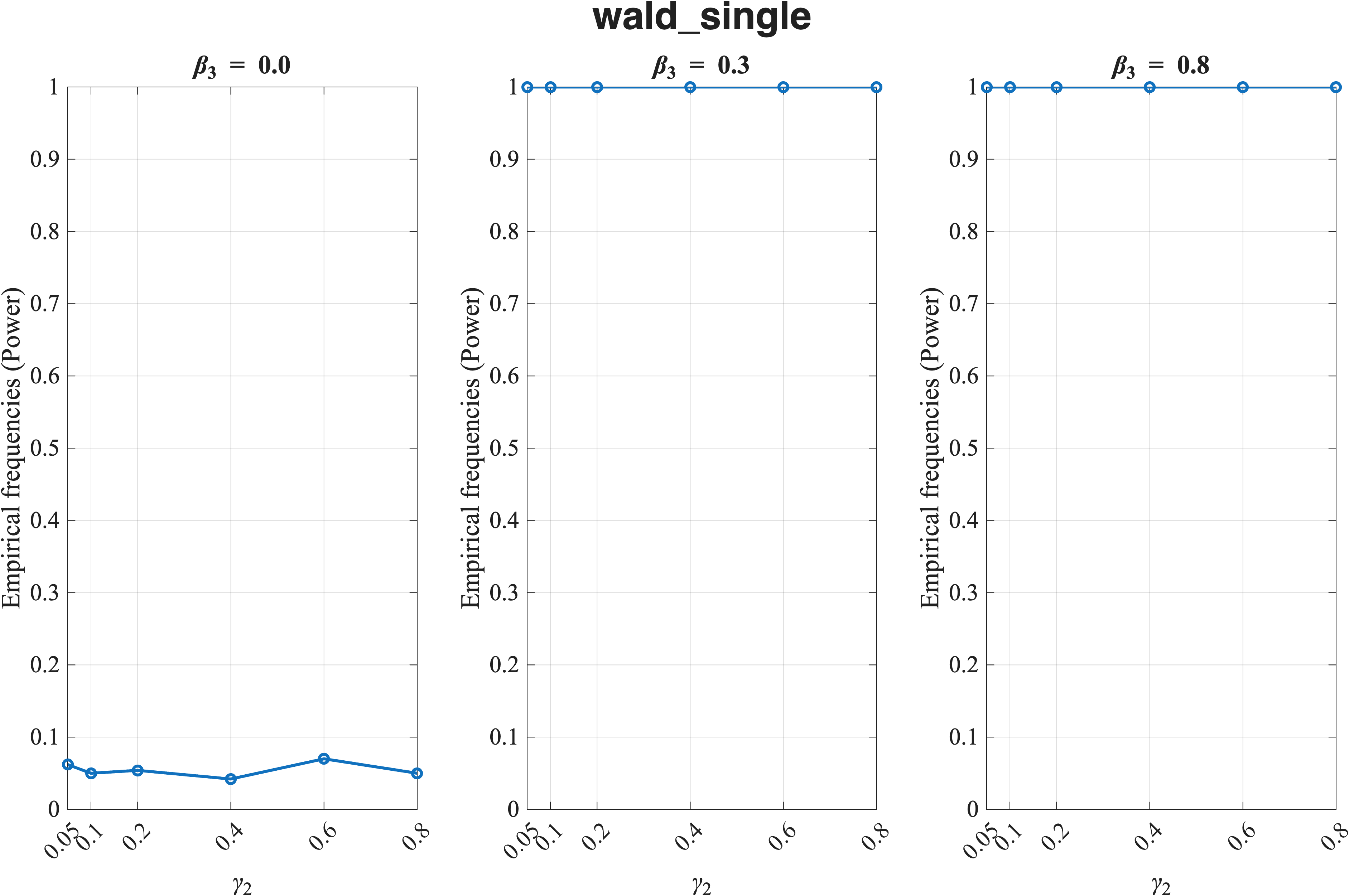}
  \caption{\footnotesize{\textsc{DGP 4b}, with $\beta=\{0,0.3,0.8\}$}}
\end{subfigure}
\par\medskip
\begin{subfigure}{0.65\textwidth}
 \centering
  \includegraphics[width=0.8\linewidth]{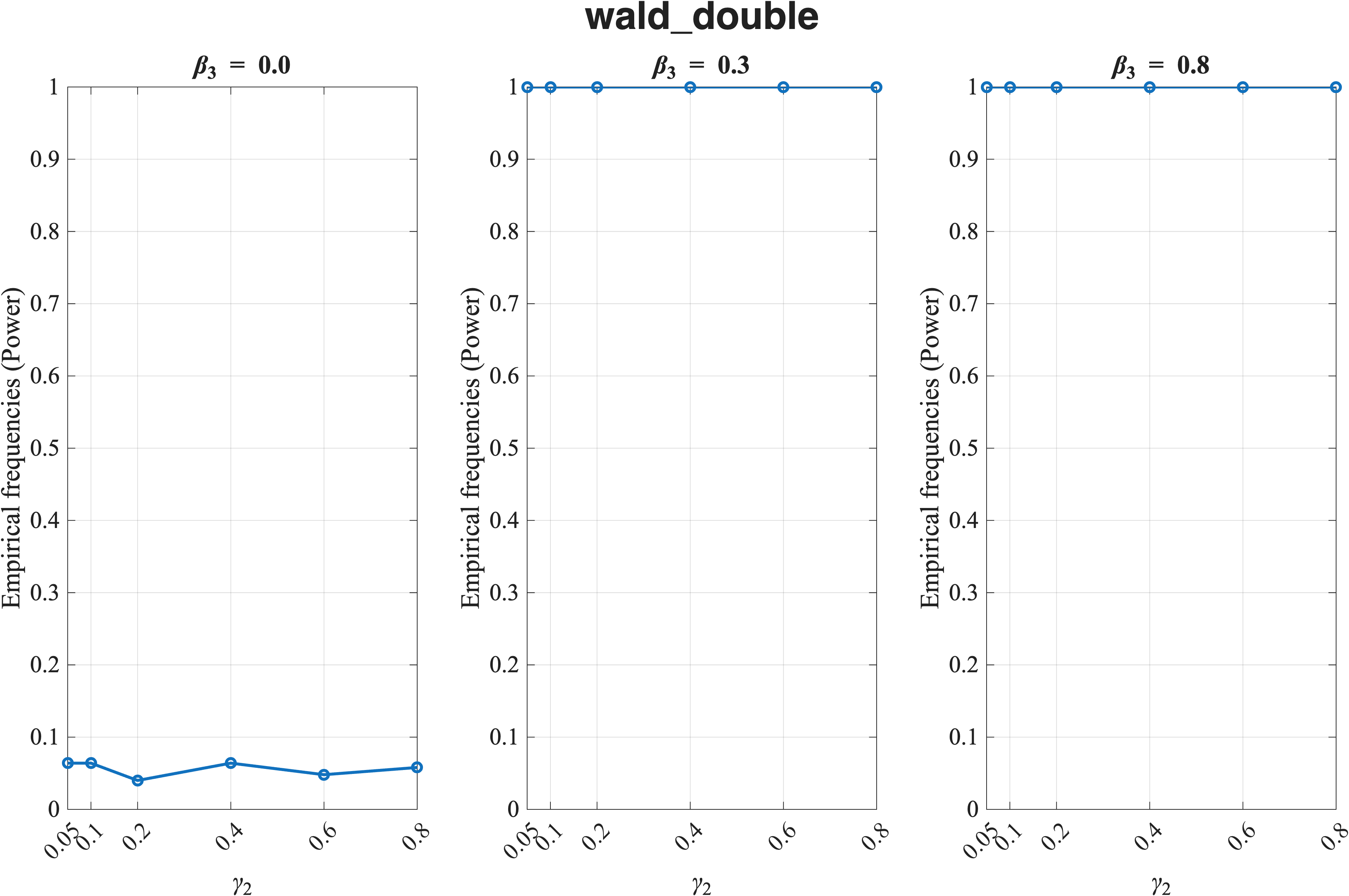}
  \caption{\footnotesize{\textsc{DGP 4b}, with $\beta=\{0,0.3,0.8\}$}}
\end{subfigure}
\caption[Empirical rates]{
				Power curves of the benchmark tests:  \footnotesize{These figures present the rejection rates of the testing procedure associated to three benchmark statistics (\textit{Hong, Wald Single, Wald Double}), under the alternatives (empirical power); sample size: $T=1000$; $1000$ iterations; the weighting function is the quadratic spectral kernel; the smoothing parameter: $M=\{12,24,36\}$; nominal significance level is 5\%.
			}}
\label{power_ViS_benchmark}
\end{figure}

\end{document}